\newcommand{\EFX}{\text{EFX}}
\newtheorem{definition}{Definition}
\newtheorem{proposition}{Proposition}
\newtheorem{corollary}{Corollary}
\newtheorem{theorem}{Theorem}
\newtheorem{claim}{Claim}
\newtheorem{remark}{Remark}
\newtheorem{lemma}{Lemma}
\newtheorem*{example}{Example}
\title{A Complete Landscape of EFX Allocations on Graphs: \\ Goods, Chores and Mixed Manna\thanks{A preliminary version of this paper appeared in the proceedings of IJCAI 2024 \citep{DBLP:conf/ijcai/0027WL024}. The authors are listed alphabetically and Yu Zhou is the corresponding author.}}
\author{
Bo Li$^1$ \hspace{15pt} 
Minming Li$^2$ \hspace{15pt} 
Tianze Wei$^2$ \hspace{15pt} 
Zekai Wu$^1$ \hspace{15pt} 
Yu Zhou$^3$\\ \small
$^1$Department of Computing, The Hong Kong Polytechnic University \\\small
\texttt{comp-bo.li@polyu.edu.hk, zekai-candyore.wu@connect.polyu.hk} \\ \small
$^2$Department of Computer Science, City University of Hong Kong \\ \small
\texttt{minming.li@cityu.edu.hk, t.z.wei-8@my.cityu.edu.hk} \\ \small
$^3$College of Information Sciences and Technology, The Pennsylvania State University \\ \small
\texttt{yqz6094@psu.edu}
}
\date{}
\begin{document}
\maketitle

\begin{abstract}
We study envy-free up to any item (EFX) allocations on simple graphs where vertices and edges represent agents and items respectively. 
An agent (vertex) is only interested in items (edges) that are incident to her and all other items always have zero marginal value to her. 
Christodoulou et al. [EC, 2023] first proposed this setting and studied the case of goods where every item has non-negative marginal values to every agent.
In this work, we significantly generalize this setting and provide a complete set of results by considering the allocation of arbitrary items that can be goods, chores, or mixed manna under doubly monotone valuations with a mild assumption. 
For goods, we complement the results by Christodoulou et al. [EC, 2023] by considering another weaker notion of EFX in the literature and showing that an orientation  -- a special allocation where each edge must be allocated to one of its endpoint agents -- that satisfies the weaker notion always exists and can be computed in polynomial time, contrary to the stronger notion for which an orientation may not exist and determining its existence is NP-complete. 
For chores, we show that an envy-free allocation always exists, and an EFX orientation may not exist but its existence can be determined in polynomial time. 
For mixed manna, we consider the four notions of EFX in the literature. 
We prove that an allocation that satisfies the strongest notion of EFX may not exist and determining its existence is NP-complete, while one that satisfies any of the other three notions always exists and can be computed in polynomial time.     
We also prove that an orientation that satisfies any of the four notions may not exist and determining its existence is NP-complete. 
\end{abstract}

\section{Introduction}
Fair allocation of indivisible items has been broadly studied in the research fields of computer science, economics, and mathematics in the past few decades. 
One of the most compelling and natural fairness notions is \textit{envy-freeness} (EF), which requires that every agent prefers her own bundle to any other agent's bundle. 
Though envy-freeness can always be satisfied for divisible items \citep{DBLP:conf/focs/AzizM16, DBLP:conf/soda/DehghaniFHY18}, it is too demanding for indivisible items. 
An EF allocation does not exist even for the simple instance where there are two agents and one indivisible item with non-zero marginal values to both agents. 

The fact that envy-freeness is hard to satisfy necessitates the study of its relaxations, the most popular one among which is \textit{envy-freeness up to any item} (EFX). 
EFX requires that any envy could be eliminated by virtually removing any item that the envious agent likes from the envied agent's bundle or any item that the envious agent dislikes from her own bundle.  
As remarked by \cite{DBLP:conf/ec/CaragiannisGH19}: \textit{``Arguably, EFX is the best fairness analog of envy-freeness for indivisible items.''}
Despite significant effort in the literature, the existence of EFX allocations still remains an open problem for indivisible items.
Only a few special cases are known to admit EFX allocations. 
For the case of goods where every item is liked by every agent, \cite{plaut2020almost} showed that EFX allocations exist when there are only two agents. 
\cite{chaudhury2020efx} and \cite{DBLP:conf/esa/Mahara21} complemented this result by showing the existence of EFX allocations when there are only three agents, and when the number of goods is at most three greater than the number of agents, respectively. 
\cite{plaut2020almost} also showed that EFX allocations exist when the valuations are identical or additive with identical ordering. 
\cite{DBLP:journals/tcs/AmanatidisBFHV21} extended this result by showing the existence of EFX allocations when the valuations are bi-valued. 
Other restricted valuations for which EFX allocations always exist include submodular dichotomous valuations where the marginal value of any item is either 0 or 1 \citep{DBLP:conf/aaai/BabaioffEF21}, and lexicographic valuations where the value of a single item is greater than the total value of all items that are less preferred \citep{DBLP:conf/aaai/HosseiniSVX21}. 
For the case of chores where every item is disliked by every agent, even less is known. For example, \cite{DBLP:conf/www/0037L022} and \cite{gafni2023unified} showed that EFX allocations exist for ordered instances and leveled preference instances, respectively. 

Recently, \cite{christodoulou2023fair} studied EFX allocations on graphs where vertices correspond to agents and edges correspond to indivisible goods. 
An agent (vertex) is only interested in the goods (edges) that are incident to her and all other edges always have zero marginal values to her. 
Thus, each good is liked by exactly two agents in their setting.
As motivated in \cite{christodoulou2023fair}, a direct application of this setting is the allocation of geographical resources, for instance, natural resources among countries on the boundaries, working offices among research groups, and public areas among communities in a region, etc. 
\cite{christodoulou2023fair} proved that EFX allocations always exist and can be computed in polynomial time for arbitrary graphs. 
Remarkably, this is one more rare case with more than three agents for which an EFX allocation is guaranteed to exist. 
They also considered a more restricted scenario where each edge must be allocated to one of its endpoint agents. 
In this scenario, an allocation is also called an {\em orientation}. 
Unfortunately, \cite{christodoulou2023fair} proved that an EFX orientation may not exist, and determining whether it exists or not is NP-complete. 

Besides goods and chores, recent years have also seen a rapidly growing interest in the case of mixed manna in the literature of fair division. 
A mixed manna contains items that are goods for some agents but chores for others. 
Practically, the setting of mixed manna can model the scenarios
where agents have different opinions on items.
Many real-world scenarios involve the allocation of mixed manna.
For example, when the items are paid jobs, they are goods for some people because completing them can
bring extra revenue; 
however, they can be chores for some people who do not care much about this amount of money and would like to save time for other matters.
The case of mixed manna is also a typical setting where the valuations are not monotone.  
\cite{christodoulou2023fair}'s graphic nature also appears in the setting of mixed manna.
For example, in sports games, each match (that can be viewed as an item) involves two teams (that can be viewed as the agents) and has to be hosted by one of them (i.e., home or away). 
Hosting a match might be a good for some teams as they can make profit and might be a chore as they cannot cover the expenses.
Furthermore, the graph orientation setting can use the topology to indicate
who are capable of completing what jobs (edges), so that the jobs can only be allocated to people (incident
vertices) who are able to do them. The allocation setting
can model the case when people really do not have any
cost or benefit on the items they are not incident to.

\subsection{Our Problem and Results}

\begin{table}[!tb]
\centering
\begin{tabular}{|c|c|c|c|}
    \hline
     Item Types & Notions & Orientation & Allocation \\
    \hline
    \multirow{3}*{Goods}& $\EFX^0$ & \makecell{may not exist, NP-c\\ (\citep{christodoulou2023fair})} & \makecell{always exist, P\\ (\citep{christodoulou2023fair})}\\
    \cline{2-4}
    & $\EFX^+$ & \makecell{always exist, P \\ (Proposition \ref{prop:goods_orien:EFX+})} & \makecell{always exist, P \\ (Corollary \ref{cor:goods_orien:EFX+})} \\
    \hline
    \multirow{3}*{Chores}& $\EFX_0$ & \makecell{may not exist, P\\ (Theoreom \ref{thm:chores_orien:EFX0} and \\ \citep{hsu2025polynomial})} & \makecell{always exist, P\\ (Proposition \ref{prop:chores_allo})}\\
    \cline{2-4}
    & $\EFX_-$ & \makecell{may not exist, P \\ (Proposition \ref{prop:chores_orien:EFX-})} & \makecell{always exist, P\\ (Proposition \ref{prop:chores_allo})} \\
    \hline
    \multirow{7}*{\makecell{Mixed\\Manna}}& $\EFX^0_0$ & \makecell{may not exist, NP-c\\ (Corollary \ref{coro:orientation:EFX00_EFX0-})} & \makecell{may not exist, NP-c\\ (Theorem \ref{thm:allocation:EFX00})}\\
    \cline{2-4}
    &$\EFX^0_-$ & \makecell{may not exist, NP-c\\ (Corollary \ref{coro:orientation:EFX00_EFX0-})} & \makecell{always exist, P\\ (Theorem \ref{thm:allocation:EFX0-})} \\
    \cline{2-4}
    &$\EFX^+_0$ & \makecell{may not exist, NP-c\\ (Corollary \ref{coro:orientation:EFX+0})} & \makecell{always exist, P\\ (Theorem \ref{thm:allocation:EFX+0})} \\
    \cline{2-4}
    &$\EFX^+_-$ & \makecell{may not exist, NP-c\\ (Theorem \ref{thm:orientation:EFX+-})} & \makecell{always exist, P\\ (Corollary \ref{coro:allocation:EFX+-})} \\
    \hline
\end{tabular}
\caption{A summary of results. ``NP-c'' means that determining the existence of the corresponding orientations/allocations is NP-complete. ``P'' means that the allocations can be found in polynomial time if they always exist, or determining the existence is in polynomial time if they may not exist.}
\label{tb:our_results}
\end{table}

In this work, we provide a complete study of the model in \citep{christodoulou2023fair} by considering the allocations (and orientations) of goods, chores, and mixed manna. 
For the case of goods, we consider the two variants of EFX in the literature, i.e., $\EFX^0$ and $\EFX^+$, where the super script $0$ (resp. $+$) indicates that the item virtually removed from the envied agent's bundle has a non-negative (resp. strictly positive) marginal value to the envious agent. 
While most results for goods have been completed or can be implied by those in \citep{christodoulou2023fair}, we prove that an $\EFX^+$ orientation always exists and can be computed in polynomial time, in contrast to $\EFX^0$ orientations which may not exist. 

For the case of chores, we also consider two variants of EFX, i.e., $\EFX_0$ and $\EFX_-$, where the sub script $0$ (resp. $-$) indicates that the item virtually removed from the envious agent's bundle has a non-positive (resp. strictly negative) marginal value to the envious agent. 
We show that while an envy-free (thus is $\EFX_0$ and $\EFX_-$) allocation always exists and can be computed in polynomial time, an $\EFX_0$ or $\EFX_-$ orientation may not exist but their existence can be determined in polynomial time. 
A recent independent work by \cite{hsu2025polynomial} also proves that determining the existence of $\EFX_0$ orientations is in polynomial time, using a different approach from ours. 
We note that although the valuations they defined are slightly more general than ours, our approach could carry over to their problem. 

For the case of mixed manna, we consider the four variants of EFX in the literature, i.e., $\EFX^0_0$, $\EFX^0_-$, $\EFX^+_0$, $\EFX^+_-$, where the super and sub scripts have the same meanings as aforementioned.  
We first show that an orientation that satisfies any of the four EFX notions may not exist, and determining its existence is NP-complete.
While the NP-completeness of $\EFX^0_0$ and $\EFX^0_-$ orientations can be implied by the result in \citep{christodoulou2023fair}, it is worth noting that our approach for the NP-completeness of $\EFX^0_0$ allocations provides an alternative proof. 
For the NP-completeness of $\EFX^+_0$ and $\EFX^+_-$ orientations, we present a novel reduction from the $(3, B2)$-SAT problem, which depends on a delicate structure with chore edges. 
Due to the hardness results for orientations, we turn to study some specific graphs such as trees, stars and paths. 
For these graphs, we show that $\EFX^+_0$ or $\EFX^+_-$ orientations always exist and can be computed in polynomial time. 
Although $\EFX^0_0$ or $\EFX^0_-$ orientations may not exist, determining their existence is in polynomial time. 
We then study the setting where the edges can be allocated to any agent. 
We show that an $\EFX^0_0$ allocation may not exist and determining its existence is NP-complete. 
Our reduction borrows the idea from the reduction by \citet{christodoulou2023fair} (see Theorem 2 in their paper) and generalizes it. 
Our reduction has a slightly more complex structure than theirs to accommodate the allocation model. 
As a result, our reduction can yield their result while theirs cannot carry over to our problem. 
We also show that an allocation that satisfies any of the other three notions always exists and can be computed in polynomial time. 
Our algorithm that computes $\EFX^0_-$ allocations is based on the algorithm in \citet{christodoulou2023fair}. 
However, since there is one more requirement in our problem that agents cannot envy others after removing a chore from their own bundles, we need to carefully allocate the edges that are chores to their endpoint agents. 
Besides, our algorithm relies on some properties that are not revealed or may not be satisfied by the algorithm in \citet{christodoulou2023fair}. 
A summary of our results can be seen in Table \ref{tb:our_results}. 

\subsection{Subsequent Work} 
There have been plenty of subsequent works since \citep{christodoulou2023fair} and the conference version of this paper \citep{DBLP:conf/ijcai/0027WL024}. 
A line of them extend the model to multigraphs where a pair of agents may be interested in more than one common item.  
\cite{afshinmehr2024efx} and \cite{bhaskar2024efx} considered some specific multigraphs (like bipartite multigraphs, multicycles, multitrees) and showed the existence of EFX allocations. 
For general multigraphs, \cite{amanatidis2024pushing} proved that 2/3-approximate EFX allocations always exist and can be computed in polynomial time. 
EFX orientations have also been studied for multigraphs. 
\cite{afshinmehr2024efx} and \cite{deligkas2024ef1} proved the NP-completeness of determining the existence of EFX orientations in some specific multigraphs. 
Besides, \cite{deligkas2024ef1}, \cite{hsu2024efx} and \cite{zeng2024structure} provided some characterization of multigraphs for which EFX orientations exist. 
Other extensions have also been studied. 
\cite{deligkas2024ef1} proposed the generalized setting where each agent has a predetermined set of items in which they are interested and showed that EF1 orientations always exist.  
\cite{kaviani2024almost} parameterized the number of common items which a pair of agents may value positively and the number of agents to whom an item may have non-zero values and proposed the so-called $(p, q)$-bounded valuations. 
\cite{chandramouleeswaran2024fair} studied EF1 orientations in a variable setting where an item may get lost or a new agent may arrive. 
Besides fairness, \cite{misra2024envy} also considered efficiency under \citep{christodoulou2023fair}'s model. 
They studied price of envy-freeness and the complexity of computing envy-free and efficient allocations. 
Besides indivisible goods, some work also studied \citep{christodoulou2023fair}'s model for chores. \cite{hsu2025polynomial} proved that determining the existence of EFX orientations for simple graphs is in polynomial time. 
They also proved that when considering multigraphs, the problem becomes NP-complete.

\subsection{Other Related Work}

Since \cite{Steinhaus49}, the theory of fair allocation has been extensively studied.
Traditional research mostly centers around allocating divisible resources, also known as the cake-cutting problem \cite{books/daglib/0017734,Procaccia_cake_16,LindnerR16}. 
In contrast to the divisible setting, the allocation of indivisible items becomes intractable, where the ideal fairness notions, including envy-freeness \cite{GS58,Varian74} and proportionality \cite{Steinhaus49}, are not guaranteed to be satisfiable. 
This leads to several research problems. 
For example, it is necessary to decide whether an instance admits a fair allocation, and to what extent the relaxation of these fairness notions can be satisfied. 
To answer the former question, it is easy to observe that deciding whether an instance admits an envy-free (or proportional) allocation is NP-hard, via a polynomial-time reduction from the Partition problem.s
In fact, it is proved by \cite{aziz2015fair,hosseini2020fair} that even if all items have binary additive valuations, the problem remains NP-hard. 
To answer the second question, envy-free up to one or any item (EF1, EFX) \citep{lipton2004approximately,budish2011combinatorial,DBLP:journals/teco/CaragiannisKMPS19} and maximin share (MMS) fairness \citep{budish2011combinatorial,akrami2024breaking} have been proposed and studied. 
It is proved by \cite{bhaskar2021approximate} that an EF1 allocation is always satisfiable under doubly monotone valuations. 
The existence of EFX allocations is still less understood. 

In addition to the traditional studies mentioned above, our paper is closely related to three specific lines of research, which we discuss below.

\paragraph{Constrained Fair Allocation}
Restricting who can get what is a well-motivated variant of the fair allocation problem.
Various types of constraints have been studied. 
For example, \cite{DBLP:conf/ijcai/BiswasB18,DBLP:conf/eumas/HummelH22} studied the cardinality constraints which restrict the number of items each agent can receive. 
A more general version of cardinality constraints is budget-feasible, where each item is at a price and the total payment of each agent is limited \cite{DBLP:journals/iandc/WuLG25,DBLP:conf/sigecom/Barman0SS23,DBLP:conf/sagt/ElkindIT24}.
Additionally, \cite{DBLP:journals/teco/LiV21,DBLP:journals/corr/abs-2404-11582} studied the hereditary constraints, and \cite{DBLP:conf/nips/LiLZ21,DBLP:conf/atal/KumarEGNV24} studied the scheduling constraints.
It is worth noting that certain constraints can be integrated into the valuations.  
For instance, cardinality, scheduling, and hereditary constraints are indeed special cases of XoS valuations that have been studied in \cite{DBLP:journals/ai/GhodsiHSSY22,DBLP:conf/nips/AkramiMSS23,DBLP:journals/ai/SeddighinS24}. 
Similar to our model, there are other constraints defined using graphs, which we will discuss later. 
We refer the readers to the survey by \citet{suksompong2021constraints} for a more comprehensive introduction to constrained fair allocation.

\paragraph{Mixed Manna}
Since initiated by \cite{DBLP:journals/aamas/AzizCIW22}, there has been a rapidly growing interest in fair allocation of 
mixed manna in recent years. 
\cite{DBLP:journals/aamas/AzizCIW22} proposed a polynomial-time algorithm named double round-robin that computes an envy-free up to one item (EF1) allocation for any instance with additive valuations. 
\cite{bhaskar2021approximate} extended this result by computing EF1 allocations for instances with general valuations. 
Their algorithm is based on the envy-cycle procedures and also runs in polynomial time. 
\cite{aleksandrov2019greedy} gave a polynomial-time algorithm that computes an EFX and Pareto-Optimal (PO) allocation for instances with identical and additive valuations. 
In their definition of EFX, the removed item must not have a value of zero, which as we will see, is the weakest one among the notions of EFX that we study in this paper. 
Other fairness notions studied for mixed manna include proportionality up to one item (PROP1) \citep{aziz2020polynomial} and maximin share (MMS) \citep{KulkarniMT21}. 
We refer the readers to the survey by \cite{DBLP:journals/corr/abs-2306-09564} for more details on this topic.

\paragraph{Fair Allocation on Graphs}
There are many other works that study fair allocation of indivisible items on graphs, whose settings whereas, are quite different from ours and \cite{christodoulou2023fair}'s. 
\cite{DBLP:conf/ijcai/BouveretCEIP17} formalized the problem that there is an underlying graph whose vertices are indivisible items and each agent must receive a connected component of the graph. 
They considered several fairness notions such as proportionality, envy-freeness, maximin share, and gave hardness results for general graphs and polynomial-time algorithms for special graphs.  
Many following works investigated the same problem with different fairness notions or graph structures \citep{DBLP:journals/geb/BiloCFIMPVZ22, DBLP:journals/dam/Suksompong19, DBLP:conf/aaai/IgarashiP19, DBLP:journals/aamas/BouveretCL19}. 
\cite{DBLP:journals/siamdm/BeiILS22} considered the same model and quantified the loss of fairness when imposing the connectivity constraint, i.e., \textit{price of connectivity}. 
\cite{DBLP:conf/ijcai/Madathil23} studied a similar model where each agent must receive a compact bundle of items that are ``closely related''. 
Different from this line of works, \cite{DBLP:journals/aamas/HummelH22} used a graph to reflect conflicts between items. 
Each vertex on the graph is an item and each edge means that its two endpoint items have a conflict. 
They require that two items that have a conflict cannot be allocated to the same agent. 
In other words, the bundle allocated to each agent must be an independent set of the graph. 
\cite{DBLP:conf/atal/PayanSV23} studied fair allocation on graph where vertices are agents (as in our setting). 
The graph was used to relax fairness notions such that fairness only need to be satisfied for the endpoint agents of the edges.

\subsection{Roadmap}
The rest of this paper is structured as follows. 
In Section \ref{sec:preliminaries}, we formally define the problem and the relative fairness notions. 
In Section \ref{sec:goods_and_chores}, we briefly discuss the results for goods instances and then present the detailed results for chores instances. 
Sections \ref{sec:mixed:ori} and \ref{sec:mixed:alloc} present the results for mixed instances, on EFX orientations and EFX allocations, respectively. 
Appendix \ref{ap:allocation:EFX0-:part1} contains some missing materials in Section \ref{sec:mixed:alloc}. 

\section{Preliminaries}
\label{sec:preliminaries}
For any positive integer $k$, let $[k] = \{1, \ldots, k\}$. 
In an instance of our problem, there is a simple graph $G = (N, M)$ where $N = \{a_1, \ldots, a_n\}$ is the vertex set and $M$ is the edge set. 
Each vertex corresponds to an agent and each edge corresponds to an indivisible item. 
We use vertex and agent, edge and item, interchangeably. 
We also write both $(a_i, a_j)$ and $e_{i, j}$ to represent the edge between $a_i$ and $a_j$. 
An {\em allocation} $\mathbf{X} = (X_1, \ldots, X_n)$ is an $n$-partition of $M$ such that $X_i$ contains the edges allocated to agent $a_i$, where $X_i \cap X_j = \emptyset$ for any $a_i, a_j \in N$ and $\bigcup_{a_i \in N}X_i = M$. 
An {\em orientation} is a restricted allocation where each edge must be allocated to one of its endpoint agents; that is, for each edge $e_{i, j} \in M$, either $e_{i, j} \in X_i$ or $e_{i, j} \in X_j$. 
An allocation $\mathbf{X}$ is partial if $\bigcup_{a_i \in N}X_i \subsetneq M$. 

Each agent $a_i \in N$ has a valuation $v_i: 2^M\to {\mathbb R}$ over the edges  such that the items $M$ are divided into three groups: \textit{goods} with $v_i(S \cup \{e\}) > v_i(S)$ for any subset $S \subseteq M \setminus \{e\}$, \textit{chores} with $v_i(S \cup \{e\}) < v_i(S)$ for any subset $S \subseteq M \setminus \{e\}$, and \textit{dummies} with $v_i(S \cup \{e\}) = v_i(S)$ for any subset $S \subseteq M \setminus \{e\}$. 
An incident edge of $a_i$ is either a good, a chore or a dummy for $a_i$, and any edge that is not incident to $a_i$ must be a dummy for $a_i$. 
Note that $v_i$ is not necessarily additive and we consider a slightly adjusted variant of the doubly monotone valuations \citep{DBLP:journals/aamas/AzizCIW22} such that the strictness of the marginal value of every item is the same for every subset. 
Let $E_i$ be the set of all edges that are incident to $a_i$, and $E_i^{\ge 0}, E_i^{> 0}, E_i^{= 0}, E_i^{< 0}\subseteq E_i$ be the subsets of non-chores, goods, dummies and chores, respectively. 

An instance is called a \textit{goods instance} if every item is either a good or a dummy for every agent, a \textit{chores instance} if every item is either a chore or a dummy for every agent, or a \textit{mixed instance} if every item could be a good, a chore, or a dummy for every agent. 

\subsection{Fairness Notions}
Given an allocation $\mathbf{X}$ and two agents $a_i, a_j\in N$, we say that $a_i$ envies $a_j$ if $v_i(X_j) > v_i(X_i)$. 
The allocation is \textit{envy-free} (EF) if no agent envies any other agent, i.e., $v_i(X_i) \ge v_i(X_j)$ for any $a_i, a_j \in N$. 
As we have seen, envy-freeness is too demanding for indivisible items. 
Thus, in this paper, we focus on its prominent relaxations \textit{envy-free up to any item} (EFX). 

For goods instances, there are two variants of EFX in the literature, i.e., $\EFX^0$ \citep{plaut2020almost} and $\EFX^+$ \citep{DBLP:journals/teco/CaragiannisKMPS19}, which require that any envy could be eliminated after virtually removing any item from the envied agent's bundle that has a non-negative or a strictly positive marginal value to the envious agent, respectively. 
The formal definitions are as below, 
\begin{definition}[$\EFX^0$]
    An allocation $\mathbf{X} = (X_1, \ldots, X_n)$ is $\EFX^0$ if for every two agents $a_i, a_j \in N$ such that $a_i$ envies $a_j$ and any $e \in X_j$ such that $v_i(X_j \setminus \{e\}) \le v_i(X_j)$, 
    we have $v_i(X_i) \ge v_i(X_j \setminus \{e\})$. 
\end{definition}

\begin{definition}[$\EFX^+$]
    An allocation $\mathbf{X} = (X_1, \ldots, X_n)$ is $\EFX^+$ if for every two agents $a_i, a_j \in N$ such that $a_i$ envies $a_j$ and any $e \in X_j$ such that $v_i(X_j \setminus \{e\}) < v_i(X_j)$, we have $v_i(X_i) \ge v_i(X_j \setminus \{e\})$. 
\end{definition}

Accordingly, for chores instances, we also consider two variants of EFX, i.e., $\EFX_0$ and $\EFX_-$, which require that any envy could be eliminated after virtually removing any item from the envious agent's own bundle that has a non-positive or a strictly negative marginal value to herself, respectively. 
The formal definitions are as below, 
\begin{definition}[$\EFX_0$]
    An allocation $\mathbf{X} = (X_1, \ldots, X_n)$ is $\EFX_0$ if for every two agents $a_i, a_j \in N$ such that $a_i$ envies $a_j$ and any $e \in X_i$ such that $v_i(X_i \setminus \{e\}) \ge v_i(X_i)$, we have $v_i(X_i \setminus \{e\}) \ge v_i(X_j)$. 
\end{definition}

\begin{definition}[$\EFX_-$]
    An allocation $\mathbf{X} = (X_1, \ldots, X_n)$ is $\EFX_-$ if for every two agents $a_i, a_j \in N$ such that $a_i$ envies $a_j$ and any $e \in X_i$ such that $v_i(X_i \setminus \{e\}) > v_i(X_i)$, we have $v_i(X_i \setminus \{e\}) \ge v_i(X_j)$. 
\end{definition}

For mixed instances, there are four variants of EFX in the literature \citep{DBLP:journals/aamas/AzizCIW22, DBLP:conf/ki/AleksandrovW20, DBLP:journals/corr/abs-2006-04428}, namely, $\EFX_0^0$, $\EFX^0_-$, $\EFX^+_0$, and $\EFX^+_-$. 
$\EFX_0^0$ requires that any envy could be eliminated by removing any item that is not a chore for the envious agent from the envied agent's bundle or any item that is not a good from the envious agent's own bundle. 
Formally, 

\begin{definition}[$\EFX^0_0$]
    An allocation $\mathbf{X} = (X_1, \ldots, X_n)$ is $\EFX^0_0$ if for every two agents $a_i, a_j \in N$ such that $a_i$ envies $a_j$, both of the following conditions hold: 
    \begin{enumerate}
        \item for any $e \in X_j$ such that $v_i(X_j \setminus \{e\}) \le v_i(X_j)$, $v_i(X_i) \ge v_i(X_j \setminus \{e\})$;
        \item for any $e \in X_i$ such that $v_i(X_i \setminus \{e\}) \ge v_i(X_i)$, $v_i(X_i \setminus \{e\}) \ge v_i(X_j)$. 
    \end{enumerate}
\end{definition}

$\EFX^0_-$ differs from $\EFX_0^0$ in that the item removed from the envious agent's bundle cannot be a dummy. 
More concretely, the item $e$ considered in the second condition is subject to $v_i(X_i \setminus \{e\}) > v_i(X_i)$. 
$\EFX^+_0$ differs from $\EFX_0^0$ in that the item removed from the envied agent's bundle cannot be a dummy, i.e., the item $e$ considered in the first condition is subject to $v_i(X_j \setminus \{e\}) < v_i(X_j)$. 
$\EFX^+_-$ differs from $\EFX_0^0$ in that the item removed from the envied agent's and the envious agent's bundles cannot be a dummy. 
Formally,  

\begin{definition}[$\EFX^0_-$]
    An allocation $\mathbf{X} = (X_1, \ldots, X_n)$ is $\EFX^0_-$ if for every two agents $a_i, a_j \in N$ such that $a_i$ envies $a_j$, both of the following conditions hold: 
    \begin{itemize}
        \item for any $e \in X_j$ s.t. $v_i(X_j \setminus \{e\}) \le v_i(X_j)$, $v_i(X_i) \ge v_i(X_j \setminus \{e\})$; 
        \item for any $e \in X_i$ s.t. $v_i(X_i \setminus \{e\}) > v_i(X_i)$, $v_i(X_i \setminus \{e\}) \ge v_i(X_j)$. 
    \end{itemize}
\end{definition}

\begin{definition}[$\EFX^+_0$]
    An allocation $\mathbf{X} = (X_1, \ldots, X_n)$ is $\EFX^+_0$ if for every two agents $a_i, a_j \in N$ such that $a_i$ envies $a_j$, both of the following conditions hold: 
    \begin{itemize}
        \item for any $e \in X_j$ s.t. $v_i(X_j \setminus \{e\}) < v_i(X_j)$, $v_i(X_i) \ge v_i(X_j \setminus \{e\})$; 
        \item for any $e \in X_i$ s.t. $v_i(X_i \setminus \{e\}) \ge v_i(X_i)$, $v_i(X_i \setminus \{e\}) \ge v_i(X_j)$. 
    \end{itemize}
\end{definition}

\begin{definition}[$\EFX^+_-$]
    An allocation $\mathbf{X} = (X_1, \ldots, X_n)$ is $\EFX^+_-$ if for every two agents $a_i, a_j \in N$ such that $a_i$ envies $a_j$, both of the following conditions hold: 
    \begin{itemize}
        \item for any $e \in X_j$ s.t. $v_i(X_j \setminus \{e\}) < v_i(X_j)$, $v_i(X_i) \ge v_i(X_j \setminus \{e\})$; 
        \item for any $e \in X_i$ s.t. $v_i(X_i \setminus \{e\}) > v_i(X_i)$, $v_i(X_i \setminus \{e\}) \ge v_i(X_j)$. 
    \end{itemize}
\end{definition}

\section{Goods and Chores Instances}
\label{sec:goods_and_chores}
In this section, we first briefly introduce the results for goods instances and then focus on the detailed results for chores instances. 

\subsection{Goods Instances}
\label{ap:goods} 
For goods instances, \cite{christodoulou2023fair} have showed that an $\EFX^0$ allocation always exists and can be computed in polynomial time, which also holds for $\EFX^+$ allocations since $\EFX^0$ implies $\EFX^+$. 

\begin{corollary}\label{cor:goods_orien:EFX+}
    For any goods instance, an $\EFX^+$ allocation always exist and can be computed in polynomial time. 
\end{corollary}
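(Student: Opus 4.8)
The plan is to derive this corollary directly from the two facts already established in the surrounding text, namely that $\EFX^0$ allocations always exist and can be computed in polynomial time for goods instances (the result of \cite{christodoulou2023fair}), together with the logical implication that any $\EFX^0$ allocation is automatically $\EFX^+$. So the entire argument reduces to verifying this implication and then invoking the existing algorithm.

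\begin{proof}[Proof proposal]
First I would argue that $\EFX^0$ implies $\EFX^+$ by comparing the two definitions. Fix any allocation $\mathbf{X}$ that is $\EFX^0$, and consider any pair of agents $a_i, a_j$ such that $a_i$ envies $a_j$. To check the $\EFX^+$ condition, I must show that for every $e \in X_j$ with $v_i(X_j \setminus \{e\}) < v_i(X_j)$ we have $v_i(X_i) \ge v_i(X_j \setminus \{e\})$. The key observation is that the set of items quantified over in the $\EFX^+$ condition is a \emph{subset} of those quantified over in the $\EFX^0$ condition: every $e$ satisfying the strict inequality $v_i(X_j \setminus \{e\}) < v_i(X_j)$ also satisfies the weak inequality $v_i(X_j \setminus \{e\}) \le v_i(X_j)$. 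Hence for each such $e$ the $\EFX^0$ guarantee already delivers exactly the inequality $v_i(X_i) \ge v_i(X_j \setminus \{e\})$ that $\EFX^+$ demands. Since this holds for every envious pair, $\mathbf{X}$ is $\EFX^+$.

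Having established the implication, I would then invoke \cite{christodoulou2023fair}: for any goods instance their polynomial-time algorithm produces an $\EFX^0$ allocation, which by the above is also $\EFX^+$. This proves both that an $\EFX^+$ allocation always exists and that it can be found in polynomial time, completing the proof.

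I do not anticipate any real obstacle here, as the statement is a corollary whose content is essentially the logical containment between the two notions. The only point requiring the briefest care is the direction of the implication: one must confirm that weakening the removed-item requirement from ``strictly positive marginal value'' ($\EFX^+$) to ``non-negative marginal value'' ($\EFX^0$) \emph{strengthens} the fairness guarantee, so that the stronger $\EFX^0$ property subsumes the weaker $\EFX^+$ one rather than the reverse. A one-line remark pinpointing that the $\EFX^+$ quantifier ranges over a subset of the $\EFX^0$ quantifier settles this cleanly.
\end{proof}
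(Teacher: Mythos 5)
Your proposal is correct and follows exactly the paper's own (one-line) argument: the paper derives the corollary by noting that $\EFX^0$ implies $\EFX^+$ and invoking the polynomial-time $\EFX^0$ algorithm of \cite{christodoulou2023fair}. Your explicit verification that the $\EFX^+$ quantifier ranges over a subset of the $\EFX^0$ quantifier is the right justification for that implication.
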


They also showed that an $\EFX^0$ orientation may not exist and determining its existence is NP-complete. 
Our reduction in Theorem \ref{thm:allocation:EFX00} provides an alternative proof to this result.
Contrary to $\EFX^0$, it is easy to compute $\EFX^+$ orientations.

\begin{proposition}
\label{prop:goods_orien:EFX+}
    For any goods instance, an $\EFX^+$ orientation always exist and can be computed in polynomial time. 
\end{proposition}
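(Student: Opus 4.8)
The plan is to prove the stronger statement that \emph{every} orientation of a goods instance is already $\EFX^+$; existence and polynomial-time computability then follow at once, since one may simply orient each edge toward (say) its lower-indexed endpoint in linear time. The crucial structural fact I would exploit is that $G$ is a simple graph, so any two agents $a_i$ and $a_j$ share at most one edge $e_{i,j}$. Consequently, in an orientation the only edge of $X_j$ that can carry positive value for $a_i$ is $e_{i,j}$ (when it exists and is oriented to $a_j$); every other edge of $X_j$ is not incident to $a_i$ and is therefore a dummy for $a_i$.

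First I would fix an arbitrary orientation $\mathbf{X}$ and suppose $a_i$ envies $a_j$, i.e.\ $v_i(X_j) > v_i(X_i)$. Since in an orientation $X_i$ consists only of edges incident to $a_i$, each of which is a good or a dummy for $a_i$, the goods/dummy marginal properties give $v_i(X_i) \ge v_i(\emptyset)$ by deleting the incident edges one at a time; hence $v_i(X_j) > v_i(\emptyset)$, which forces $e_{i,j} \in X_j$ with $e_{i,j}$ a good for $a_i$, as this is the only way $X_j$ can have value above the baseline for $a_i$. Next, the only edge $e \in X_j$ meeting the $\EFX^+$ removal condition $v_i(X_j \setminus \{e\}) < v_i(X_j)$ (strictly positive marginal value) is precisely $e_{i,j}$, because all other edges of $X_j$ are dummies for $a_i$ and contribute zero marginal value. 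Removing $e_{i,j}$ leaves a set consisting only of dummies for $a_i$, so $v_i(X_j \setminus \{e_{i,j}\}) = v_i(\emptyset)$.

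To conclude, I would combine the two bounds $v_i(X_i) \ge v_i(\emptyset) = v_i(X_j \setminus \{e_{i,j}\})$, which is exactly the $\EFX^+$ requirement; since $a_i$, $a_j$, and the removed edge were arbitrary, $\mathbf{X}$ is $\EFX^+$. I do not anticipate a genuine obstacle here, as the argument is short once the single-shared-edge observation is in place; the only point demanding care is the contrast with $\EFX^0$, which explains why that weaker-looking notion behaves so differently. Under $\EFX^0$ one is additionally permitted to remove an edge of non-negative (in particular zero) marginal value, i.e.\ a \emph{dummy} $e$ of $X_j$ for $a_i$; then $v_i(X_j \setminus \{e\}) = v_i(X_j) > v_i(X_i)$ and the condition fails as soon as $X_j$ simultaneously holds $e_{i,j}$ and some edge irrelevant to $a_i$. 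The proof should therefore emphasize that the strict-marginal removal built into $\EFX^+$ forbids deleting such dummies, which is exactly the feature that makes an arbitrary orientation suffice.
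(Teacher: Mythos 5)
Your proposal is correct and follows essentially the same route as the paper: orient arbitrarily, observe that in a simple graph the only edge of $X_j$ with strictly positive marginal value to an envious $a_i$ is $e_{i,j}$, and note that removing it drops $a_i$'s valuation of $X_j$ to the empty-set baseline, which $v_i(X_i)$ weakly dominates since $X_i$ contains only goods and dummies for $a_i$. Your write-up is just a more careful spelling-out of the paper's one-line argument (and your closing remark about why $\EFX^0$ fails here is a nice, accurate aside).
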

\begin{proof}
     Our algorithm is to arbitrarily allocate each edge to one of its endpoint agents. 
     Consider any two agents $a_i$, $a_j$ such that $a_i$ envies $a_j$. 
     It is clear that $a_j$ must receive $e_{i, j}$ and $e_{i, j}$ is the only edge in $a_j$'s bundle that has a strictly positive marginal value to $a_i$. 
     After removing $e_{i, j}$ from $a_j$'s bundle, $a_i$ values $a_j$'s bundle at 0 and does not envy $a_j$. 
\end{proof}

\subsection{Chores Instances}
\label{ap:chores}
For chores instances, an envy-free allocation can be easily computed by arbitrarily allocating each edge to an agent who is not its endpoint, which is also $\EFX_0$ and $\EFX_-$.

\begin{proposition}\label{prop:chores_allo}
    For any chores instance, an envy-free (thus $\EFX_0$ and $\EFX_-$) allocation always exist and can be computed in polynomial time. 
\end{proposition}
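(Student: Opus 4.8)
The plan is to produce one explicit allocation, show it is envy-free, and then observe that $\EFX_0$ and $\EFX_-$ follow for free. The construction is the one suggested just before the statement: assuming $n \ge 3$, I would process the edges in any order and assign each edge $e_{i,j}$ to an arbitrary agent $a_k$ with $k \notin \{i,j\}$, which is always available because there are at least three vertices. Selecting such a $k$ and recording the assignment takes time linear in $|M|$, so the polynomial-time claim is immediate. The degenerate cases $n \le 2$ I would dispatch separately (with one vertex there are no edges), so that the substance of the argument lives in $n \ge 3$.

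For the envy-freeness check I would fix an agent $a_i$ and rely on the single structural fact from the preliminaries that any edge not incident to $a_i$ is a dummy for her. First, every edge assigned to $a_i$ is by construction not incident to her and is therefore a dummy; since dummies leave every marginal value unchanged, $X_i$ contributes nothing and $v_i(X_i) = v_i(\emptyset)$. Second, in a chores instance every item is either a chore or a dummy for $a_i$, so building any other bundle $X_j$ up from the empty set can only decrease or preserve $a_i$'s value, which gives $v_i(X_j) \le v_i(\emptyset) = v_i(X_i)$. Hence $a_i$ does not envy $a_j$, and since $a_i, a_j$ were arbitrary the allocation is envy-free.

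Finally, the conditions defining $\EFX_0$ and $\EFX_-$ are quantified only over pairs $(a_i, a_j)$ in which $a_i$ envies $a_j$; an envy-free allocation admits no such pair, so both notions hold vacuously. There is no genuinely hard step here — the only care points are the edge cases $n \le 2$ and the fact that the construction needs a non-endpoint agent for every edge, which is precisely the place where the argument exploits the allocation model and would break for orientations. It is worth stating explicitly that nothing uses additivity: the proof relies only on dummies having zero marginal and chores having non-positive marginal on every subset, matching the doubly monotone assumption of the model.
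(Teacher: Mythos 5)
Your proof is correct and follows exactly the paper's (one-line) argument: allocate each edge to an agent who is not one of its endpoints, so every agent's own bundle consists of dummies worth $v_i(\emptyset)$ while every other bundle is worth at most that, and $\EFX_0$/$\EFX_-$ then hold vacuously. The only loose end is your claimed dispatch of $n \le 2$: for $n = 2$ with a single edge that is a chore for both endpoints, no non-endpoint agent exists and in fact no envy-free allocation exists at all (the allocation is still $\EFX_0$ and $\EFX_-$), a degenerate case the paper silently ignores as well.
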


However, an $\EFX_0$ or $\EFX_-$ orientation may not exist. 
\begin{example}
    Consider a connected graph where there are more edges than vertices and each edge is a chore for both its two endpoint agents. 
    By the pigeonhole principle, there must exist an agent who receives more than one edge. 
    After removing one edge from her own bundle, that agent still receives a negative value and envies the other endpoint agents of the edges she receives since she values those agents' bundles at 0.  
\end{example}
The good news is that we can determine the existence of $\EFX_-$ and $\EFX_0$ orientations in polynomial time. 
Let us consider the simpler notion $\EFX_-$ first. 

\begin{proposition}\label{prop:chores_orien:EFX-}
    For any chores instance, whether an $EFX_-$ orientation exists or not can be determined in polynomial time. 
\end{proposition}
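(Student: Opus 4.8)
The plan is to reduce the existence of an $\EFX_-$ orientation to a purely combinatorial orientation problem on a subgraph, via a clean characterization of which orientations are $\EFX_-$. First I would record the basic observation that in any orientation each agent $a_i$ receives only edges incident to her, and among the edges in another agent's bundle $X_j$ the only one $a_i$ could value non-trivially is the single edge $e_{i,j}$ (the graph is simple). Since every incident edge is a chore or a dummy, this gives $v_i(X_j) \le v_i(\emptyset)$ for every $j$, with equality whenever $e_{i,j} \notin X_j$ or $e_{i,j}$ is a dummy for $a_i$. In particular $a_i$ can envy $a_j$ only if $a_i$ herself holds at least one chore.

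The central step --- the one I expect to be the main obstacle to state cleanly, because the valuations need not be additive --- is the characterization: \emph{an orientation is $\EFX_-$ if and only if every agent receives at most one edge that is a chore for her}. For the ``if'' direction, if $a_i$ holds at most one chore, then after removing it her remaining bundle consists only of dummies and is worth exactly $v_i(\emptyset)$, which dominates $v_i(X_j)$ for every $j$; hence no $\EFX_-$ condition can be violated. For the ``only if'' direction I would argue contrapositively: if $a_i$ holds two chores, pick one of them, say $e_{i,k}$; then $v_i(X_k) = v_i(\emptyset)$ while $v_i(X_i) < v_i(\emptyset)$, so $a_i$ envies $a_k$, and deleting $e_{i,k}$ from $X_i$ still leaves at least one chore, so $v_i(X_i \setminus \{e_{i,k}\}) < v_i(\emptyset) = v_i(X_k)$, which exhibits an $\EFX_-$ violation. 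The only care needed is that, by the chore (doubly-monotone) assumption, any set containing a chore is strictly worse than the same set with that chore deleted, so a set containing at least one chore lies strictly below the all-dummy baseline $v_i(\emptyset)$ regardless of additivity.

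Given the characterization, I would translate the question into: does $G$ admit an orientation in which every agent is assigned at most one chore edge? An edge that is a chore for only one endpoint can be harmlessly oriented toward the endpoint for whom it is a dummy, and an edge that is a dummy for both endpoints is irrelevant; hence the only binding edges are the ``double-chore'' edges that are chores for both endpoints. Let $G'$ be the subgraph they induce. Then an $\EFX_-$ orientation of $G$ exists if and only if $G'$ can be oriented so that every vertex has in-degree at most one: the forward direction restricts a given $\EFX_-$ orientation to $G'$ (each $G'$-edge is a chore for whoever receives it, so in-degree at most one follows from the at-most-one-chore property), and the backward direction extends an in-degree-$\le 1$ orientation of $G'$ by routing each remaining chore edge to a dummy endpoint.

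Finally I would invoke the standard fact that a graph can be oriented with maximum in-degree at most one precisely when each connected component is a pseudoforest, i.e.\ has at most as many edges as vertices: orient a spanning tree away from a root and any single surplus cycle cyclically, while conversely $\ell$ edges cannot be charged injectively to $v < \ell$ incident vertices. Thus the algorithm is to build $G'$, compute its connected components, and accept iff every component satisfies $\#\text{edges} \le \#\text{vertices}$, all of which is clearly polynomial. The conceptual weight of the argument sits entirely in the characterization of the second step; the reduction and the orientation criterion are then routine.
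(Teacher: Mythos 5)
Your proposal is correct and follows essentially the same route as the paper: both reduce to the subgraph of edges that are chores for both endpoints and then decide existence by checking, per connected component, whether the number of edges exceeds the number of vertices (your in-degree-at-most-one / pseudoforest criterion is the same pigeonhole condition the paper uses). Your explicit ``at most one chore per agent'' characterization and the careful handling of non-additive valuations are slightly more detailed than the paper's exposition, but the substance is identical.
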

\begin{proof}
    Given any chores instance on a graph $G = (N, M)$, we construct a new graph $G^\prime = (N, M^\prime)$ by deleting all edges in $M$ that are dummies for at least one of their endpoint agents. 
    We next show that there exists an $\EFX_-$ orientation on $G$ if and only if there exists one on $G^\prime$. 
    On one hand, if there exists an $\EFX_-$ orientation $X$ on $G$, we construct an orientation $X^\prime$ on $G^\prime$ by removing the edges in $M \setminus M^\prime$ from $X$. 
    Clearly, $X^\prime$ is still $\EFX_-$. 
    On the other hand, if there exists an $\EFX_-$ orientation $X^\prime$ on $G^\prime$, we construct an orientation $X$ on $G$ by allocating each edge in $M \setminus M^\prime$ to an endpoint agent for whom the edge is a dummy. 
    Since no agent gets worse off and the added edges are not considered in the definition of $\EFX_-$, $X$ is still $\EFX_-$. 

    Then it suffices to show how to determine the existence of an $\EFX_-$ orientation on $G^\prime$. 
    Note that $G^\prime$ consists of many connected components and each edge in $G^\prime$ is a chore for both its endpoint agents. 
    For each connected component, we compare the number of edges and vertices. 
    If there exists one connected component such that the number of edges is larger than that of vertices, we conclude that no orientation is $\EFX_-$ on $G^\prime$. 
    This is because by the pigeonhole principle, there must exist an agent in that connected component who receives more than one edge. 
    After removing one edge from her own bundle, that agent still receives a negative value and envies the other endpoint agents of the edges she receives since she values those agents' bundles at 0.  
    If no such connected component exists, we can compute an $\EFX_-$ orientation by allocating each agent exactly one edge. 
\end{proof}

In the following, we elaborate on how to determine the existence of $\EFX_0$ orientations for chores instances in polynomial time. 
We have the following main theorem. 

\begin{theorem}\label{thm:chores_orien:EFX0}
    For any chores instance, we can determine whether an $\EFX_0$ orientation exists or not in polynomial time.
\end{theorem}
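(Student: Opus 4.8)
The plan is to reduce the existence question to a purely combinatorial orientation problem by first characterizing $\EFX_0$ orientations exactly, and then deciding feasibility via a matching/flow formulation.

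\textbf{Step 1: a structural characterization.} First I would record that in any orientation agent $a_i$ values every bundle at most $0$, and values $X_j$ at $0$ precisely when the common edge $e_{i,j}$ (if it exists) is not oriented to $a_j$. The key observation is that if $a_i$ receives any chore, say $e_{i,k}$, then $v_i(X_i)<0$ while $v_i(X_k)=0$ (since $e_{i,k}$ is oriented to $a_i$), so $a_i$ necessarily envies $a_k$. Feeding the most favourable removable item into the definition of $\EFX_0$ then forces $v_i(X_i\setminus\{e\})\ge 0$ for every chore or dummy $e\in X_i$, which rules out both a second chore and any dummy in $X_i$. Conversely, an orientation in which every agent holds either only dummies (value $0$, hence no envy) or exactly one chore (value $0$ after the forced removal, dominating every other bundle) is $\EFX_0$. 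This yields the characterization: an orientation is $\EFX_0$ iff every agent receives at most one chore for herself and no agent receives both a chore and a dummy for herself. I would stress the contrast with the $\EFX_-$ case of Proposition~\ref{prop:chores_orien:EFX-}: there dummies are irrelevant, which is exactly why deleting all dummy-incident edges works for $\EFX_-$ but not here.

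\textbf{Step 2: reduction to a constrained orientation.} Call an agent \emph{loaded} if she receives her single chore and \emph{free} otherwise, and classify each edge by its types at its endpoints as chore--chore (CC), chore--dummy (CD), or dummy--dummy (DD). The characterization translates into: a CC edge must be oriented to a loaded endpoint; a CD edge goes either to its chore-endpoint (which becomes loaded) or to its dummy-endpoint (which must stay free); a DD edge must go to a free endpoint; and each loaded agent absorbs exactly one chore. Thus an $\EFX_0$ orientation exists iff there is a loaded/free labelling together with an orientation realising it.

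\textbf{Step 3: algorithm and the main obstacle.} The labelling constraints are local, $2$-SAT-type clauses ($\ell_i\lor\ell_j$ for CC, $\ell_i\lor\neg\ell_j$ for a CD edge whose chore-endpoint is $a_i$, and $\neg\ell_i\lor\neg\ell_j$ for DD), whereas ``each loaded agent absorbs exactly one chore, with all CC edges absorbed'' is a global degree-constrained orientation, i.e.\ orient the chore edges with chore-in-degree at most $1$ while placing every CC edge on a loaded endpoint. Taken in isolation, the chore part is feasible exactly when each component of the relevant chore-subgraph has at most as many edges as vertices, generalizing the pigeonhole argument behind Proposition~\ref{prop:chores_orien:EFX-}; the loaded set is then read off as the agents with chore-in-degree $1$. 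The hard part will be the coupling introduced by dummies: a loaded agent rejects all dummies, so the orientation that absorbs the chores must simultaneously leave a free endpoint for every DD edge and every escaping CD edge. This is a disjunctive (``chore \emph{or} dummy, never both'') constraint that a plain max-flow cannot encode, and an arbitrary satisfying labelling need not admit the chore-matching. I would handle it by folding both requirements into a single feasibility flow with lower bounds---forcing each CC edge to be absorbed and capping every agent's chore intake at one---whose free vertices are exactly those not saturated by a chore, and then testing dummy placement on the residual graph. The bulk of the technical work is to prove that this combined instance is feasible precisely when an $\EFX_0$ orientation exists and that feasibility can be decided component by component in polynomial time.
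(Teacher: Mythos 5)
Your Step~1 is correct and coincides with the paper's own characterization (Lemma~\ref{lem:orientation:EFX0:characterization}): in a chores instance an orientation is $\EFX_0$ exactly when no agent holds two incident edges of which at least one is a chore for her, i.e.\ every agent holds either a single chore and nothing else, or only dummies. Your Step~2 reformulation into loaded/free agents and CC/CD/DD edge constraints is a faithful translation of that. The gap is in Step~3, where the actual algorithm lives. You correctly identify the coupling between chore absorption and dummy placement as the crux, but you then only gesture at a ``feasibility flow with lower bounds \ldots\ and then testing dummy placement on the residual graph'' and explicitly defer ``the bulk of the technical work.'' As described this does not yet constitute a proof: a feasibility flow commits to one particular loaded set, and you give no argument that the loaded set produced by the flow admits a dummy placement whenever some other feasible loaded set does. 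Without an exchange lemma singling out a canonical choice, testing a single residual graph can report failure on a yes-instance. The paper resolves exactly this issue by different means: a DFS-style \textbf{Push} propagation (once an agent takes a chore, all her other incident edges are forced outward, recursively), combined with an implementability lemma (Lemma~\ref{lem:orientation:EFX0:subproblem}) showing that greedily completing a maximal partial orientation never destroys an existing solution. Some such lemma is precisely what is missing from your write-up.

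That said, your own reformulation already contains a much shorter correct proof that you walked past. Every constraint in the characterization is \emph{pairwise}: introduce one Boolean variable per edge encoding which endpoint receives it, and for every agent $a_i$ and every pair of incident edges $e,f$ of which at least one is a chore for $a_i$, add the $2$-clause stating that not both $e$ and $f$ are oriented to $a_i$. Pairs of dummies generate no clause, so free agents may accumulate dummies; the conditions ``at most one chore'' and ``never a chore together with another edge'' are exactly the conjunction of these clauses. Satisfying assignments then correspond bijectively to orientations meeting the characterization, hence to $\EFX_0$ orientations, and the resulting $2$-SAT instance has $O\bigl(\sum_i \deg(a_i)^2\bigr)$ clauses and is decidable (with a witness) in polynomial time. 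Your worry that the degree constraint is ``global'' and needs a flow is misplaced: only lower bounds on degrees resist a $2$-SAT encoding, whereas ``at most one'' is itself a conjunction of $2$-clauses. So the theorem follows from your Steps~1--2 with no flow machinery at all.
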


We start with a characterization of $\EFX_0$ orientations for chores instances.
\begin{lemma}\label{lem:orientation:EFX0:characterization}
    For chores instances, an orientation $\mathbf{X} = (X_1, \ldots, X_n)$ is $\EFX_0$ if and only if each agent either does not receive any edge that is a chore for them, or receives exactly one edge which is a chore. 
    That is, for any agent $a_i \in N$, 
    \begin{enumerate}
        \item if there exists an edge $e \in X_i$ such that $e \in E_i^{<0}$, then $|X_i| = 1$,
        \item otherwise for every edge $e \in X_i, e \in E_i^{=0}$.
    \end{enumerate}
\end{lemma}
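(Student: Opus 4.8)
The plan is to exploit the rigid structure an orientation imposes on a chores instance, after which the single $\EFX_0$ condition reduces to a couple of direct value comparisons. I would first record three observations. (i) Because $G$ is simple, the only edge incident to both $a_i$ and $a_j$ is $e_{i,j}$; hence in an orientation the only edge of $X_j$ that $a_i$ values nontrivially is $e_{i,j}$ (when $a_j$ holds it), so $v_i(X_j) \le 0$ always, with $v_i(X_j) = 0$ whenever $a_j$ does not hold $e_{i,j}$ (using $v_i(\emptyset) = 0$ and that all non-incident edges are dummies for $a_i$). (ii) In a chores instance no incident edge is a good, so every $e \in X_i$ satisfies $v_i(X_i \setminus \{e\}) \ge v_i(X_i)$; thus the hypothesis of the $\EFX_0$ condition holds for \emph{every} edge of $X_i$, i.e. the condition is effectively tested against all of them. (iii) Since $v_i(X_j) \le 0$, the relation ``$a_i$ envies $a_j$'' forces $v_i(X_i) < 0$, so $a_i$ must hold at least one chore.

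For the $(\Leftarrow)$ direction I would assume every agent is in case~1 or case~2 and verify $\EFX_0$. Fix any envy pair with $a_i$ envying $a_j$. By (iii), $a_i$ holds a chore, so $a_i$ must be in case~1: $X_i = \{e^*\}$ with $e^*$ a chore for $a_i$. The only edge to test is $e^*$ itself, and $v_i(X_i \setminus \{e^*\}) = v_i(\emptyset) = 0 \ge v_i(X_j)$ by (i). Hence the $\EFX_0$ condition holds for this pair, and the orientation is $\EFX_0$. This direction is immediate once (i)--(iii) are in place.

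The substantive direction is $(\Rightarrow)$, which I would prove by contraposition. Suppose some agent $a_i$ violates the characterization: it holds a chore $e^* = e_{i,k}$ yet $|X_i| \ge 2$. Let $a_k$ be the other endpoint of $e^*$. Since $a_i$ holds $e^*$, agent $a_k$ does not, so $v_i(X_k) = 0$ by (i), while $v_i(X_i) < 0$ because $X_i$ contains the chore $e^*$; thus $a_i$ envies $a_k$. Choosing any second edge $e' \in X_i \setminus \{e^*\}$, observation (ii) makes $e'$ a legitimate test edge, so $\EFX_0$ would demand $v_i(X_i \setminus \{e'\}) \ge v_i(X_k) = 0$. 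But $X_i \setminus \{e'\}$ still contains $e^*$, forcing $v_i(X_i \setminus \{e'\}) < 0$, a contradiction. This shows an agent holding a chore can hold only that one edge (case~1), while an agent holding no chore has a bundle of dummies only (case~2, which is immediate from the orientation structure).

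The main obstacle, though largely bookkeeping, is to make observations (i)--(iii) fully rigorous within the paper's restricted doubly monotone model: I must confirm that a bundle of dummies is valued exactly $v_i(\emptyset) = 0$ (by iterating the dummy marginal identity) and that any subset containing a chore has strictly negative value (by iterating the chore marginal inequality), so that $v_i(X_i) < 0$ and $v_i(X_i \setminus \{e'\}) < 0$ hold as claimed. The one genuinely structural point is that the forward direction must produce an agent whom $a_i$ actually envies; taking the other endpoint $a_k$ of a held chore edge is the right choice, and simplicity of $G$ guarantees $v_i(X_k) = 0$ so the envy is automatic. With these facts pinned down, both directions collapse to the short comparisons above.
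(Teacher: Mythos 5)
Your proposal is correct and follows essentially the same route as the paper's proof: the backward direction checks that a case-1 agent is fine after removing her single chore and a case-2 agent never envies, and the forward direction derives a contradiction by removing a second edge $e'$ from a bundle that still contains the chore $e^*$, comparing against the zero-valued bundle of the chore's other endpoint. Your observations (i)--(iii) just make explicit the facts the paper uses implicitly (simplicity of $G$, all edges being weakly removable in a chores instance, and dummy bundles having value zero).
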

\begin{proof}
    Suppose that an orientation $X$ satisfies the above conditions. 
    If an agent satisfies the first condition, then after removing the only edge in her bundle, she does not envy others. 
    Otherwise (i.e., the agent satisfies the second condition), she receives a value of 0 and does not envy others. 
    Thus, $X$ is $\EFX_0$. 
    Now we show that an $\EFX_0$ orientation must satisfy the conditions. 
    For a fixed agent $a_i$, if no edge in her bundle is a chore for her, then the second condition holds. 
    Otherwise, let $e_{i, j}$ be one edge in $a_i$'s bundle that is a chore for $a_i$. 
    If there exists another edge $e^\prime \neq e_{i, j}$ in $a_i$'s bundle, then the orientation is not $\EFX_0$, a contradiction. 
    This is because $v_i(X_i\setminus \{e^\prime\}) \leq v_i(e_{i, j}) < 0$ and $v_i(X_j) = 0$ since all edges in $X_j$ are not incident to $a_i$ and are thus dummies for $a_i$. 
\end{proof}
Note that the if and only if condition in Lemma \ref{lem:orientation:EFX0:characterization} is a local condition for each agent $a_i\in N$. 
Therefore, we may say in the following proof that an agent $a_i \in N$ (or an agent set $U \subseteq N$) satisfies the condition in Lemma \ref{lem:orientation:EFX0:characterization}.

We next define two important properties for partial orientations.

\begin{definition}[Implementable]\label{def:orientation:EFX0:feasible}
    For chores instances, we say that a partial orientation $\mathbf{X}^p = (X_1^p, \ldots, X_n^p)$ is implementable (for $\EFX_0$) if there exists an $\EFX_0$ orientation $\mathbf{X}=(X_1, \ldots, X_n)$ such that $X_i^p \subseteq X_i$ for every $a_i \in N$. We call such $\mathbf{X}$ an implementation of $\mathbf{X}^p$.
\end{definition}

\begin{definition}[Maximal]\label{def:orientation:EFX0:maximal}
    For chores instances, we say that a partial orientation $\mathbf{X}^p = (X_1^p$ $, \ldots, X_n^p)$ is maximal if for some subset of agents $N^\prime$, all adjacent edges of $N^\prime$ are allocated to $N^\prime$. That is, $\bigcup_{a_i\in N^\prime} X_i^p = \bigcup_{a_i\in N^\prime} E_i$.
\end{definition}

The following lemma shows that some special maximal orientations are implementable. This lemma will be a basic ingredient to the polynomial-time algorithm that computes a potentially existing $\EFX_0$ orientation.

\begin{lemma}\label{lem:orientation:EFX0:subproblem}
    For a chores instance such that an $\EFX_0$ orientation exists, let $\mathbf{X}^p = (X_1^p, \ldots, X_n^p)$ be a maximal (partial) orientation with respect to a set of agents $N^\prime \subset N$ such that every agent in $N \setminus N^\prime$ receives an empty bundle. 
    If $\mathbf{X}^p$ satisfies the condition in Lemma \ref{lem:orientation:EFX0:characterization}, then it is implementable. 
\end{lemma}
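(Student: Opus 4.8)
The plan is to exploit the purely local nature of the characterization in Lemma~\ref{lem:orientation:EFX0:characterization}: an orientation is $\EFX_0$ exactly when \emph{every} agent individually receives either no edge that is a chore for her or a single chore edge, and whether a given agent meets this condition depends only on her own bundle. Hence, to exhibit an implementation of $\mathbf{X}^p$ it suffices to extend it to a full orientation in which every agent satisfies this per-agent condition. First I would pin down the unallocated edges. Since $\mathbf{X}^p$ is maximal with respect to $N'$, all edges incident to $N'$ are already assigned (Definition~\ref{def:orientation:EFX0:maximal}), so the only unallocated edges are those with both endpoints in $N \setminus N'$, i.e.\ the edges of the induced subgraph $G[N \setminus N']$. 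In any extension these edges can go only to endpoints in $N \setminus N'$; thus each agent in $N'$ necessarily keeps exactly the bundle $X_i^p$, while each agent in $N \setminus N'$ (empty under $\mathbf{X}^p$) can only receive edges of $G[N \setminus N']$. The task therefore reduces to producing an $\EFX_0$ orientation of $G[N \setminus N']$ and gluing it to $\mathbf{X}^p$.

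To orient the induced subgraph I would use the standing hypothesis that an $\EFX_0$ orientation $\mathbf{Y}$ of the whole graph $G$ exists. Restrict $\mathbf{Y}$ to the edges of $G[N \setminus N']$: each such edge has both endpoints in $N \setminus N'$, so its $\mathbf{Y}$-orientation stays inside $N \setminus N'$ and the restriction $\mathbf{Z}$ is a valid orientation of $G[N \setminus N']$. I then check that every agent $a_k \in N \setminus N'$ still meets the condition of Lemma~\ref{lem:orientation:EFX0:characterization} under $\mathbf{Z}$. Applying the characterization to $\mathbf{Y}$, either $a_k$ holds no chore edge in $\mathbf{Y}$ -- in which case $Z_k \subseteq Y_k$ contains no chore edge either -- or $a_k$ holds a single chore edge in $\mathbf{Y}$, which $\mathbf{Z}$ either retains (condition~1) or discards because it was oriented toward $N'$, leaving $a_k$ empty (condition~2). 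The essential point is that passing from $Y_k$ to $Z_k$ only deletes edges, and deleting edges can never introduce a new chore into a bundle, so the condition is preserved. Hence $\mathbf{Z}$ is $\EFX_0$ on $G[N \setminus N']$.

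Finally I would assemble $\mathbf{X}$ by setting $X_i = X_i^p$ for $a_i \in N'$ and $X_i = Z_i$ for $a_i \in N \setminus N'$. This is a full orientation, since $\mathbf{X}^p$ covers all edges incident to $N'$, $\mathbf{Z}$ covers the remaining edges inside $N \setminus N'$, and these two edge sets are disjoint; moreover $X_i^p \subseteq X_i$ for every agent, so $\mathbf{X}$ extends $\mathbf{X}^p$. Agents in $N'$ satisfy the local condition because $\mathbf{X}^p$ does by assumption and their bundles are unchanged -- the newly added edges of $G[N \setminus N']$ are not incident to them and hence are dummies, so they cannot create envy or disturb the condition. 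Agents in $N \setminus N'$ satisfy it by the previous paragraph. Invoking Lemma~\ref{lem:orientation:EFX0:characterization} once more, $\mathbf{X}$ is $\EFX_0$, so $\mathbf{X}$ implements $\mathbf{X}^p$ (Definition~\ref{def:orientation:EFX0:feasible}). I expect the only delicate step to be the restriction argument of the second paragraph -- verifying that discarding the $\mathbf{Y}$-edges pointing into $N'$ cannot spoil any agent's local condition -- which is precisely where the monotone structure of the characterization (removing edges only helps) and the existence hypothesis are consumed; the remainder is bookkeeping about which edges lie in $G[N \setminus N']$.
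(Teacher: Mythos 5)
Your proposal is correct and follows essentially the same route as the paper: both keep the bundles of a given full $\EFX_0$ orientation on the edges with both endpoints in $N \setminus N'$, overwrite the edges incident to $N'$ with $\mathbf{X}^p$, and then verify the per-agent condition of Lemma~\ref{lem:orientation:EFX0:characterization} via the observation that passing to a subset of a bundle cannot violate it. The paper phrases the construction as "reallocating to $N'$ the edges incident to $N'$", which is exactly your restriction-and-glue argument.
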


\begin{proof}
    Let $\mathbf{X} = (X_1, \ldots, X_n)$ be an $\EFX_0$ orientation. 
    We construct an implementation $\mathbf{X}^\prime = (X_1^\prime, \ldots, X_n^\prime)$ of $\mathbf{X}^p$ by reallocating to $N^\prime$ the edges that are incident to $N^\prime$ but are allocated to $N \setminus N^\prime$ in $\mathbf{X}$; 
    That is, $X_i^\prime = X_i^p$ for every $a_i \in N^\prime$ and $X_i^\prime = X_i \setminus \bigcup_{a_i \in N^\prime} E_i$. 
    We next show that $\mathbf{X}^\prime$ satisfies the condition in Lemma \ref{lem:orientation:EFX0:characterization} and thus is $\EFX_0$. 
    For every $a_i \in N^\prime$, the condition holds because of the assumption. 
    For every $a_i \in N \setminus N^\prime$, since $\mathbf{X}$ is an $\EFX_0$ orientation, we know $X_i$ satisfies the condition in Lemma \ref{lem:orientation:EFX0:characterization}. Together with the fact that $X_i^\prime \subseteq X_i$, we claim that $X_i^\prime$ also satisfies the condition in Lemma \ref{lem:orientation:EFX0:characterization}. Therefore, $\mathbf{X}^\prime$ satisfies the condition in Lemma \ref{lem:orientation:EFX0:characterization} and thus is an $\EFX_0$ orientation.
\end{proof}

\begin{algorithm}[tb]
\caption{Determining the Existence of an $\EFX_0$ Orientation}
\label{alg:orientation:EFX0}
\KwIn{A chores instance with a graph where $N$ is the vertex set and $M$ is the edge set.}
\KwOut{A boolean $flag$ indicating whether an $\EFX_0$ orientation exists or not, and an $\EFX_0$ orientation $\mathbf{X} = (X_1, \ldots, X_n)$ if $flag$ is true.}
Initialize $X_i \leftarrow \emptyset$ for every $a_i \in N$, $R \leftarrow M$. \\
\While{$R \neq \emptyset$} {
    Let $a_i\in N$ be an arbitrary agent with $R \cap E_i \neq \emptyset$ and initialize $flag \gets false$. \\
    \For{each edge $e \in R \cap E_i$}{
        $(flag, \mathbf{X^{\prime}}, U, R) \gets \mathbf{Push}(\mathbf{X}, e, a_i, R).$ \\
        \If{$flag = true$}{
            $\mathbf{X} \gets \mathbf{X^{\prime}}.$ \\
            $R \gets R \setminus \bigcup_{a_j\in U} E_j$. \\
            break.
        }
    }
    \If{$flag = false$}{
        \Return $(false, null)$.
    }
}
\Return $(true, \mathbf{X})$.
\end{algorithm}

\begin{algorithm}[tb]
\caption{\textbf{Push} Subroutine for Algorithm \ref{alg:orientation:EFX0}}
\label{alg:orientation:EFX0:Push}
\KwIn{A (partial) orientation $\mathbf{X} = (X_1, \ldots, X_n)$, an edge $e$, an agent $a_u$ to whom to push $e$, and the set $R$ of unallocated edges.}
\KwOut{A boolean $flag$, and if $flag$ is true, a (partial) orientation $\mathbf{X^{\prime}} = (X_1^{\prime}, \ldots, X_n^{\prime})$, a set of agents $U$ whose bundles are updated, and the set $R$ of unallocated edges.}
\If{$X_u \neq \emptyset$}{
    \Return $(false, null, null, null)$.
}
Initialize $X_i^{\prime} \leftarrow X_i$ for every $a_i \in N$ and $U\gets \{a_u\}$. \\
$X_u^\prime \gets X_u^\prime \cup \{e\}$, $R \leftarrow R \setminus\{e\}$. \\
\If{$e$ is a dummy to $a_u$}{
    $X_u^\prime \gets X_u^\prime \cup (E_u^{=0} \cap R)$. \\
    $R \leftarrow R \setminus E_u^{=0}$. 
}
\For{each $e_{u, j} \in (E_u \cap R) \setminus X_u^\prime$}{
    $(flag, \mathbf{X}^\prime, U^\prime, R) \gets \mathbf{Push}(\mathbf{X}^\prime, e_{u, j}, a_j, R)$. \\
    \If{$flag = false$}{
        \Return $(false, null, null, null)$.
    }
    $U \gets U \cup U^\prime$.
}
\Return $(true, \mathbf{X}^\prime, U, R)$.
\end{algorithm}

Now we are ready to introduce our algorithms (i.e., Algorithm \ref{alg:orientation:EFX0} and its subroutine Algorithm \ref{alg:orientation:EFX0:Push}). 
Algorithm \ref{alg:orientation:EFX0} primarily consists of a while-loop that each time allocates all adjacent edges of some agents to those agents. 
In the loop body, the algorithm first picks an arbitrary agent $a_i$ who has some unallocated adjacent edges. Then, it calls Algorithm \ref{alg:orientation:EFX0:Push} to check that whether it is feasible to push some adjacent edge $e$ to $a_i$. If yes, the algorithm updates the (partial) orientation according to the return of Algorithm \ref{alg:orientation:EFX0:Push}. Otherwise, the algorithm returns that no orientation is $\EFX_0$. 

The subroutine Algorithm \ref{alg:orientation:EFX0:Push} is a recursive process. 
It first checks whether $a_u$ has received any edge or not. 
If yes, either $a_u$ has received an edge that is a chore for her, or she has received some edges that are all dummies for her and $e$ is a chore for her. 
In both cases, $a_u$ cannot receive $e$ any more which will break the condition in Lemma \ref{lem:orientation:EFX0:characterization}, and thus Algorithm \ref{alg:orientation:EFX0:Push} returns $false$. 
If $a_u$ has not received any edge yet, the algorithm allocates $e$ to her, and if $e$ is a dummy for her, also allocates to her all other unallocated adjacent edges that are dummies for her. Finally, it recursively pushes each other unallocated adjacent edge of $a_u$ to the other endpoint agent of the edge. As long as the algorithm returns $flag=false$ at some recursive call, it will immediately terminate and return $flag=false$ to the outer call.

We first prove the following claim to show how the recursive subroutine works. 

\begin{claim}\label{clm:orientation:EFX0:one_dfs_true}
    For Algorithm \ref{alg:orientation:EFX0:Push}, when $flag$ is $true$, the algorithm returns a (partial) maximal orientation, and each agent not in $U$ does not receive any edge in the process. 
\end{claim}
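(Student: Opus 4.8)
The plan is to prove Claim~\ref{clm:orientation:EFX0:one_dfs_true} by induction on the depth of the recursion in Algorithm~\ref{alg:orientation:EFX0:Push}, establishing both assertions (maximality of the returned partial orientation and the fact that agents outside $U$ are untouched) simultaneously. The key observation is that the subroutine only ever modifies an agent's bundle at the moment that agent is the argument $a_u$ of some recursive call, and a successful call adds exactly $a_u$ (and, in the dummy case, no new agents) to the returned set $U$ before recursing on the neighbors. So I would first argue the ``untouched'' part: the only lines that write to a bundle are the assignments to $X_u^\prime$, which occur only for the current $a_u$, and every such $a_u$ is inserted into $U$ on the line $U \gets \{a_u\}$ or via the union $U \gets U \cup U^\prime$ as the recursion unwinds. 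Hence if $flag$ is ultimately $true$, every agent whose bundle changed lies in $U$, giving the second assertion.

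For maximality, recall Definition~\ref{def:orientation:EFX0:maximal}: I must exhibit the witnessing set $N^\prime$, which I would take to be exactly $U$, and show $\bigcup_{a_i \in U} X_i^p = \bigcup_{a_i \in U} E_i$, i.e.\ every edge incident to an agent in $U$ has been allocated (to one of its endpoints, which must itself be in $U$). The induction hypothesis is that each successful recursive sub-call on $(a_j, e_{u,j})$ returns a set $U^\prime$ that is maximal for its own edges. For the inductive step I would check that after $a_u$ receives $e$ (and its dummy edges, if applicable), the \textbf{for}-loop iterates over precisely the remaining unallocated incident edges $e_{u,j}$ of $a_u$ and pushes each to its other endpoint $a_j$; since the call succeeds only when every such push returns $true$, every incident edge of $a_u$ ends up allocated, and every endpoint $a_j$ receiving such an edge is folded into $U$ via $U \gets U \cup U^\prime$. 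Combining $a_u$'s fully-allocated incidence with the maximality of each child $U^\prime$ over the union, no edge incident to $U$ can remain unallocated, which is exactly maximality.

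The main subtlety I anticipate concerns the bookkeeping of the shared set $R$ across recursive calls: because $R$ is threaded through and mutated by each sub-call, I must be careful that an edge counted as ``allocated'' by one branch is not re-examined or left dangling in another. The cleanest way to handle this is to maintain, as part of the induction hypothesis, the invariant that at every return with $flag = true$ the edges removed from $R$ are exactly those placed into some $X_i^\prime$ with $a_i \in U$, so that ``incident to $U$ but unallocated'' and ``incident to $U$ but still in $R$'' coincide. I would verify this invariant is preserved by each of the three write-sites (the single-edge assignment $X_u^\prime \gets X_u^\prime \cup \{e\}$ with $R \gets R \setminus \{e\}$, the dummy-batch assignment, and the recursive calls), which is routine once stated. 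The remaining assertion I expect to be genuinely easy: a short argument that the returned orientation is partial (some incident edges of $N \setminus U$ may remain in $R$) rather than complete, which follows immediately because nothing outside $U$ is modified.

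\medskip
\noindent\textbf{Main obstacle.} The one place requiring care is the dummy-batch step: when $e$ is a dummy for $a_u$, the algorithm also sweeps all of $E_u^{=0} \cap R$ into $a_u$'s bundle without recursing on those edges. I must confirm these dummy edges do not violate maximality — specifically, that their \emph{other} endpoints need not be added to $U$, since a dummy edge is incident to $a_u \in U$ and is already allocated, so it contributes nothing unallocated to $\bigcup_{a_i \in U} E_i$ and its far endpoint is irrelevant to $U$'s maximality. Making this precise, while keeping the $R$-invariant intact across the dummy sweep, is the crux of the argument.
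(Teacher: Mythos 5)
Your proposal is correct and follows essentially the same route as the paper's proof: the paper likewise observes that bundles are only ever written for the current $a_u$, that every such $a_u$ is recorded in $U$, and that each remaining incident edge of $a_u$ is either kept by $a_u$ or pushed to its other endpoint (who then joins $U$), yielding $\bigcup_{a_i\in U} E_i = \bigcup_{a_i\in U} X_i$. Your version merely makes the induction on recursion depth and the invariant relating $R$ to the allocated edges explicit, where the paper argues informally from a simplified description of the recursion; the handling of the dummy-batch step that you flag as the crux is indeed harmless for exactly the reason you give.
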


\begin{proof}
    Notice that Algorithm \ref{alg:orientation:EFX0:Push} is a recursive DFS process. We analyze from the view of the outer call from Algorithm \ref{alg:orientation:EFX0}. Since we assume the return $flag = true$, and anytime a recursive call to Algorithm \ref{alg:orientation:EFX0:Push} returns $flag = false$, the algorithm will terminate recursively and return $flag = false$ to the outer call, we may ignore the first 3 lines in Algorithm \ref{alg:orientation:EFX0:Push}. Then, the algorithm can be simplified to the following process.
    \begin{enumerate}
        \item Allocate $e$ to $a_u$, and if $e$ is a dummy for $a_u$, also allocate to her all other unallocated adjacent edges that are dummies for her.
        \item Recursively push each of the remaining adjacent edges of $a_u$ to the other endpoint agents of the edges.
        \item Record all pushed agents in the set $U$.
    \end{enumerate}
    We can see that for every agent $a_u$ in the output set $U$, all her adjacent edges are either allocated to $a_u$ or allocated to the other vertex of the edge by recursively push. Therefore, if the algorithm terminates with $flag = true$, for the output set $U$, $\bigcup_{i\in U} E_i \subseteq \bigcup_{i\in U} X_i$. Since player $i$ will only be allocated with her adjacent edge, this actually means that $\bigcup_{i\in U} E_i = \bigcup_{i\in U} X_i$. Therefore, the algorithm returns a maximal orientation with respect to $U$. On the other side, for every vertex not in $U$, the algorithm does not touch them so they are not updated.
\end{proof}

Moreover, we have the following corollary that guarantees the input of Algorithm \ref{alg:orientation:EFX0:Push} satisfies the requirement.

\begin{corollary}\label{cor:orientation:EFX0:dfs_input}
    Every time when Algorithm \ref{alg:orientation:EFX0} calls Algorithm \ref{alg:orientation:EFX0:Push}, the input partial orientation is guaranteed to be maximal with respect to some set $T$, and each vertex other than $T$ is allocated with no edge in the input orientation.
\end{corollary}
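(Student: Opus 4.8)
The plan is to prove the corollary by induction on the sequence of outer-level calls that Algorithm~\ref{alg:orientation:EFX0} makes to Algorithm~\ref{alg:orientation:EFX0:Push}, carrying a single loop invariant that couples the current partial orientation $\mathbf{X}$, the set $R$ of unallocated edges, and a ``processed'' agent set $T$. Concretely, I would maintain that at the moment of every call the orientation $\mathbf{X}$ is maximal with respect to $T$ in the sense of Definition~\ref{def:orientation:EFX0:maximal}, every agent outside $T$ has an empty bundle, and $R$ is exactly the set of edges not incident to $T$, i.e.\ $R = M \setminus \bigcup_{a_i \in T} E_i$. The first two clauses of this invariant are precisely the statement of the corollary, and the third is the bookkeeping glue that makes the induction go through. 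The base case is immediate: the very first call is issued with the all-empty orientation and $R = M$, which satisfies the invariant with $T = \emptyset$, since maximality holds vacuously and every agent has an empty bundle.

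For the inductive step I would split on the outcome of a call to Push. If the call returns $flag = false$, then Algorithm~\ref{alg:orientation:EFX0} commits no change to $\mathbf{X}$, $R$, or $T$, so the next call in the same for-loop inherits the invariant unchanged. If the call returns $flag = true$ with agent set $U$, I would invoke Claim~\ref{clm:orientation:EFX0:one_dfs_true}: Push modifies only the agents in $U$ and leaves them maximally oriented among the edges it actually processed out of $R$. I would then set the new processed set to $T' = T \cup U$ and update $R' = R \setminus \bigcup_{a_j \in U} E_j$ exactly as the algorithm does. To re-establish the invariant for $T'$ I would argue three points: (i) the starting agent of the successful Push has $R \cap E_i \neq \emptyset$, which by the invariant $R = M \setminus \bigcup_{a \in T} E_a$ forces $a_i \notin T$, so the recursion begins at a fresh, empty-bundle agent; (ii) $U \cap T = \emptyset$ and $\mathbf{X}$ stays maximal with respect to $T$, because any edge from an agent of $U$ to an agent of $T$ is incident to $T$ and therefore already absent from $R$, so Push never recurses into $T$ and never disturbs $T$'s bundles; and (iii) every edge incident to $T' = T \cup U$ is oriented within $T'$ --- edges incident to $T$ by the preserved maximality of $T$, and edges incident to $U$ either into $U$ (those Push took out of $R$) or into $T$ (the pre-oriented cross edges). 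Together with $R' = M \setminus \bigcup_{a \in T'} E_a$, this reinstates the full invariant.

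The main obstacle I anticipate is steps (ii)--(iii): reconciling the \emph{local} maximality that Claim~\ref{clm:orientation:EFX0:one_dfs_true} certifies for $U$ (which only accounts for edges that were in $R$ when Push ran) with the \emph{global} maximality required for $T \cup U$ (which must also account for the cross edges between $U$ and the already-processed set $T$). The crux is the invariant $R = M \setminus \bigcup_{a \in T} E_a$: it is exactly what guarantees that Push can neither revisit an agent of $T$ nor needs to, so that the cross edges are already correctly oriented into $T$ and the union $T \cup U$ inherits maximality with no reallocation. Establishing this coupling between $R$ and $T$ at initialization and propagating it across both the success and failure branches, as well as across the inner for-loop that may issue several failing calls before a successful one, is the delicate bookkeeping that the argument turns on.
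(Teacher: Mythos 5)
Your proposal is correct and follows essentially the same route as the paper: an induction over the sequence of calls that applies Claim~\ref{clm:orientation:EFX0:one_dfs_true} to each successful Push and takes $T$ to be the union of the returned sets $U$ over all successful calls. Your explicit invariant $R = M \setminus \bigcup_{a_i \in T} E_i$ and the careful treatment of cross edges between $U$ and $T$ simply spell out bookkeeping that the paper's terser proof leaves implicit.
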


\begin{proof}
    This corollary follows directly from Claim \ref{clm:orientation:EFX0:one_dfs_true} and the way Algorithm \ref{alg:orientation:EFX0} updates the orientation $\mathbf{X}$. At the first time, Algorithm \ref{alg:orientation:EFX0} calls Algorithm \ref{alg:orientation:EFX0:Push} with an empty orientation. After the $t$-th call to Algorithm \ref{alg:orientation:EFX0:Push}, if the call returns $flag_t=false$, the algorithm will not update the orientation. Otherwise, the orientation is updated by some way to allocate all adjacent edges of $U_t$ to the vertex set $U_t$. Then, the new orientation allocates all adjacent edges of $\bigcup_{i\in[t], flag_i = true} U_i$ to the vertex set $\bigcup_{i\in[t], flag_i = true} U_i$, and for other vertex in $N \setminus \bigcup_{i\in[t], flag_i = true} U_i$, they are not allocated with any edge.
\end{proof}

The above actually corollary characterize the behavior of our algorithm when an $\EFX_0$ orientation does exist for the input instance. To see it more clearly, we turn to prove the main theorem for chores instances.

\begin{proof}[Proof of Theorem \ref{thm:chores_orien:EFX0}]
    We show the following three properties for the first time Algorithm \ref{alg:orientation:EFX0} runs the loop body, i.e., Line 3 to Line 14.
    \begin{enumerate}
        \item If the algorithm returns $flag=false$ at Line 13, there does not exist an $\EFX_0$ orientation for the input instance.
        \item Otherwise, the algorithm allocate all edges in $\bigcup_{i\in U} E_i$ to $U$ and, for any $i \in U$, the condition in Lemma \ref{lem:orientation:EFX0:characterization} is satisfied.
        \item This one run of the loop body uses polynomial time.
    \end{enumerate}
    If we are able to prove these three properties, together with Lemma \ref{lem:orientation:EFX0:subproblem},  we can use mathematical induction to prove the theorem. This is because for every run of the loop body except for the last, the algorithm calculate a maximal orientation with respect to some set $U$, and leaves the following run of the loop body a subproblem that consists of vertices $N \setminus U$ and edges $M \setminus \bigcup_{i\in U} E_i$. According to Lemma \ref{lem:orientation:EFX0:subproblem}, this orientation is implementable if there exists an $\EFX_0$ orientation for the original input instance. For the last run of the loop body, the algorithm either finds that there actually does not exist an $\EFX_0$ orientation for the subproblem, which means there does not exist an $\EFX_0$ orientation for the original problem, or completely solve the subproblem since $M = \emptyset$ after this run of loop body and returns an $\EFX_0$ orientation according to the characterization of Lemma \ref{lem:orientation:EFX0:characterization}. The polynomial run time is trivial since there are at most $n$ runs of the loop body and each of them runs in polynomial time.

    The following proof will focus on showing properties 1 to 3. For Property (1), since in Line 4 the algorithm enumerates all possible edge $e\in E_i$ that vertex $i$ can obtain, we only have to show that if in Line 5, Algorithm \ref{alg:orientation:EFX0:Push} returns $flag=false$, there does not exist an $\EFX_0$ allocation that assigns $e$ to $i$. Notice that Algorithm \ref{alg:orientation:EFX0:Push} is actually a necessary condition checker that finds if allocating $e$ to $i$, which other edges must be allocated to who such that the condition in Lemma \ref{lem:orientation:EFX0:characterization} is satisfied. Specifically, if $v_i(e) < 0$, agent $i$ cannot receive any other edge, so each remaining adjacent edge must be allocated to the other vertex of the edge. If $v_i(e) = 0$, agent $i$ can take any other adjacent edge $e'$ with $v_i(e') = 0$. The fact that allocating all such edge to $i$ is (weakly) optimal can be shown by an argument similar to the proof of Lemma \ref{lem:orientation:EFX0:subproblem}. That is, for any $\EFX_0$ orientation and some vertex $i$ that gets zero utility, there exists another $\EFX_0$ orientation that allocate all edge in $E_i^{=0}$ to vertex $i$, since other agents receive (weakly) less edges in this new orientation and condition in Lemma \ref{lem:orientation:EFX0:characterization} will not be broken. Therefore, the way the algorithm allocate edges is aligned with some $\EFX_0$ orientation, given that an $\EFX_0$ orientation exists. Notice that after this algorithm allocate some edge to some agent, this agent cannot receive any any more edges. Therefore, if at some time of the recursive call of Algorithm \ref{alg:orientation:EFX0:Push}, when it tries to allocate edge $e$ to $u$ and find that $u$ is already allocated with some edges, the algorithm finds the assumption that an $\EFX_0$ orientation exists is wrong, so it returns $flag=false$. 

    The other two properties are simpler. Property 2 directly follows from Claim \ref{clm:orientation:EFX0:one_dfs_true}. Properties 3 holds because one call to Algorithm \ref{alg:orientation:EFX0:Push} uses a time linear in the size of return $U$, and Algorithm \ref{alg:orientation:EFX0} calls Algorithm \ref{alg:orientation:EFX0:Push} $O(deg(i))$ times, where $i$ is the vertex selected in Line 3 and $deg(i)$ is the degree of vertex $i$. Therefore, one run of the loop body uses $O(|U|\max_{i\in N} deg(i))$ time. We remark that the total running time of Algorithm \ref{alg:orientation:EFX0} is $O(n \max_{i\in N} deg(i))$.
\end{proof}

\section{EFX Orientations for Mixed Instances}
\label{sec:mixed:ori}
In this section, we elaborate on EFX orientations for mixed instances. 
Firstly, we have the following proposition. 

\begin{proposition}\label{pro:orientation:neg_instance}
    There exist graphs for which no orientation satisfies any of the four notions of EFX. 
\end{proposition}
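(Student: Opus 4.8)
The plan is to reduce all four notions to the single weakest one and then exhibit a concrete mixed instance that already fails that weakest notion. First I would record the elementary observation that, among $\EFX^0_0$, $\EFX^0_-$, $\EFX^+_0$, and $\EFX^+_-$, the notion $\EFX^+_-$ is the weakest: each of its two conditions quantifies over a strictly smaller set of items than the corresponding condition in any of the other three notions. Indeed, the superscript-$0$ versions additionally range over the dummy items in the envied agent's bundle, and the subscript-$0$ versions additionally range over the dummy items in the envious agent's own bundle, so each of $\EFX^0_0$, $\EFX^0_-$, $\EFX^+_0$ imposes a superset of the constraints of $\EFX^+_-$. Hence any orientation satisfying any one of the four notions is in particular $\EFX^+_-$, and it suffices to construct a single instance admitting no $\EFX^+_-$ orientation.

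For the construction I would take a pure chores instance, which is a special case of a mixed instance: a connected simple graph with strictly more edges than vertices in which every edge is a chore for both of its endpoints, i.e., exactly the instance in the example immediately following Proposition~\ref{prop:chores_allo}. The key simplification is that on a chores instance no item is ever a good for anyone, so the first condition of $\EFX^+_-$ (removing a strictly positive item from the envied agent's bundle) is vacuous; consequently $\EFX^+_-$ collapses to its second condition, which is precisely the definition of $\EFX_-$. Thus an $\EFX^+_-$ orientation on this instance would be an $\EFX_-$ orientation, and the cited example already shows that none exists: by the pigeonhole principle some agent receives at least two edges, each a chore for her, so after deleting any single chore from her own bundle her value stays strictly negative while she values every neighbor's bundle at $0$, violating the $\EFX_-$ (equivalently, the $\EFX^+_-$) condition.

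Combining the two steps finishes the proof: the exhibited graph has no $\EFX^+_-$ orientation, and since each of the other three notions implies $\EFX^+_-$, it has no orientation satisfying any of the four notions.

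I expect the only real subtlety to lie in the first step, namely stating the implication hierarchy cleanly and justifying that $\EFX^+_-$ is implied by all the others. Once that is in place the counterexample itself is essentially free, since it is inherited verbatim from the chores analysis; the alternative of verifying the $\EFX^+_-$ conditions directly on a concrete small graph such as the diamond $K_4$ minus an edge (four vertices, five edges) is routine but would merely duplicate the pigeonhole argument already given.
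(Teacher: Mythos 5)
Your proposal is correct and follows essentially the same route as the paper: the paper's proof uses exactly this construction (a connected graph with more edges than vertices, every edge a chore for both endpoints) and the same pigeonhole argument, handling all four notions at once since each permits removing a chore from the envious agent's own bundle. Your explicit reduction to the weakest notion $\EFX^+_-$ is a harmless reorganization of the same argument.
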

\begin{proof}
    Consider a connected graph where there are more edges than vertices and each edge is a chore for both its endpoint agents. 
    Since each edge must be allocated to one of its endpoint agents, there exists an agent who receives at least two chores in any orientation. 
    Even after removing one chore from her own bundle, this agent still receives a negative value and envies the agents who are not her neighbors. 
\end{proof}

Due to this negative result, we turn to studying the complexity of determining the existence of EFX orientations. 
The result by \cite{christodoulou2023fair} (see Theorem 2 in their paper) directly implies that determining the existence of $\EFX^0_-$ orientations is NP-complete. 
In the graphs constructed in their reduction, each edge is a good for both its endpoint agents. 
For such graphs, any $\EFX^0_-$ orientation is also $\EFX^0_0$.
Therefore, we have the following corollary. 

\begin{corollary}\label{coro:orientation:EFX00_EFX0-}
    Determining whether an $\EFX^0_0$ or $\EFX^0_-$ orientation exists or not is NP-complete. 
\end{corollary}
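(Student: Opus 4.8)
The plan is to prove NP-completeness in the two standard steps: (i) membership in NP, and (ii) NP-hardness via a reduction from the $\EFX^0$-orientation existence problem for goods instances, which \cite{christodoulou2023fair} proved to be NP-complete (their Theorem~2). Membership is immediate: an orientation is a certificate of polynomial size (one endpoint assignment per edge), and given such a certificate we can verify the defining inequalities of $\EFX^0_0$ (resp.\ $\EFX^0_-$) for every ordered pair of agents $a_i, a_j$ and every candidate removable edge $e \in X_i \cup X_j$ in polynomial time using the value oracle. Hence both decision problems lie in NP.

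For hardness I would use an essentially \emph{identity} reduction. The graphs produced by Christodoulou et al.'s reduction are goods instances in which every edge is a strict good for both of its endpoints; I would simply regard each such instance as a mixed instance, since goods instances are a special case. The key structural observation is that in any orientation $\mathbf{X}$ of such a graph, each bundle $X_i$ consists solely of edges incident to $a_i$, and every such incident edge is a strict good for $a_i$. Consequently, removing any $e \in X_i$ strictly decreases $a_i$'s value, so $v_i(X_i \setminus \{e\}) < v_i(X_i)$ for all $e \in X_i$.

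I would then argue that this makes the ``own-bundle'' conditions (the second bullet) of both $\EFX^0_0$ and $\EFX^0_-$ vacuous: no $e \in X_i$ satisfies $v_i(X_i \setminus \{e\}) \ge v_i(X_i)$, and \emph{a fortiori} none satisfies the strict inequality $v_i(X_i \setminus \{e\}) > v_i(X_i)$. Thus both notions collapse to their first condition, which is exactly the defining condition of $\EFX^0$. It follows that, on these instances, an orientation is $\EFX^0_0$ if and only if it is $\EFX^0_-$ if and only if it is $\EFX^0$; in particular, an $\EFX^0_0$ (resp.\ $\EFX^0_-$) orientation exists if and only if an $\EFX^0$ orientation exists. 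Since the reduction maps each instance to itself and preserves the YES/NO answer for both notions, deciding the existence of $\EFX^0_0$ or $\EFX^0_-$ orientations is NP-hard, and combined with NP membership, NP-complete.

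The argument has no deep obstacle; the one point requiring care — and the closest thing to a pitfall — is verifying that the collapse of the three notions genuinely relies on the \emph{absence of dummy or chore incident edges} in Christodoulou et al.'s construction. If some incident edge were merely a dummy for one endpoint $a_i$, then that edge could sit in $X_i$ with $v_i(X_i \setminus \{e\}) = v_i(X_i)$, making the own-bundle condition of $\EFX^0_0$ nonvacuous and potentially breaking the equivalence between the notions. Because their construction uses only strict goods, this situation never arises, so the reduction goes through cleanly and the corollary follows.
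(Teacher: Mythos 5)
Your proposal is correct and follows essentially the same route as the paper: the paper likewise invokes Theorem 2 of Christodoulou et al., notes that in their constructed graphs every edge is a good for both endpoints, and concludes that the own-bundle condition is vacuous so that $\EFX^0_0$, $\EFX^0_-$, and $\EFX^0$ orientations coincide on these instances. Your additional remarks on NP membership and on why the collapse depends on the absence of dummy/chore incident edges are correct but are left implicit in the paper.
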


\subsection{$\EFX^+_-$ Orientations}
In the following, we prove the below theorem for $\EFX^+_-$. 

\begin{theorem}\label{thm:orientation:EFX+-}
    Determining whether an $\EFX^+_-$ orientation exists or not is NP-complete, even for additive valuations\footnote{Valuation $v_i$ is additive if $v_i(S) = \sum_{e \in S} v_i(e)$ for any $S \subseteq M$.}. 
\end{theorem}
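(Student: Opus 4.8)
The plan is to prove NP-completeness in the usual two parts. Membership in NP is immediate: given a candidate orientation $\mathbf{X}$, for every ordered pair $(a_i,a_j)$ with $v_i(X_j)>v_i(X_i)$ I would check the two $\EFX^+_-$ conditions by comparing, for each incident edge $e$ that is strictly a good in $X_j$ (resp.\ strictly a chore in $X_i$), the quantity $v_i(X_i)$ with $v_i(X_j\setminus\{e\})$ (resp.\ $v_i(X_i\setminus\{e\})$ with $v_i(X_j)$). For additive valuations each such comparison is linear in $|M|$, so the whole check runs in time polynomial in $n$ and $|M|$.

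For hardness I would reduce from $(3,B2)$-SAT, where every clause has exactly three literals and every variable occurs exactly twice positively and twice negatively; this problem is NP-complete. Given a formula $\varphi$, I would build a graph $G$ with additive valuations whose orientations encode truth assignments, guided by the following principle about chores under $\EFX^+_-$: if an agent $a_i$ holds two incident edges that are both chores for her and values every neighbour's bundle at $0$, then after deleting one chore her bundle is still negative while the neighbour is worth $0$, so she envies and $\EFX^+_-$ fails. Hence, in the relevant parts of $G$, each agent may keep at most one chore, which is exactly the local constraint I will exploit (it is the orientation-level analogue of Proposition~\ref{pro:orientation:neg_instance}).

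The gadgets: for each variable $x$ I would introduce a truth edge whose two possible orientations represent $x=\text{true}$ and $x=\text{false}$, together with a small sub-gadget that (i) leaves both orientations locally feasible and (ii) propagates the chosen value to the four literal occurrences of $x$, forcing the two positive-occurrence agents into one ``loaded/free'' state and the two negative-occurrence agents into the opposite state (here propagation uses chore edges as forcing links, each obeying the at-most-one-chore constraint). For each clause $C=\ell_1\vee\ell_2\vee\ell_3$ I would attach a clause gadget built from chore edges arranged so that, counted together with the three literal connections, the gadget behaves like an over-loaded all-chore component (more chore edges than vertices) \emph{unless} at least one literal connection is in the ``satisfied'' state, in which case one chore can be routed out and the component becomes orientable. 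Thus the clause gadget admits a locally $\EFX^+_-$ orientation iff at least one of its literals is true. All good- and chore-values would be chosen (additively) so that the only binding $\EFX^+_-$ constraints are the intended ones; in particular, every good-envy is killed because removing the single incident good drops the envied bundle to value $0$.

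Correctness has the two expected directions. From a satisfying assignment I would orient each truth edge accordingly, route each clause's escaping chore to one of its true literals' agents, and verify that every agent then holds at most one chore and that all residual envy is eliminated after the $\EFX^+_-$ deletion, so the orientation is $\EFX^+_-$. Conversely, from any $\EFX^+_-$ orientation I would read the truth values off the truth edges and argue, clause by clause, that the chore-trap could only have been oriented validly if some literal connection was satisfied, yielding a satisfying assignment; the variable gadget guarantees the read-off values are globally consistent. I expect the main obstacle to be the design of the clause chore-trap and the accompanying forced-orientation arguments: one must make the gadget provably infeasible when the clause is unsatisfied, make the single escaping chore provably harmless (create no new envy) when it is satisfied, and choose additive values so that no unintended envy or $\EFX^+_-$ violation appears at the many incidences between gadgets. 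Getting these value choices together with the simple-graph propagation — copying one bit to four occurrences without multi-edges — to cohere is the delicate, constraint-juggling part of the proof.
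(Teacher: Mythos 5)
Your high-level plan coincides with the paper's on the two points that are easy: membership in NP is the routine check you describe, the reduction source is the same $(3,B2)$-SAT problem, and the structural engine is the same observation that an agent holding two chores while valuing every neighbour's bundle at $0$ violates $\EFX^+_-$, so chore edges can be used as forcing links. (The paper realizes this with a triangle of chore edges on three auxiliary agents $a_1^{\Delta},a_2^{\Delta},a_3^{\Delta}$, which pins one triangle chore on each of them and thereby forces every variable and clause vertex to absorb its own value-$(-1)$ edge to $a_1^{\Delta}$.) But the proposal never exhibits the gadgets; both the variable-propagation sub-gadget and the clause gadget are described only by the properties you would like them to have, and you yourself flag their design as the unresolved ``delicate, constraint-juggling part.'' Since the entire content of the hardness direction is precisely the construction and the forced-orientation arguments, this is a genuine gap, not a presentational one.

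Moreover, the one concrete mechanism you do sketch for clauses --- an all-chore trap with more chores than vertices that becomes orientable only by ``routing one chore out'' to a true literal's agent --- is in tension with your own forcing principle: if the propagation links already load the literal agents with chores, the escaping clause chore would give a true literal's agent a second chore, which is only tolerable if she also holds enough goods, and that value-balancing is exactly what you defer. The paper's clause gadget inverts the mechanism and avoids this: each clause agent $a_j^C$ is \emph{forced} to take exactly one $-1$ chore (via the triangle), her three literal edges are goods of value $1$ to both endpoints, and the variable gadget (a value-$2$ edge $(a_i^T,a_i^F)$) forces the vertex on the false side of $x_i$ to hoard both of its clause edges --- otherwise she envies her partner even after deleting her own chore. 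Consequently $a_j^C$ can reach a bundle that survives the $\EFX^+_-$ test if and only if at least one of her literal edges remains available, i.e., iff the clause has a true literal; and because the three literals of a clause are distinct, no other agent ever holds two of her value-$1$ edges, so all residual envies are killed by removing a single good. To turn your sketch into a proof you would need to supply gadgets at this level of concreteness together with the case analysis showing no unintended $\EFX^+_-$ violations arise at the gadget interfaces.
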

To prove Theorem \ref{thm:orientation:EFX+-}, we reduce from $(3, B2)$-SAT problem to the $\EFX^+_-$ orientation problem. 
A $(3, B2)$-SAT instance contains a Boolean formula in conjunctive normal form consisting of $n$ variables $\{x_i\}_{i \in [n]}$ and $m$ clauses $\{C_j\}_{j \in [m]}$. 
Each variable appears exactly twice as a positive literal and exactly twice as a negative literal in the formula, and each clause contains three distinct literals. 
Determining whether a $(3, B2)$-SAT instance is satisfiable or not is NP-complete \citep{DBLP:journals/dam/BermanKS07}. 

Our reduction uses a gadget to ensure that a specific agent must receive a chore if the orientation is $\EFX^+_-$. 
One such gadget is shown in Figure \ref{fig:EFX+-:must_negative}. 
In this example, agent $a_i$ must receive $(a_i, a_1^{\Delta})$ if the orientation is $\EFX^+_-$. 
Otherwise, one of the other three agents must receive at least two chores and envy $a_i$ even after removing one chore. 

\begin{figure}[tb]
    \centering
    \includegraphics[width=0.23\columnwidth]{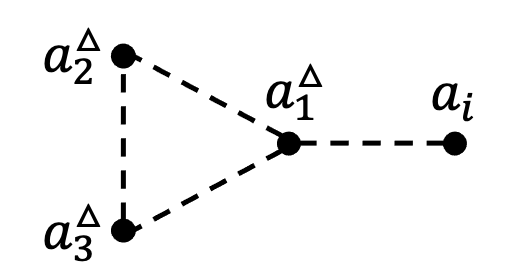}
    \caption{A gadget where agent $a_i$ must receive $(a_i, a_1^{\Delta})$ if the orientation is $\EFX^+_-$. Each dashed edge is a chore for both its endpoint agents.} 
    \label{fig:EFX+-:must_negative}
\end{figure}

Given a $(3, B2)$-SAT instance $(\{x_i\}_{i \in [n]}, \{C_j\}_{j \in [m]})$, we construct a graph as follows:
\begin{itemize}
    \item For each variable $x_i$, create two vertices $a^T_i, a^F_i$ and one edge $(a^T_i, a^F_i)$ with a value of 2 to both $a^T_i$ and $a^F_i$. 
    \item For each clause $C_j$, create one vertex $a_j^C$. 
    Besides, if $C_j$ contains a positive literal $x_i$, create one edge $(a_j^C, a^T_i)$ with a value of 1 to both  $a_j^C$ and $a^T_i$. 
    If $C_j$ contains a negative literal $\neg x_i$, create one edge $(a_j^C, a^F_i)$ with a value of 1 to both  $a_j^C$ and $a^F_i$. 
    \item Create three vertices $a_1^{\Delta}, a_2^{\Delta}, a_3^{\Delta}$ and three edges $(a_1^{\Delta}, a_2^{\Delta}), (a_2^{\Delta}, a_3^{\Delta}), (a_1^{\Delta}, a_3^{\Delta})$. 
    Besides, for each $i \in [n]$, create two edges $(a_i^T, a_{1}^{\Delta})$ and $(a_i^F, a_{1}^{\Delta})$. 
    For each $j \in [m]$, create one edge $(a_j^C, a_{1}^{\Delta})$. 
    Each of these edges has a value of $-1$ to both its endpoint agents. 
    \item Each vertex has an additive valuation.
\end{itemize}

To visualize the above reduction, we show the graph constructed from the formula $(x_1 \vee x_2 \vee x_3) \wedge (x_1 \vee x_2 \vee \neg x_3) \wedge (\neg x_1 \vee \neg x_2 \vee \neg x_3) \wedge (\neg x_1 \vee \neg x_2 \vee x_3)$ in Figure \ref{fig:EFX+-:reduction}. 

\begin{figure}[tb]
    \centering
    \includegraphics[width=0.48\columnwidth]{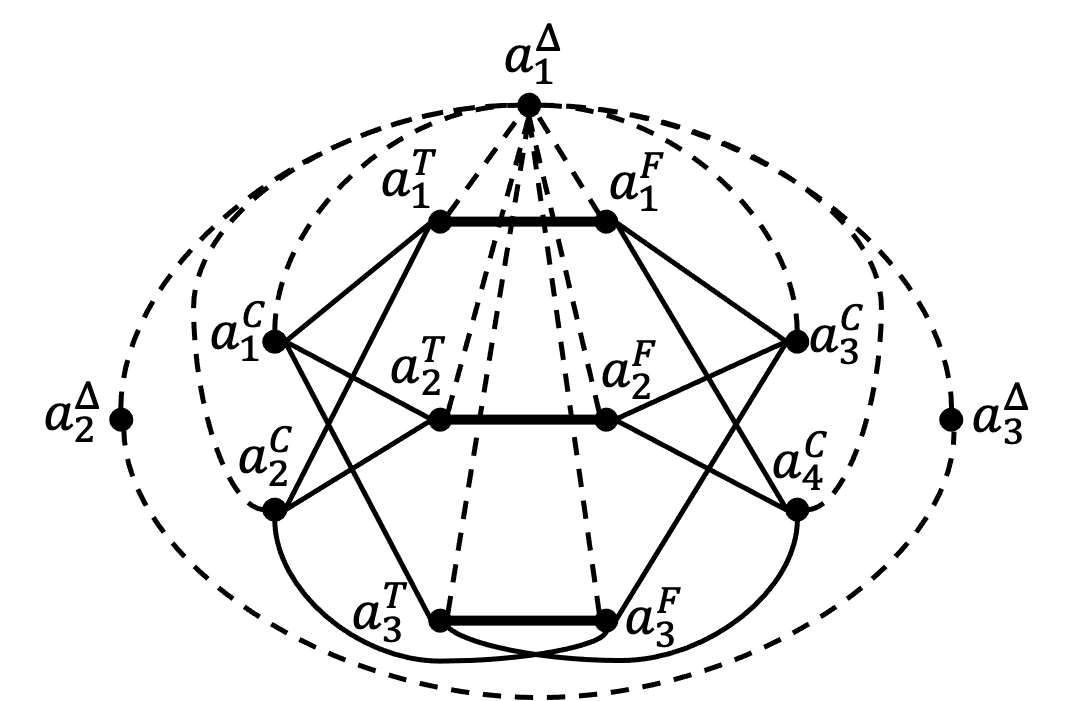}
    \caption{The graph constructed from the formula $(x_1 \vee x_2 \vee x_3) \wedge (x_1 \vee x_2 \vee \neg x_3) \wedge (\neg x_1 \vee \neg x_2 \vee \neg x_3) \wedge (\neg x_1 \vee \neg x_2 \vee x_3)$, where each edge has the same value to both its endpoint agents, each bold solid edge has a value of $2$, each non-bold solid edge has a value of $1$ and each dashed edge has a value of $-1$.} 
    \label{fig:EFX+-:reduction}
\end{figure}

We now prove that a $(3, B2)$-SAT instance is satisfiable if and only if the constructed graph has an $\EFX_-^+$ orientation. 

\begin{proof}[Proof of Theorem \ref{thm:orientation:EFX+-}]
    For ease of presentation, for each variable $x_i$, we denote by $C_{j^{i, T, 1}}, C_{j^{i, T, 2}}$ the two clauses that contain the positive literal $x_i$ and by $C_{j^{i, F, 1}}, C_{j^{i, F, 2}}$ the two clauses that contain the negative literal $\neg x_i$. 
    
    For one direction, we assume that the $(3, B2)$-SAT instance has a satisfying assignment and use the assignment to create an $\EFX_-^+$ orientation as follows: 
    \begin{itemize}
        \item Allocate $(a_1^{\Delta}, a_2^{\Delta})$ to $a_1^{\Delta}$, $(a_2^{\Delta}, a_3^{\Delta})$ to $a_2^{\Delta}$, and $(a_3^{\Delta}, a_1^{\Delta})$ to $a_3^{\Delta}$. 
        Allocate each other edge that is incident to $a_1^{\Delta}$ to the endpoint that is not $a_1^{\Delta}$. 
        \item For each variable $x_i$ that is set to True, allocate $(a_i^T, a_i^F)$ to $a_i^T$, 
        $(a_i^T, a_{j^{i, T, 1}}^C)$ to $a_{j^{i, T, 1}}^C$, $(a_i^T, a_{j^{i, T, 2}}^C)$ to $a_{j^{i, T, 2}}^C$, and $(a_i^F, a_{j^{i, F, 1}}^C), (a_i^F, a_{j^{i, F, 2}}^C)$ to $a_i^F$.  
        \item For each variable $x_i$ that is set to False, allocate $(a_i^T, a_i^F)$ to $a_i^F$, 
        $(a_i^F, a_{j^{i, F, 1}}^C)$ to $a_{j^{i, F, 1}}^C$, $(a_i^F, a_{j^{i, F, 2}}^C)$ to $a_{j^{i, F, 2}}^C$, and $(a_i^T, a_{j^{i, T, 1}}^C), (a_i^T, a_{j^{i, T, 2}}^C)$ to $a_i^T$.  
    \end{itemize}
    Next, we show that the above orientation is $\EFX_-^+$. 
    For agents $a_1^{\Delta}, a_2^{\Delta}, a_3^{\Delta}$, each of them receives one edge with a value of $-1$ and all edges received by other agents have non-positive values to them. 
    After removing the edge from their bundles, they do not envy others. 
    For each variable $x_i$ that is set to True, agent $a_i^T$ does not envy others since she receives a total value of 1 and each of her incident edges that she does not receive has a value of 1.  
    Agent $a_i^F$ receives three edges with values of $1, 1, -1$, respectively. 
    The only incident edge that she does not receive is $(a_i^T, a_i^F)$, which is allocated to $a_i^T$ and has a value of 2. 
    After removing the edge with a value of $-1$ from her own bundle or $(a_i^T, a_i^F)$ from $a_i^T$'s bundle, $a_i^F$ does not envy $a_i^T$. 
    We have an analogous argument for each variable that is set to False. 
    It remains to consider the agents that correspond to clauses. 
    Since the assignment is satisfying, each clause contains at least one literal that is evaluated to True. 
    This implies that each agent $a_j^C$ receives at least one edge with a value of $1$. 
    For example, if the clause $C_j$ contains a positive literal $x_i$ that is evaluated to True, $a_j^C$ receives the edge $(a_i^T, a_j^C)$. 
    Since each of $a_j^C$'s incident edges that she does not receive has a value of 1, $a_j^C$ does not envy other agents after removing the edge with a value of $-1$ from her own bundle or the edge with a value of $1$ from other agents' bundles. 
    
    For the other direction, we assume that the constructed graph has an $\EFX_-^+$ orientation and use the orientation to create a satisfying assignment as follows: for each variable $x_i$, if the edge $(a_i^T, a_i^F)$ is allocated to agent $a_i^T$, then set $x_i$ to True; otherwise, set $x_i$ to False. 
    Next, we show that the assignment is satisfying. 
    First, since the orientation is $\EFX_-^+$, each agent that corresponds to a variable or a clause must receive the edge between herself and $a_1^{\Delta}$ that has a value of $-1$. 
    For each variable $x_i$, if the edge $(a_i^T, a_i^F)$ is allocated to agent $a_i^T$, both $(a_i^F, a_{j^{i, F, 1}}^C)$ and $(a_i^F, a_{j^{i, F, 2}}^C)$ must be allocated to agent $a_i^F$. 
    Otherwise, $a_i^F$ will envy $a_i^T$ even after removing $(a_i^F, a_1^{\Delta})$ from her own bundle. 
    For a similar reason, if the edge $(a_i^T, a_i^F)$ is allocated to $a_i^F$, both $(a_i^T, a_{j^{i, T, 1}}^C)$ and $(a_i^T, a_{j^{i, T, 2}}^C)$ must be allocated to $a_i^T$. 
    For each clause $C_j$, agent $a_j^C$ must receive at least one edge with a value of 1. 
    Otherwise, $a_j^C$ will envy the agents who receive her incident edges that have a value of 1 even after removing the edge with a value of $-1$ from her own bundle. 
    This implies that each clause has a literal that is evaluated to True. 
\end{proof}

Notice that in the graphs constructed in the above reduction, each edge has non-zero values to both its endpoint agents. 
For such graphs, an orientation is $\EFX_0^+$ if and only if it is $\EFX_-^+$, since no agent receives an edge with a value of zero. 
Therefore, the hardness of determining the existence of $\EFX_-^+$ orientations also applies to $\EFX_0^+$ orientations. 

\begin{corollary}
\label{coro:orientation:EFX+0}
    Determining whether an $\EFX^+_0$ orientation exists or not is NP-complete, even for additive valuations. 
\end{corollary}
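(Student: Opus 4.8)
The plan is to reuse verbatim the reduction from $(3, B2)$-SAT constructed for Theorem \ref{thm:orientation:EFX+-}, and to argue that on exactly these graphs the notions $\EFX^+_-$ and $\EFX^+_0$ collapse onto one another. First I would record the structural feature of the construction that makes this work: every edge created in the reduction (the variable edges of value $2$, the clause edges of value $1$, and the $a_1^{\Delta}$-edges and triangle edges of value $-1$) has a \emph{non-zero} value to each of its two endpoint agents. Since in any orientation each agent receives only edges incident to her, no agent can ever hold a dummy edge in her own bundle; that is, $X_i \cap E_i^{=0} = \emptyset$ for every $a_i \in N$.

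Next I would compare the two definitions on such an instance. The first condition of $\EFX^+_0$ and of $\EFX^+_-$ (the one concerning an item removed from the envied agent's bundle) is literally identical, so it suffices to inspect the second condition, about removing an item $e \in X_i$ from the envious agent's own bundle. The definitions differ only in the range quantified over: $\EFX^+_0$ considers every $e$ with $v_i(X_i \setminus \{e\}) \ge v_i(X_i)$ (chores and dummies for $a_i$), whereas $\EFX^+_-$ considers only those with $v_i(X_i \setminus \{e\}) > v_i(X_i)$ (strict chores for $a_i$). These two ranges differ precisely on the dummy edges of $X_i$, i.e. on $X_i \cap E_i^{=0}$, which we have just shown is empty. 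Hence the two second conditions quantify over the same set of edges and coincide. Therefore, on every graph produced by the reduction, an orientation is $\EFX^+_0$ if and only if it is $\EFX^+_-$.

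Combining this equivalence with the correctness argument of Theorem \ref{thm:orientation:EFX+-}, I obtain the chain: the $(3, B2)$-SAT instance is satisfiable $\iff$ the constructed graph admits an $\EFX^+_-$ orientation $\iff$ it admits an $\EFX^+_0$ orientation. Since the reduction runs in polynomial time and all valuations in it are additive, this establishes NP-hardness even for additive valuations. For NP-membership, I would note that whether a given orientation is $\EFX^+_0$ can be verified in polynomial time by checking both conditions for every ordered pair of agents. Together these yield the NP-completeness claim.

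As for the main obstacle, there is essentially none beyond bookkeeping: the entire content of the corollary is carried by the observation that the reduction graph is free of dummy edges in allocated bundles, which forces the ``$\ge$ versus $>$'' gap between $\EFX^+_0$ and $\EFX^+_-$ to vanish. The only point warranting care is confirming that this dummy-free property holds for \emph{every} orientation of the graph (not merely the one built from a satisfying assignment), so that the equivalence transfers cleanly in both directions of the reduction.
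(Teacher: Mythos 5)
Your proposal is correct and matches the paper's own argument exactly: the paper also observes that every edge in the Theorem~\ref{thm:orientation:EFX+-} reduction has non-zero value to both endpoints, so no agent can hold a dummy edge in any orientation, making $\EFX^+_0$ and $\EFX^+_-$ coincide on these instances and transferring the hardness. Your additional care about the equivalence holding for every orientation (not just the constructed one) is the right check and is implicit in the paper's one-line justification.
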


\subsection{$\EFX$ Orientations on Simple Graphs}
Due to the above hardness results for general graphs, we also study EFX orientations on some simple graphs such as trees, stars and paths. 
For these simple graphs, $\EFX^+_0$ or $\EFX^+_-$ orientations always exist and can be computed in polynomial time. 
Although $\EFX^0_0$ or $\EFX^0_-$ orientations may not exist, determining their existence is in polynomial time. 

\paragraph{Trees}
We first consider trees and have the following result. 
\begin{theorem}\label{thm:orientation:tree}
    For trees, $\EFX^+_0$ or $\EFX^+_-$ orientations always exist and can be computed in polynomial time. 
\end{theorem}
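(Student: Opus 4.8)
The plan is to first isolate the structural reason trees (indeed, any simple graph) are tractable for orientations. In an orientation we always have $X_i\subseteq E_i$, and for any other agent $a_j$ the two bundles share at most the single edge $e_{i,j}$, every other edge of $X_j$ being a dummy for $a_i$. Hence $v_i(X_j)=v_i(\{e_{i,j}\})$ when $a_j$ holds $e_{i,j}$ and $v_i(X_j)=0$ otherwise, and in the former case $v_i(X_j\setminus\{e_{i,j}\})=0$. I would record this as a \emph{locality lemma}: whether $a_i$ envies anyone, and whether all the $\EFX^+_0$ inequalities hold for $a_i$, is determined entirely by the set $X_i$ of incident edges $a_i$ receives, independent of the rest of the orientation. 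Writing $V_i^\ast=\max(\{0\}\cup\{v_i(e):e\in E_i^{>0}\setminus X_i\})$, agent $a_i$ is envious iff $v_i(X_i)<V_i^\ast$, and I would call $X_i$ \emph{locally feasible} iff either $v_i(X_i)\ge V_i^\ast$, or $a_i$ is envious and (i) $X_i$ contains no dummy, (ii) $v_i(X_i)\ge 0$ whenever $V_i^\ast>0$, and (iii) $v_i(X_i\setminus\{e\})\ge V_i^\ast$ for every chore $e\in X_i$. An orientation is then $\EFX^+_0$ exactly when every $X_i$ is locally feasible.

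Since the $\EFX^+_0$ second condition quantifies over a superset of the items quantified by $\EFX^+_-$ (chores \emph{and} dummies versus chores only), $\EFX^+_0$ implies $\EFX^+_-$, so it suffices to build an $\EFX^+_0$ orientation. By the locality lemma this is precisely the problem of orienting the tree so that every vertex's received set is locally feasible, a requirement that couples only the two endpoints of each edge. I would package two \emph{safe} bundle types that a short check against the predicate shows are always locally feasible: Type A, where $X_i$ consists only of goods (any subset of $E_i^{>0}$); and Type B, where $a_i$ keeps \emph{all} of its goods, holds at most one chore, and holds a dummy only if it holds no chore. The goal becomes: orient the tree so every vertex is Type A or Type B.

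For the construction I would first fix the easy edges: give every edge that is a good for at least one endpoint to a good endpoint (so no agent picks up a chore from such an edge and owners keep their goods), and orient the sub-forest of edges that are chores for both endpoints so that each vertex keeps at most one of them, which is possible because a forest can be rooted and each such edge pushed to its child. The delicate point, which I expect to be the main obstacle, is the interaction of good-for-both (GG) edges with forced chores and routed dummies: the loser of a GG edge must be Type A and may then hold neither a chore nor a dummy, whereas a vertex carrying a both-endpoints chore (or a routed dummy) must be Type B and must keep all its goods. No vertex can meet both demands, so I must orient the GG edges and route the dummy edges so that no vertex is simultaneously a GG-loser and a chore/dummy holder. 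This is exactly where acyclicity enters. Processing the tree from the leaves upward (a leaf is locally feasible however its single edge is oriented, giving a clean induction base), I would keep a two-valued state $f(v,b)$ recording, for each vertex $v$, whether the subtree of $v$ admits a feasible orientation in which $v$ keeps its parent edge iff $b=1$. At $v$, each child edge is allowed to be kept by $v$ (resp.\ given to the child) only if the child's corresponding state is feasible, and among the allowed choices $v$ selects which incident edges to keep so that its own bundle lands in Type A or Type B.

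Because each vertex has only two interface states and the local selection resolves greedily (keep all goods, absorb at most one chore, and push surplus goods down to children who can receive them as Type A), the dynamic program runs in polynomial time, and an induction over the rooted tree shows a feasible completion always exists, which establishes both existence and the polynomial-time bound; the resulting orientation is also $\EFX^+_-$ by the implication noted above. The step requiring the most care is verifying that the leaves-to-root transition never gets stuck, i.e.\ that the GG/chore/dummy conflict can always be broken using the acyclic structure, and that the greedy selection of kept child edges is both complete (finds a feasible bundle whenever one exists) and efficient for doubly monotone valuations, where the consistent good/chore/dummy classification of incident edges is what keeps the selection from blowing up over subsets.
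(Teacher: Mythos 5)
Your locality lemma and the reduction to a per-vertex feasibility predicate are correct, and your Type~A / Type~B classification of safe bundles is exactly the right target (the paper's proof implicitly verifies that every vertex lands in one of these two types). The implication $\EFX^+_0\Rightarrow\EFX^+_-$ is also fine. The gap is that you never actually prove existence. You set up a two-state tree DP and assert that ``an induction over the rooted tree shows a feasible completion always exists,'' while simultaneously flagging that verifying the transition ``never gets stuck'' is ``the step requiring the most care.'' That step \emph{is} the theorem: a DP is a decision procedure, and the claim to be proved is that its answer is always yes. As written, the greedy local selection (``keep all goods, absorb at most one chore, push surplus goods down'') is neither fully specified nor shown to be complete, so the core of the argument is missing.

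The conflict you are trying to resolve --- a vertex that loses a good-for-both edge while also being forced to absorb a both-endpoints chore --- is an artifact of your two-pass assignment, which orients the good edges and the chore edges independently. It dissolves if you use a single coherent rule, which is what the paper does: root the tree (BFS layers) and give each edge to its \emph{parent} endpoint iff the edge is a good for that parent, and to the child otherwise. Under this rule a vertex $v$ keeps every child edge that is a good for $v$ and sheds every child edge that is not; hence the only edge $v$ can possibly lose as a good, and the only non-good edge $v$ can possibly receive, is its single parent edge, and these two events are mutually exclusive. So $v$'s bundle is either all goods (your Type~A, when the parent claims the edge or the parent edge is a good for $v$) or all of $v$'s goods plus exactly one chore or dummy (your Type~B, when the parent declines it), and your own local-feasibility check then closes the proof in two lines. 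No interface states, no conflict resolution, and no separate treatment of CC/CD/DD edges is needed; acyclicity is used only to define ``parent.''
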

    
Since any $\EFX^+_0$ orientation is also $\EFX^+_-$, it suffices to compute $\EFX^+_0$ orientations. 
To achieve this, we first traverse the tree using the breadth-first search (BFS) algorithm and label each vertex with its layer number during the BFS algorithm. 
We then allocate each edge to the endpoint agent at the upper layer if the edge is a good for that agent, and to the endpoint agent at the lower layer otherwise. 
The formal description is provided in Algorithm \ref{alg:orientation:tree}. 

\begin{algorithm}[tb]
\caption{Computing $\EFX^+_0$ Orientations on Trees}
\label{alg:orientation:tree}
\KwIn{A mixed instance with a tree where $N$ is the vertex set and $M$ is the edge set. }
\KwOut{An $\EFX^+_0$ orientation $\mathbf{X} = (X_1, \ldots, X_n)$.}
Run the breadth-first search algorithm and label each vertex $a_i$ with its layer number $d_i$.\\
$D_l \leftarrow \{a_i \in N: d_i = l\}$ for every $l \in [\max_{a_i\in N}d_i]$. \textcolor{gray}{//$D_l$ contains all vertices at layer $l$}\\
\For{$l$ from 1 to $\max_{a_i\in N}d_i$}{
    \If{$l = 1$}{
        Let $a_k$ be the agent in $D_1$. \\
        $X_k \leftarrow E_k^{>0}$, $M \leftarrow M \setminus E_k^{>0}$. 
    }
    \ElseIf{$l = \max_{a_i\in N}d_i$}{
        \For{each $a_i \in D_l$}{
            Let $e$ be the edge in $M$ incident to $a_i$. \\
            $X_i \leftarrow \{e\}$, $M \leftarrow M \setminus \{e\}$. \\
        }
    }
    \Else{
        \For{each $a_i \in D_l$}{
            Let $e$ be the edge in $M$ that is incident to $a_i$ and an agent in $D_{l-1}$. \\
            $X_i \leftarrow (\{e\} \cup E_i^{>0}) \cap M$. \\
            $M \leftarrow M \setminus (\{e\} \cup E_i^{>0})$. \\
            
        }
    }
}
\Return $\mathbf{X} = (X_1, \ldots, X_n)$. \\
\end{algorithm}

\begin{proof}[Proof of Theorem \ref{thm:orientation:tree}]
    Clearly, Algorithm \ref{alg:orientation:tree} runs in polynomial time. 
    It suffices to show that the computed orientation is $\EFX^+_0$. 
    Observe that for each agent $a_i$, the edge between herself and an agent at one higher layer is the only edge that is a good for her but is allocated to another agent, or the only edge that is not a good for her but is allocated to her. 
    For the first case, after removing the edge from the other agent's bundle, $a_i$ does not envy that agent. 
    For the second case, after removing the edge from her own bundle, $a_i$ does not envy others. 
    Therefore, the orientation is $\EFX^+_0$. 
\end{proof}

Note that stars and paths are also trees, Theorem \ref{thm:orientation:tree} holds for stars and paths. 
However, $\EFX^0_0$ or $\EFX^0_-$ orientations may not exist even for a simple tree that is also a star and a path. 

\begin{proposition}
\label{prop:tree:efx00}
    There exists a simple tree that is also a star and a path, for which no allocation is $\EFX^0_0$ or $\EFX^0_-$. 
\end{proposition}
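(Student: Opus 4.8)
The plan is to build the instance on the path on three vertices $P_3$, namely $a_1 - a_2 - a_3$ with edges $e_{1,2}$ and $e_{2,3}$; this is the star $K_{1,2}$ with centre $a_2$ and, up to the trivial single-edge case, the only tree that is simultaneously a star and a path. On this graph I would put additive valuations making every edge a chore for both of its endpoints: $v_i(e) = -1$ whenever $e$ is incident to $a_i$, and $v_i(e) = 0$ otherwise (so each non-incident edge is a dummy). Because $\EFX^0_0$ implies $\EFX^0_-$, it is enough to show that no orientation of this instance is $\EFX^0_-$.

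The engine of the argument is the following observation about the superscript $0$, which I would record once and reuse. If $a_i$ envies $a_j$ and some $e \in X_j$ is a dummy for $a_i$ (equivalently, $e$ is not incident to $a_i$), then $v_i(X_j \setminus \{e\}) = v_i(X_j) > v_i(X_i)$, so the first condition in the definition of $\EFX^0_-$ is violated. Thus in any $\EFX^0_-$ orientation an envied bundle can contain no edge that the envier is not incident to. The orientation constraint is exactly what forces this to bite: each of $e_{1,2}, e_{2,3}$ must go to one of its endpoints, and both endpoints treat it as a chore, so every allocated edge is strictly disliked by its holder.

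I would then check the four orientations in two groups. If both edges are oriented to the centre, then $v_2(X_2) = -2$ while each leaf has an empty bundle; hence $a_2$ envies each leaf, and deleting a single chore from her own bundle still leaves value $-1 < 0$, so the own-bundle condition of $\EFX^0_-$ fails. In every remaining orientation the two edges do not both land on $a_2$, so at least one leaf --- say $a_1$ by symmetry --- is forced to hold its incident chore $e_{1,2}$, giving $v_1(X_1) = -1$. The other edge $e_{2,3}$ is then held by some $a_j \neq a_1$, and since $e_{2,3}$ is a dummy for $a_1$ we have $v_1(X_j) = 0 > -1 = v_1(X_1)$; so $a_1$ envies $a_j$ while $X_j$ contains the dummy $e_{2,3}$, and the observation above shows the orientation is not $\EFX^0_-$. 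Exhausting the four cases, no orientation is $\EFX^0_-$, and therefore none is $\EFX^0_0$.

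The one point that needs care --- and the reason the claim is genuinely about orientations rather than arbitrary allocations --- is precisely this dummy rule. Note that the same chores tree has fewer edges than vertices, so by Lemma \ref{lem:orientation:EFX0:characterization} it admits an $\EFX_0$ (hence $\EFX_-$) orientation under the no-superscript chores notions; it is only the superscript-$0$ requirement that rules orientations out here. Indeed, if one were allowed to place $e_{1,2}$ on $a_3$ and $e_{2,3}$ on $a_1$ (which is not an orientation), every bundle would be worth $0$ and the resulting allocation would be envy-free, hence $\EFX^0_0$ --- consistent with Theorem \ref{thm:allocation:EFX0-}, which guarantees that $\EFX^0_-$ allocations always exist. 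So the finite case analysis above should be carried out strictly over orientations, which is what the intended statement asserts.
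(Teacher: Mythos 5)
Your proof is correct, and it establishes the intended statement (which, as you rightly observe, is about orientations: the word ``allocation'' in the proposition cannot be meant literally, since it would contradict Theorem~\ref{thm:allocation:EFX0-}, and the paper's own proof and Figure~\ref{fig:orientation:star_path} indeed speak of orientations). Your route parallels the paper's at the skeleton level --- both exhibit valuations on the path $a_1-a_2-a_3$ and exhaust the four orientations using the two conditions of $\EFX^0_-$ --- but the instance and the organization differ. The paper makes $e_{1,2}$ a good for both endpoints and $e_{2,3}$ a chore for both, and argues by a chain of forced moves: $e_{2,3}$ must go to $a_2$, then $e_{1,2}$ must go to $a_2$, and then $a_1$ envies $a_2$ even after removing the dummy $e_{2,3}$ from her bundle. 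You instead make both edges chores for both endpoints, isolate the ``dummy rule'' (an envied bundle cannot contain an edge the envier is not incident to) as a reusable lemma, and reduce everything to two symmetric cases: both edges at the center (own-bundle condition fails) or some leaf holding its chore (dummy rule fails). What your variant buys is a cleaner, more symmetric analysis and a sharper diagnosis: the obstruction already arises in a pure chores instance, where, as you note via Lemma~\ref{lem:orientation:EFX0:characterization}, an $\EFX_0$ (hence $\EFX_-$) orientation \emph{does} exist, so it is precisely the superscript-$0$ dummy-removal condition of the mixed notions that makes orientations impossible; your side remark that a non-orientation reallocation of your instance is envy-free also correctly reconciles the result with Theorem~\ref{thm:allocation:EFX0-}. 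The paper's mixed good/chore instance, by contrast, keeps the example within genuinely mixed manna but needs the three-step forcing argument.
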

\begin{proof}
    Consider the simple tree illustrated in Figure \ref{fig:orientation:star_path}, which is also a star and a path. 
    Notice that the edge $e_{2, 3}$ must be allocated to $a_2$. 
    Otherwise, $a_3$ will envy whoever receives $e_{1, 2}$ even after removing $e_{1, 2}$ from the agent's bundle. 
    Moreover, $e_{1, 2}$ must also be allocated to $a_2$. 
    Otherwise, $a_2$ will envy $a_1$ even after removing $e_{2, 3}$ from her own bundle. 
    Since $a_1$ envies $a_2$ even after removing $e_{2, 3}$ from $a_2$'s bundle, no orientation is $\EFX^0_0$ or $\EFX^0_-$. 
\end{proof}

\begin{figure}
    \centering
    \includegraphics[width=0.25\columnwidth]{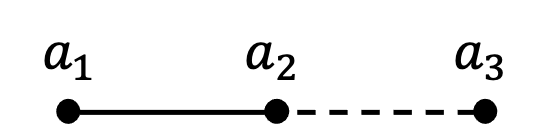}
    \caption{A simple tree (also a star, a path) for which no orientation is $\EFX^0_0$ or $\EFX^0_-$. The dashed edge is a chore for both its endpoint agents. The solid edge is a good for both its endpoint agents.}
    \label{fig:orientation:star_path}
\end{figure}

The good news is that for stars and a specific type of path, we can determine whether $\EFX^0_0$ or $\EFX^0_-$ orientations exist or not in polynomial time. 

\paragraph{Stars}
We then consider stars and have the following result. 
\begin{theorem}\label{thm:orientation:star}
    For stars, determining whether an $\EFX^0_0$ or $\EFX^0_-$ orientation exists or not is in polynomial time. 
\end{theorem}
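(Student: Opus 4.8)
The plan is to exploit that in a star every leaf has degree one, so that an orientation is completely determined by the set of edges the center keeps, and the only binding constraints come from a few necessary conditions forced on the leaves together with the center's own EFX conditions. Write $a_c$ for the center and let $e_i$ be the edge joining $a_c$ to leaf $a_i$. First I would record two leaf conditions that follow from the \emph{first} requirement shared by $\EFX^0_0$ and $\EFX^0_-$ (removing a non-chore item from the envied bundle, dummies included). (L1) If $e_i$ is a chore for $a_i$ and there are at least two edges, then $e_i$ must be oriented to $a_c$: otherwise $a_i$ has negative value, envies every other agent holding a nonempty bundle, and deleting a dummy-for-$a_i$ edge from that bundle leaves the envy intact, violating the condition. (L2) If a good-for-leaf edge $e_j$ is oriented to $a_c$, then $a_c$ may hold no other edge: otherwise $a_j$ envies $a_c$, and deleting any other (dummy-for-$a_j$) edge from $X_c$ leaves $v_j(e_j)>0$ worth of envy. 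All remaining leaf situations (a good kept by its leaf, a chore kept by the center, a dummy oriented either way) I would check to be harmless.

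Next I would use (L1)--(L2) to argue that every candidate orientation has one of two shapes: \emph{Case A}, where $a_c$ holds no good-for-leaf edge (so all chore-for-leaf edges sit at $a_c$, all good-for-leaf edges at their leaves, and only the dummy-for-leaf edges are free); or \emph{Case B}, where $a_c$ holds a single good-for-leaf edge and nothing else, which by (L1) forces the instance to contain no chore-for-leaf edge. I would then single out the \emph{canonical} Case-A orientation: orient every chore-for-leaf edge to $a_c$, every good-for-leaf edge to its leaf, and every dummy-for-leaf edge to $a_c$ exactly when it is a good for $a_c$ (and to its leaf otherwise).

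The core step is an exchange argument, in the spirit of Lemma \ref{lem:orientation:EFX0:subproblem}, showing the canonical orientation is $\EFX^0_0$ (resp. $\EFX^0_-$) if and only if some valid orientation exists. Using double monotonicity, moving a good-for-$a_c$ free edge into $X_c$ only increases $v_c(\cdot)$ and adds neither a chore nor a dummy to $X_c$, while moving a chore- or dummy-for-$a_c$ free edge out of $X_c$ only increases $v_c(\cdot)$ and removes a problematic item; in either move the affected leaf is indifferent (the edge is a dummy for it), so the center's conditions are only relaxed. Hence the canonical orientation simultaneously maximizes $v_c(X_c)$ and minimizes the chore/dummy content of $X_c$, dominating every Case-A orientation; and since Case B only arises when there is no chore-for-leaf edge, where the canonical orientation retains at least as many goods for $a_c$ and thus has weakly larger center value, Case B is dominated as well.

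Finally, the algorithm constructs the canonical orientation and checks the center's $\EFX^0_0$ (resp. $\EFX^0_-$) conditions against each leaf, which costs $O(k)$ valuation-oracle queries and is therefore polynomial. The two notions share (L1), (L2) and the case structure, since these stem only from the common first condition; they differ solely in the center's second condition (for $\EFX^0_-$ a dummy in the center's own bundle is permitted), which is absorbed at the checking stage. I expect the main obstacle to be the exchange argument for the center's second condition when $a_c$ is forced by (L1) to keep chore- or dummy-for-$a_c$ edges and still envies some leaf with bundle $X_\ell$: there the inequality $v_c(X_c\setminus\{e\}) \ge v_c(X_\ell)$ must be verified for a non-additive $v_c$, and I would discharge it by showing the canonical choice monotonically maximizes every relevant quantity $v_c(X_c)$ and $v_c(X_c\setminus\{e\})$, so that if the canonical orientation fails this condition then so does every orientation of that case.
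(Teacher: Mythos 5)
Your proposal follows essentially the same route as the paper: the same two forced-orientation observations (chore-for-leaf edges must go to the center; a center holding any edge cannot also hold a good-for-leaf edge), the same canonical assignment of the free dummy-for-leaf edges (to the center exactly when they are goods for her), and the same final step of checking only the center. The paper leaves the domination argument for the free edges implicit, whereas you spell it out; that is a welcome addition and it is correct for Case A.

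The one inaccurate step is your dismissal of Case B. The claim that the canonical Case-A orientation ``retains at least as many goods for $a_c$ and thus has weakly larger center value'' than a Case-B orientation is false: in Case B the center holds a single good-for-leaf edge $e_j$, which may be an arbitrarily valuable good for the center, while the canonical Case-A bundle may be empty. The conclusion is nevertheless correct, but for a different reason: Case B can only arise when no edge is a chore for its leaf, and in that situation the canonical Case-A bundle $X_c$ consists solely of goods for the center, so $v_c(X_c)\ge 0$, the second (own-bundle) condition is vacuous, and the first condition against any envied leaf $a_k$ reduces to $v_c(X_c)\ge v_c(\emptyset)=0$. Hence the canonical orientation always succeeds whenever Case B is even available, and your algorithm never needs to consider it. With that substitution (which is essentially the paper's Case 1), your argument is complete.
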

\begin{proof}
    If there is only one edge, allocating the edge to one of its endpoint agents gives an $\EFX^0_0$ and $\EFX^0_-$ orientation. 
    In the following, we assume that there are at least two edges. 
    Observe that if an edge is a chore for its satellite agent, the edge must be allocated to the center in any $\EFX^0_0$ or $\EFX^0_-$ orientation. 
    To see this, let the edge be $e_{i, j}$ such that $a_j$ is the satellite and $a_i$ is the center and suppose that $e_{i, j}$ is allocated to $a_j$. 
    Let another edge be $e_{i, k}$. 
    Since $e_{i, k}$ is a dummy for $a_j$, no matter who receives $e_{i, k}$, $a_j$ will envy her even after removing $e_{i, k}$ from her bundle.
    Thus, $e_{i, j}$ must be allocated to $a_i$ in any $\EFX^0_0$ and $\EFX^0_-$ orientation.  
    Also, observe that if the center receives an edge, each edge that is a good for its satellite agent must be allocated to the satellite. 
    Otherwise, the satellite will envy the center even after removing the edge that the center has received. 
    We consider the following two cases: 

    \smallskip
    \textbf{\underline{Case 1}:} Each edge is not a chore for its satellite agent. 
    In this case, we simply allocate each edge to its satellite agent, and it is easy to see that the orientation is $\EFX^0_0$ and $\EFX^0_-$. 

    \smallskip
    \textbf{\underline{Case 2}:} There exists an edge that is a chore for its satellite agent. 
    In this case, as we have seen, the edges that are chores for their satellite agents must be allocated to the center in any $\EFX^0_0$ and $\EFX^0_-$ orientation.
    Moreover, the edges that are goods for their satellite agents must be allocated to their satellites. 
    For the edges that are dummies for their satellite agents, it does not matter how they are allocated since the orientation is always envy-free for their satellites. 
    Thus, if they are not goods for the center, we allocate them to their satellites and otherwise we allocate them to the center. 
    Therefore, by checking whether the orientation is $\EFX^0_0$ (resp. $\EFX^0_-$) for the center, we can determine whether there exists any $\EFX^0_0$ (resp. $\EFX^0_-$) orientation. 
\end{proof}

\paragraph{Paths} We next consider a specific type of paths and have the following result. 
\begin{theorem}\label{thm:orientation:special_path}
    For paths where each edge is either a good or a chore for both its endpoint agents, determining whether an $\EFX^0_0$ or $\EFX^0_-$ orientation exists or not is in polynomial time. 
\end{theorem}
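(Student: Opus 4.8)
The plan is to reduce the global $\EFX^0_0$ (resp.\ $\EFX^0_-$) condition to a conjunction of local constraints, each depending only on the orientation of a bounded window of consecutive edges, and then to decide feasibility by a left-to-right dynamic program along the path. Write the path as $a_1 - a_2 - \cdots - a_n$ with edges $f_k = e_{k,k+1}$ for $k \in [n-1]$, so that an orientation simply assigns each $f_k$ to $a_k$ or to $a_{k+1}$. The starting point is that each agent $a_i$ is incident only to $f_{i-1}$ and $f_i$, hence values every other edge at the status quo; in particular $v_i(X_j)$ for a non-neighbour $a_j$ equals $v_i(\emptyset)$, and $v_i(X_{i\pm 1})$ depends only on whether the relevant incident edge was oriented toward $a_i$ or away from her. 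Thus $X_i$ and all quantities appearing in the two $\EFX$ conditions for $a_i$ are determined by the orientation of the edges near $a_i$.

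First I would record the key consequence of the first $\EFX$ condition, which is identical for $\EFX^0_0$ and $\EFX^0_-$: if $a_i$ envies $a_j$ and $X_j$ contains an edge that is a dummy for $a_i$, then removing that dummy leaves the value unchanged, so the condition collapses to the envy-freeness requirement $v_i(X_i) \ge v_i(X_j)$ and is violated. This is exactly the mechanism used in the proof of Proposition~\ref{prop:tree:efx00}. Consequently, whenever $v_i(X_i) < 0$, agent $a_i$ envies \emph{every} agent whose bundle has value $v_i(\emptyset)$ to her, including any non-neighbour holding any edge (such an edge being a dummy for $a_i$), and the orientation cannot be $\EFX^0_0$ or $\EFX^0_-$. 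Since in any orientation of a path with $n$ larger than a small constant some non-neighbour of every vertex is forced to hold an edge, the constraint $v_i(X_i) \ge v_i(\emptyset)$ for all $i$ becomes \emph{necessary}; it is clearly local, depending only on the orientation of $f_{i-1}, f_i$. Conversely, once $v_i(X_i) \ge v_i(\emptyset)$ holds for all $i$, no agent envies a non-neighbour, so the only remaining envy is toward a neighbour who holds the incident good.

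Next I would characterise, under this non-negativity constraint, exactly when the residual neighbour-envy is $\EFX$-repairable. If $a_i$ envies $a_{i+1}$, then $a_{i+1}$ holds the good $f_i$ and $X_i \subseteq \{f_{i-1}\}$; the first condition then additionally forbids $a_{i+1}$ from holding its far edge $f_{i+1}$ (a dummy for $a_i$), while the second condition only constrains the removal of a chore from $X_i$, which the non-negativity constraint already governs. The analogous statement holds for envy toward $a_{i-1}$ and its far edge $f_{i-2}$, and the only difference between $\EFX^0_0$ and $\EFX^0_-$ is the strictness in the second condition. In every case the predicate certifying that $a_i$ is satisfied depends only on the orientations of the at most four consecutive edges $f_{i-2}, f_{i-1}, f_i, f_{i+1}$. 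I would package all of this into a single local predicate, defined separately for the two notions.

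With locality in hand, the algorithm is a dynamic program scanning $f_1, \ldots, f_{n-1}$ from left to right, whose state after deciding $f_k$ records the orientation of the last three edges (eight states); a transition chooses the orientation of $f_{k+1}$ and verifies the local predicate for $a_k$, whose window is by then fully determined. An $\EFX^0_0$ (resp.\ $\EFX^0_-$) orientation exists iff some final state is reachable, so the procedure runs in $O(n)$ time after a one-time check of the boundary agents $a_1, a_n$; the finitely many short paths, where a net-negative bundle is not automatically fatal, are disposed of by brute force over all orientations in constant time. I expect the main obstacle to be the reduction establishing that the a-priori global $\EFX$ condition---which quantifies over all pairs of agents and over the delicate item-removal clauses---collapses to the non-negativity constraint together with the bounded-window neighbour conditions; once this locality is proven with the correct window width and the short-path exceptions isolated, the dynamic program itself is routine.
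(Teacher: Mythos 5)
Your proposal is correct, but it takes a genuinely different route from the paper. The paper's proof is a direct structural argument: it first notes that $\EFX^0_0$ and $\EFX^0_-$ coincide on these paths (no incident edge is a dummy), then shows that around every edge that is a chore for both endpoints the orientation of the two neighbouring edges on at least one side is completely forced into one of two explicit patterns (the chore and the next good go to the shared endpoint, subject to two value inequalities), and finally resolves existence by a three-way case analysis that either rejects, outputs a canonical orientation, or deletes the forced edges and recurses on the resulting sub-paths. Your argument instead isolates the single mechanism driving everything -- that removing a dummy from an envied bundle leaves its value unchanged, so any envy toward an agent holding a dummy is fatal, whence $v_i(X_i)\ge v_i(\emptyset)$ is necessary on all but constantly short paths and all residual envy is toward a neighbour holding the shared good -- and packages the consequences into a predicate on the window $f_{i-2},\dots,f_{i+1}$, decided by an eight-state dynamic program. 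I checked the locality claim you flag as the main obstacle and it does go through: the first condition forces the envied neighbour to hold only the shared edge and forces $v_i(X_i)\ge 0$, while the second condition only bites when $X_i$ contains a chore (in which case it fails outright, for both notions, since removing the chore leaves value $0$ against a strictly positive $v_i(\{f_i\})$), and all of these facts are determined by the four-edge window. What your approach buys is uniformity and robustness: it treats both notions through one predicate, avoids the paper's somewhat delicate sub-path deletion argument in its Case 3, and would extend mechanically to variants with wider but still bounded dependence (e.g., allowing dummy edges, at the cost of a larger window). What the paper's approach buys is an explicit structural characterization of where chore edges can sit in an orientable path, which is more informative than a reachability certificate. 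Do make sure, when writing this up, to state the normalization $v_i(\emptyset)=0$ you rely on and to spell out the boundary predicates for $a_1,a_2,a_{n-1},a_n$, since those are the only places where the window is truncated.
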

\begin{proof}
    In the paths we focus on, $\EFX^0_0$ and $\EFX^0_-$ are equivalent since no edge is a dummy for any of its endpoint agents. 
    Thus, in the following, we only consider $\EFX^0_-$. 

    If there is only one edge or each edge is a good for both its endpoint agents, we can obtain an $\EFX^0_-$ orientation by allocating each edge to the right endpoint agent. 
    If there are only two edges and at least one is a chore for both its endpoint agents, an $\EFX^0_-$ orientation does not exist, which is easy to see by the same reasoning we made in the proof of Proposition \ref{prop:tree:efx00}. 
    It remains to consider the cases where there are at least three edges and at least one is a chore for both its endpoint agents. 
    
    We first observe that for each edge that is a chore for both its endpoint agents, its neighboring vertices and edges follow the same pattern if there exists an $\EFX^0_-$ orientation. 
    Let one such edge be $e_{i_0, i_1}$ that is a chore for both $a_{i_0}$ and $a_{i_1}$, the two neighboring edges on its right are $e_{i_1, i_2}$, $e_{i_2, i_3}$, and the two neighboring edges on its left are $e_{i_{-2}, i_{-1}}$, $e_{i_{-1}, i_0}$. 
    We claim that if there exists an $\EFX^0_-$ orientation, one of the following two conditions holds:
    \begin{enumerate}
        \item $e_{i_1, i_2}$ is a good for both $a_{i_1}$ and $a_{i_2}$, $e_{i_2, i_3}$ is a good for both $a_{i_2}$ and $a_{i_3}$, $v_{i_1}(\{e_{i_0, i_1}\} \cup \{e_{i_1, i_2}\}) \ge 0$, $v_{i_2}(e_{i_1, i_2}) \le v_{i_2}(e_{i_2, i_3})$; 
        \item $e_{i_{-1}, i_0}$ is a good for both $a_{i_{-1}}$ and $a_{i_0}$, $e_{i_{-2}, i_{-1}}$ is a good for both $a_{i_{-2}}$ and $a_{i_{-1}}$, $v_{i_0}(\{e_{i_0, i_1}\} \cup \{e_{i_{-1}, i_0}\}) \ge 0$, $v_{i_{-1}}(e_{i_{-1}, i_0}) \le v_{i_{-1}}(e_{i_{-2}, i_{-1}})$. 
    \end{enumerate}
    To see this, first consider the case if $e_{i_0, i_1}$ is allocated to $a_{i_1}$. 
    If $e_{i_1, i_2}$ is not a good for $a_{i_1}$ or $v_{i_1}(\{e_{i_0, i_1}\} \cup \{e_{i_1, i_2}\}) < 0$, $a_{i_1}$ gets a negative value and will envy whoever gets an edge that is not incident to her even after removing that edge. 
    Therefore, in any $\EFX^0_-$ orientation, $e_{i_1, i_2}$ must be a good for $a_{i_1}$, $v_{i_1}(\{e_{i_0, i_1}, e_{i_1, i_2}\}) \ge 0$ and $e_{i_1, i_2}$ must be allocated to $a_{i_1}$. 
    Furthermore, if $e_{i_2, i_3}$ is not a good for $a_{i_2}$ or $v_{i_2}(e_{i_1, i_2}) > v_{i_2}(e_{i_2, i_3})$, $a_{i_2}$ will envy $a_{i_1}$ after removing $e_{i_0, i_1}$ from her bundle. 
    Therefore, in any $\EFX^0_-$ orientation, $e_{i_2, i_3}$ must be a good for $a_{i_2}$, $v_{i_2}(e_{i_1, i_2}) \le v_{i_2}(e_{i_2, i_3})$ and $e_{i_2, i_3}$ must be allocated to $a_{i_2}$. 
    An analogous statement holds if $e_{i_0, i_1}$ is allocated to $a_{i_0}$. 

    We are now ready to determine whether an $\EFX^0_-$ orientation exists or not. 
    We consider three cases: 

    \smallskip
    \textbf{\underline{Case 1}:}
    If there exists an edge such that it is a chore for both its endpoint agents but its neighboring vertices and edges on neither of its two sides follow the pattern we observed (i.e., satisfy the above two conditions), we can assert that there does not exist any $\EFX^0_-$ orientation. 

    \smallskip
    \textbf{\underline{Case 2}:}
    If for each edge that is a chore for both its endpoint agents, the neighboring vertices and edges on both its two sides follow the pattern, we allocate each edge that is a chore to the left endpoint agent and each other edge to the right one. 
    It is easy to verify that the orientation is $\EFX^0_-$. 

    \smallskip
    \textbf{\underline{Case 3}:}
    If there exists an edge that is a chore for both its endpoints, and its neighboring vertices and edges on only one of its two sides follow the pattern, we know how these edges must be allocated in any $\EFX^0_-$ orientation. 
    Take the edge $e_{i_0, i_1}$ for example, if only the first condition holds, $e_{i_0, i_1}$ and $e_{i_1, i_2}$ must be allocated to $a_{i_1}$, and $e_{i_2, i_3}$ must be allocated to $a_{i_2}$. 
    Therefore, we can delete these three edges and $a_{i_1}$ and $a_{i_2}$ from the path. 
    It is easy to see that the original path has an $\EFX^0_-$ orientation if and only if all reduced paths have $\EFX^0_-$ orientations. 
    By repeating the above reduction, either in some sub-path, Case 1 occurs and we can assert that there does not exist any $\EFX^0_-$ orientation; or in all sub-paths, either Case 2 occurs or each edge is a good for both its endpoint, then we can conclude that an $\EFX^0_-$ orientation exists. 
\end{proof}

\section{EFX Allocations for Mixed Instances}
\label{sec:mixed:alloc}
In this section, we elaborate on EFX allocations for mixed instances. 

\subsection{$\EFX_0^0$ Allocations}
\label{subsec:allocation:EFX00}
We start with the strongest one among those four notions, i.e., $\EFX^0_0$. 
We say an edge $e$ is \textit{priceless} to an agent $a_i$ if for any $S_1, S_2 \subseteq M$ such that $e\notin S_1$ and $e\in S_2$, we have $v_i(S_1) < v_i(S_2)$.
Notice that an agent envies whoever receives her priceless edge no matter which edges she herself receives and doe not envy others as long as she receives it. 
We have the following proposition, which provides some characterization of $\EFX_0^0$ allocations on some graphs with priceless edges.

\begin{proposition}\label{pro:allocation:EFX00:priceless}
    For graphs that satisfy (1) each edge is a good for both its endpoint agents, (2) each agent has one priceless incident edge and (3) each priceless edge is priceless to both its endpoint agents, we have that each edge must be allocated to one of its endpoint agents in any $\EFX_0^0$ allocation. 
\end{proposition}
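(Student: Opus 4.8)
The plan is to argue by contradiction. I would suppose some $\EFX^0_0$ allocation $\mathbf{X}$ assigns an edge $e^{*}=(a_p,a_q)$ to an agent $a_r\notin\{a_p,a_q\}$ (so $e^{*}$ is a dummy for $a_r$) and derive a violation of the $\EFX^0_0$ conditions. The whole difficulty is to locate the right pair of agents and the right item to virtually remove; once these are identified, the contradiction is immediate from the priceless property.

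First I would record a structural fact that follows cheaply from the hypotheses. Conditions (1)–(3) guarantee that for every agent each incident edge is a good and each non-incident edge is a dummy, so chores never appear. Consequently the second clause of $\EFX^0_0$ simplifies: if an envious agent $a_i$ (with $v_i(X_i)<v_i(X_j)$) held a dummy $e$, then $v_i(X_i\setminus\{e\})=v_i(X_i)<v_i(X_j)$, contradicting the required $v_i(X_i\setminus\{e\})\ge v_i(X_j)$. Hence \emph{any agent holding a non-incident edge cannot envy anyone}. Applying this to $a_r$, who holds the dummy $e^{*}$, shows $a_r$ envies no one; combined with the quoted observation that an agent envies whoever holds her priceless edge, this forces $a_r$ to hold her own priceless edge $e_r=(a_r,a_s)$.

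Next I would use conditions (2) and (3) to pin down the partner $a_s$: since the priceless edges form a perfect matching (each agent has a unique priceless incident edge, priceless to both its endpoints), $e_r$ is also $a_s$'s priceless edge, and because $a_r$ rather than $a_s$ holds it, $a_s$ envies $a_r$. I then apply the \emph{first} clause of $\EFX^0_0$ to the ordered pair $(a_s,a_r)$ and feed it the spare edge $e^{*}$. The edge $e^{*}\in X_r$ is a non-chore for $a_s$ (a good if incident to $a_s$, a dummy otherwise), so its virtual removal is legal; and crucially $e^{*}\neq e_r$, precisely because $e^{*}$ is not incident to $a_r$ whereas $e_r$ is. Therefore $X_r\setminus\{e^{*}\}$ still contains $a_s$'s priceless edge $e_r$, while $X_s$ does not, so $v_s(X_r\setminus\{e^{*}\})>v_s(X_s)$, contradicting the $\EFX^0_0$ requirement $v_s(X_s)\ge v_s(X_r\setminus\{e^{*}\})$.

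The main obstacle is conceptual rather than computational: one must realize that a non-endpoint edge in $a_r$'s bundle acts as a disposable ``second item'' whose removal still leaves the envied partner's priceless edge intact, which is exactly what defeats envy-freeness-up-to-any-item. The delicate bookkeeping is verifying the distinctness $e^{*}\neq e_r=e_s$ (guaranteed by incidence) so that the priceless edge survives the virtual removal. Each hypothesis is used once: (1) to eliminate chores and simplify the removal condition, (2) to make the priceless edges a well-defined matching so that $a_s$ is determined, and (3) to transfer the envy to $a_s$.
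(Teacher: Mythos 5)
Your proof is correct. It rests on the same two observations as the paper's proof --- that an envious agent cannot hold a dummy (second clause of $\EFX^0_0$), and that a priceless edge forces envy from whichever endpoint does not receive it --- but you package them into a single, uniform contradiction rather than the paper's two-stage argument. The paper first shows that every \emph{priceless} edge must go to an endpoint (a third party $a_k$ holding it would be envied by both endpoints, hence could hold nothing else, hence would not hold her own priceless edge and would envy someone while holding a dummy, violating the second clause), and only then classifies every agent as envied or envious to rule out misallocated non-priceless edges. You instead start from an arbitrary misallocated edge $e^{*}$ held by a non-endpoint $a_r$, deduce that $a_r$ must hold her own priceless edge $e_r=(a_r,a_s)$, and derive the contradiction from the \emph{first} clause applied to the pair $(a_s,a_r)$: removing $e^{*}$ from $X_r$ leaves $e_r$ intact, so $a_s$ still envies. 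This is arguably cleaner --- it avoids the case split on priceless versus non-priceless edges entirely --- and the one point of delicacy, the distinctness $e^{*}\neq e_r$, is exactly the one you flag and justify (non-incidence versus incidence to $a_r$). The only cost is that your route does not explicitly exhibit the envied/envious dichotomy of agents that the paper's second stage makes visible, but that structure is not needed for the statement being proved.
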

\begin{proof}
    First, consider priceless edges. 
    Let $(a_i, a_j)$ be one priceless edge and assume for the sake of contradiction that it is allocated to agent $a_k$ who is not $a_i$ or $a_j$. 
    In this case, $a_k$ is envied by $a_i$ and $a_j$, and thus cannot receive her priceless edge since it is a dummy for $a_i$ and $a_j$. 
    Otherwise, $a_i$ and $a_j$ will envy $a_k$ even after removing the dummy from $a_k$'s bundle. 
    Moreover, $a_k$ envies whoever receives her priceless edge even after removing $(a_i, a_j)$ from her own bundle, which is a dummy for her, thus the allocation is not $\EFX^0_0$. 

    Then, assume that each priceless edge is allocated to one of its endpoint agents and consider non-priceless edges. 
    Observe that each agent either envies the agent who receives the edge that is priceless to her, or is envied by that agent if she receives that priceless edge. 
    Each envied agent cannot receive any other edge since every edge is either a good or a dummy for the agent who envies her. 
    Otherwise, after removing the other edge from the envied agent's bundle, the envious agent still envies the envied agent, thus the allocation is not $\EFX^0_0$. 
    Each agent who envies others cannot receive any edge that is not incident to her since it is a dummy for her. 
    Otherwise, after removing the dummy from her own bundle, the envious agent still envies others, thus the allocation is not $\EFX^0_0$. 
    Therefore, each non-priceless edge must also be allocated to one of its endpoint agents. 
\end{proof}

We first show that $\EFX^0_0$ allocations may not exist. 

\begin{proposition}
\label{pro:allocation:EFX00_not_exist}
    There exist graphs for which no allocation is $\EFX^0_0$. 
\end{proposition}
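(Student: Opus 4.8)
The plan is to construct an explicit small graph and verify by exhaustive case analysis that no $\EFX^0_0$ allocation exists. The natural strategy is to build a gadget satisfying the three hypotheses of Proposition \ref{pro:allocation:EFX00:priceless} so that I can immediately restrict attention to orientations: that proposition tells me that in any $\EFX^0_0$ allocation on such a graph, every edge must go to one of its two endpoints. This collapses the (exponentially large) space of allocations down to the space of orientations, which is what makes a finite case check tractable. So first I would exhibit a concrete graph in which (1) each edge is a good for both endpoints, (2) each vertex has exactly one incident priceless edge, and (3) each priceless edge is priceless to both its endpoints.

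The most economical candidate is essentially the orientation lower-bound instance from \cite{christodoulou2023fair}, reused here with priceless edges playing the role that forces every edge to stay on its endpoints. Concretely, I would take a graph built so that each agent is ``pinned'' by her priceless edge — she envies whoever holds it unless she holds it herself — and then add a cyclic or otherwise over-constrained pattern of ordinary good edges that cannot be oriented consistently. The idea is that pricelessness makes the envy relations rigid: by the observation preceding the proposition, an agent envies the holder of her priceless edge (if she does not hold it) and is envied by the other endpoint (if she does hold it). Once these envy directions are fixed, $\EFX^0_0$ forbids an envied agent from holding any extra good edge and forbids an envious agent from holding any non-incident edge; combined with the requirement that every good edge must be placed on one of its endpoints, this typically leaves a system of local constraints with no global solution.

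Next I would carry out the verification. Having fixed the graph, I would invoke Proposition \ref{pro:allocation:EFX00:priceless} to reduce to orientations, then argue that the priceless edges force a specific (or a small number of) envy pattern(s). For each consistent assignment of the priceless edges I would trace through the induced constraints on the remaining good edges: each envied agent can receive nothing further, each envious agent can receive only incident edges, and every remaining edge still must land on an endpoint. I expect these constraints to chain around a cycle and produce a contradiction — some agent is forced simultaneously to receive and to not receive a particular good, or some agent ends up envying a neighbor even after the allowed single-item removal. Because the number of orientations of the gadget is constant, the whole check is finite.

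The main obstacle is designing the gadget so that it is genuinely small yet provably impossible, and so that the forced envy directions from the priceless edges actually interlock into an unsatisfiable system rather than leaving a slack orientation. In particular, I would need to be careful that conditions (1)–(3) are simultaneously consistent (priceless edges are themselves goods for both endpoints, so they are compatible with (1)), and that the ordinary good edges create a cyclic conflict analogous to the odd-cycle / over-full obstruction used for $\EFX^0$ orientations. The cleanest route is very likely to borrow directly the NP-hardness reduction instance from Theorem \ref{thm:allocation:EFX00} (whose reduction, the introduction notes, generalizes \cite{christodoulou2023fair}) and point out that a single unsatisfiable sub-instance of it already has no $\EFX^0_0$ allocation, so that the non-existence claimed here is an immediate corollary of the construction used there. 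I would present whichever of these two options yields the smaller, more transparent certificate.
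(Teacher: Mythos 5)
Your strategy is exactly the one the paper uses: exhibit a small graph satisfying the three hypotheses of Proposition~\ref{pro:allocation:EFX00:priceless}, invoke that proposition to collapse the problem to orientations, and then use the rigid envy pattern induced by the priceless edges to rule out every orientation by a short case analysis. However, the proposal never actually produces the gadget, and the gadget \emph{is} the entire content of this proposition. You correctly identify this as ``the main obstacle'' but then leave it unresolved, expressing only the expectation that the constraints ``typically'' interlock into a contradiction. That expectation needs to be discharged by a concrete certificate: the paper's Figure~\ref{fig:allocation:EFX00:not_exist} gives a four-agent graph with two disjoint priceless edges $e_{1,2}$ and $e_{3,4}$ plus ordinary good edges joining the two pairs, after which the argument is one line --- whichever endpoint receives a priceless edge is envied by the other endpoint, so she may not hold any additional edge (it would be a dummy for the envier, and removing it cannot eliminate the envy), yet the connecting edges must also land on their endpoints, which is impossible.

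The gap is not merely a formality, because the design really is delicate in the way you worry about. For instance, the seemingly natural minimal candidate --- two priceless edges $e_{1,2}$, $e_{3,4}$ and a single ordinary edge $e_{1,3}$ --- \emph{does} admit an $\EFX^0_0$ orientation (give $e_{1,2}$ to $a_2$, $e_{3,4}$ to $a_4$, $e_{1,3}$ to $a_1$: every envied agent holds exactly one edge, and every envious agent's own bundle contains only goods for her), so ``priceless edges plus one connector'' is not enough and one must add further connecting edges to over-constrain both sides simultaneously. Your fallback of pointing to an unsatisfiable sub-instance of the Theorem~\ref{thm:allocation:EFX00} reduction would work logically (the hardness gadgets are verified independently of this proposition), but it is far heavier than necessary and still requires you to exhibit and verify a specific instance. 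To complete the proof you must write down one explicit graph and check its handful of orientations.
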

\begin{proof}
    Consider the graph illustrated in Figure \ref{fig:allocation:EFX00:not_exist}. 
    Notice that the graph satisfies the three properties in Proposition \ref{pro:allocation:EFX00:priceless}. 
    Therefore, each edge must be allocated to one of its endpoint agents in any $\EFX^0_0$ allocation. 
    Without loss of generality, assume that $e_{1, 2}$ is allocated to $a_1$, $e_{3, 4}$ to $a_3$, and $e_{1, 3}$ to $a_1$. 
    Since $a_2$ envies $a_1$ even after removing $e_{1,3}$ from $a_1$'s bundle, the allocation is not $\EFX^0_0$. 
\end{proof}

\begin{figure}[tb]
    \centering
    \includegraphics[width=0.2\columnwidth]{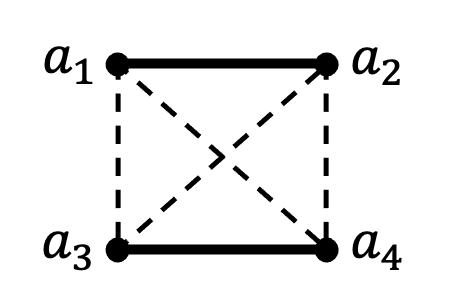}
    \caption{An example for which no allocation is $\EFX^0_0$. Each bold solid edge is priceless to both its endpoint agents. }
    \label{fig:allocation:EFX00:not_exist}
\end{figure}

We next study the complexity of determining the existence of $\EFX^0_0$ allocations and have the following result. 

\begin{theorem}\label{thm:allocation:EFX00}
    Determining whether an $\EFX_0^0$ allocation exists or not is NP-complete, even for additive valuations. 
\end{theorem}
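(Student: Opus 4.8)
The plan is to establish membership in NP first and then NP-hardness by reduction. Membership is routine: given an allocation $\mathbf{X}$, for every ordered pair $(a_i, a_j)$ one checks whether $a_i$ envies $a_j$ and, if so, verifies both defining inequalities of $\EFX_0^0$ over every single edge $e \in X_j$ (condition~1) and $e \in X_i$ (condition~2). This is clearly polynomial, and in particular so for additive valuations, so I would dispatch this part in a sentence.

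For hardness, the central leverage is Proposition~\ref{pro:allocation:EFX00:priceless}. I would engineer the reduction so that the constructed graph satisfies its three hypotheses, namely that every edge is a good for both endpoints, every agent has exactly one incident priceless edge, and every priceless edge is priceless to both of its endpoints. On such a graph the proposition forces every $\EFX_0^0$ allocation to be an \emph{orientation}, and because the instance is then a goods instance, the second condition in the definition of $\EFX_0^0$ is vacuous on orientations (no agent holds a non-good), so $\EFX_0^0$ collapses to the ordinary $\EFX^0$ notion for goods. This lets me import the NP-hardness of $\EFX^0$ orientations proved by \cite{christodoulou2023fair} (their Theorem~2). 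Concretely, I would start from the SAT-gadget graph underlying their reduction, in which every edge is already a good for both endpoints, and augment each vertex with a priceless-edge gadget, then verify that the three hypotheses now hold globally. Keeping the weights polynomial is easy for additive valuations: the priceless value at a vertex need only exceed the sum of that vertex's few incident non-priceless values, so the numbers stay small.

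The equivalence to be proved is that the underlying SAT instance is satisfiable if and only if the augmented graph admits an $\EFX_0^0$ allocation. For the forward direction I would turn a satisfying assignment into an $\EFX^0$ orientation of the original edges and assign each priceless edge to the gadget side that keeps the original agents ``in envy mode,'' then check condition~1 against priceless partners (removing the priceless good empties the partner's relevant bundle) and observe condition~2 is vacuous. For the backward direction I would invoke Proposition~\ref{pro:allocation:EFX00:priceless} to conclude that any $\EFX_0^0$ allocation is an orientation, and then read off a truth assignment from how the original edges are oriented, using the $\EFX^0$ constraints among the original vertices to guarantee that every clause is satisfied.

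The main obstacle is the interaction between the priceless gadgets and the original envy constraints, which is exactly why a black-box reduction from the orientation problem does not suffice and the construction must be more elaborate than that of \cite{christodoulou2023fair}. A priceless edge creates a rigid local structure: by the mechanism in the proof of Proposition~\ref{pro:allocation:EFX00:priceless}, whoever holds a priceless edge is envied by its other endpoint and hence may hold no other incident edge, while the envious endpoint may hold no edge that is a dummy for the holder. This yields a \emph{binary} behavior at each original vertex $a_i$: either $a_i$ grabs its own priceless edge and must then offload all of its other incident edges onto neighbors (escaping all of its clause and variable constraints), or it relinquishes the priceless edge and participates normally under the $\EFX^0$ constraints. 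The delicate part of the proof is to arrange the connecting edges so that this grab/relinquish choice cannot be exercised by adjacent vertices simultaneously (an offloaded edge cannot land on a neighbor who has also grabbed its own priceless edge), so that the escape option does \emph{not} introduce spurious $\EFX_0^0$ allocations that would make an unsatisfiable instance appear satisfiable, and so that the forced orientations reproduce the intended SAT semantics exactly. Verifying this faithful correspondence, rather than the routine checking of the EFX inequalities, is where the real work lies.
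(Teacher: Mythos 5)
Your high-level strategy coincides with the paper's: both use Proposition~\ref{pro:allocation:EFX00:priceless} to force every $\EFX_0^0$ allocation on a suitably engineered graph to be an orientation, observe that on a goods orientation the second $\EFX_0^0$ condition is vacuous, and then encode a satisfiability problem. However, your concrete construction --- attach a pendant priceless edge to every vertex of the graph from Theorem~2 of \citet{christodoulou2023fair} and import their hardness result --- has a gap that you yourself identify but do not close, and it is not a minor verification detail. Once a pendant priceless edge $(a_i,p_i)$ is added, every original agent $a_i$ acquires a genuine ``escape'' move: allocate $(a_i,p_i)$ to $a_i$, whereupon $p_i$ envies $a_i$, all of $a_i$'s other incident edges are dummies for $p_i$, and condition~1 of $\EFX_0^0$ forces $a_i$ to hold \emph{nothing else}. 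Such an $a_i$ is then valued at $0$ by all original neighbours, is envied by no original agent, and pushes all of its original incident edges onto its neighbours. This behaviour has no counterpart in the orientation problem you are trying to import, so the set of $\EFX_0^0$ allocations of the augmented graph is not in correspondence with the set of $\EFX^0$ orientations of the original graph; in particular, an unsatisfiable formula could in principle admit a spurious $\EFX_0^0$ allocation in which some agents exercise the escape move. Ruling this out requires redesigning the gadgets, not just ``verifying'' the correspondence, and you give no construction that does so. A second, related issue: the heavy edges inside \citeauthor{christodoulou2023fair}'s gadgets derive their coercive power from being the dominant edge at each vertex; once a strictly larger pendant edge is added, those edges are no longer priceless and the internal logic of their gadgets must be re-proved from scratch anyway, so the reduction cannot be used as a black (or even grey) box.

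The paper resolves exactly this difficulty by building its own gadgets (OR, NOT, WIRE, TRUE terminator) in which the priceless edges are not auxiliary pendants but the carriers of the truth values themselves: each agent $a_i$ is paired with a partner $a_i'$ via a priceless edge, the orientation of that edge encodes True/False, and the small-valued connecting edges ($\epsilon_1$, $\epsilon_2$) are arranged so that both the ``grab'' and ``relinquish'' sides of each priceless edge are accounted for in the gate logic (Claims~\ref{clm:allocation:EFX00:OR}--\ref{clm:allocation:EFX00:TRUE} verify both directions, including exhibiting explicit $\EFX_0^0$ allocations for the consistent cases). The reduction is from Circuit-SAT rather than from the orientation problem. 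Your NP-membership argument and your use of Proposition~\ref{pro:allocation:EFX00:priceless} are fine, but without the gadget constructions and their correctness claims the hardness direction is not established.
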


To prove Theorem \ref{thm:allocation:EFX00}, we reduce from Circuit-SAT problem to the $\EFX_0^0$ allocation problem. 
Circuit-SAT problem determines whether a given Boolean circuit has an assignment of the inputs that makes the output True, which is well-known to be NP-complete \citep{karp1975reducibility}. 

\begin{figure}[tb]
\begin{subfigure}{\columnwidth}
    \centering
    \includegraphics[width=0.4\linewidth]{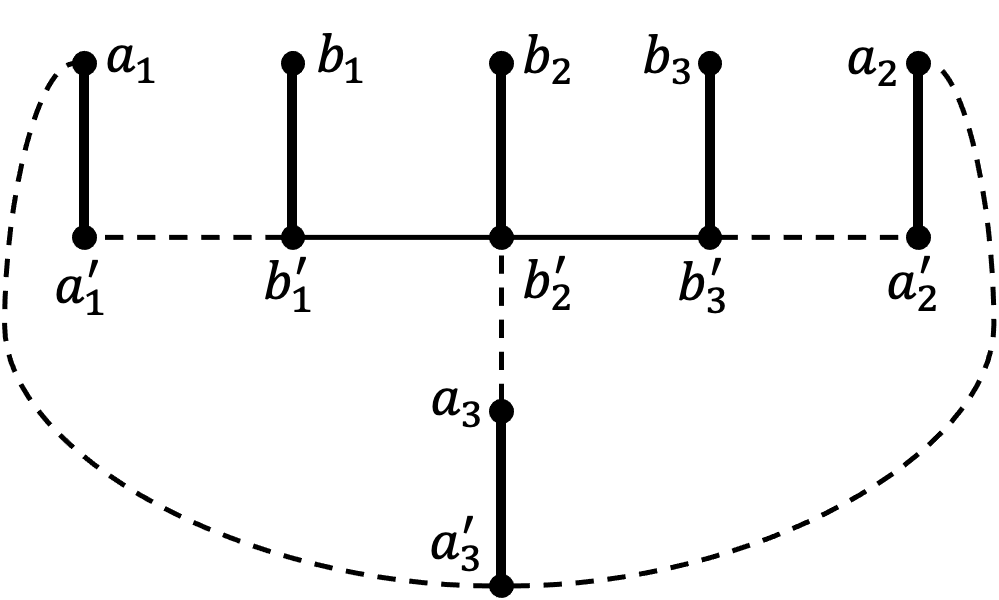}
    \caption{}
    \label{fig:allocation:EFX00:gadgets:OR}
\end{subfigure}

\hspace{0.32cm}
\begin{subfigure}{0.3\columnwidth}
    \centering
    \includegraphics[width=0.68\linewidth]{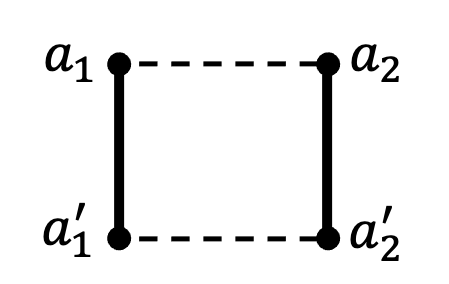}
    \caption{}
    \label{fig:allocation:EFX00:gadgets:NOT}
\end{subfigure}
\begin{subfigure}{0.3\columnwidth}
    \centering
    \includegraphics[width=0.68\linewidth]{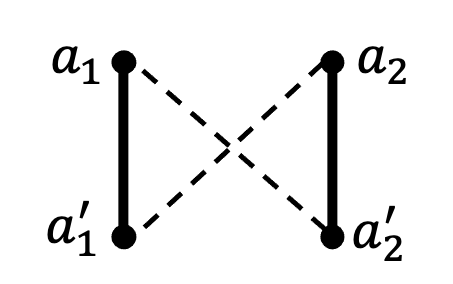}
    \caption{}
    \label{fig:allocation:EFX00:gadgets:WIRE}
\end{subfigure}
\begin{subfigure}{0.3\columnwidth}
    \centering
    \includegraphics[width=0.68\linewidth]{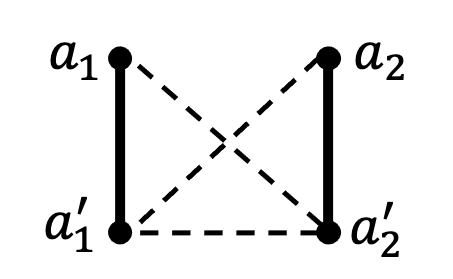}
    \caption{}
    \label{fig:allocation:EFX00:gadgets:TRUE}
\end{subfigure}
\caption{(a) OR gadget, (b) NOT gadget, (c) WIRE gadget, (d) TRUE terminator gadget. In these graphs, each agent has an additive valuation. Each bold solid edge is priceless to both its endpoint agents (e.g., it has an infinitely large value of $+\infty$), each non-bold solid edge has an infinitely small value of $\epsilon_1 > 0$ to both its endpoint agents, each dashed line also has an infinitely small value of $\epsilon_2$ to both its endpoint agents with $\epsilon_1 >\epsilon_2 > 0$.}
\label{fig:allocation:EFX00:gadgets}
\end{figure}

We first show how to simulate the OR gate, the NOT gate, the wire in the circuit and how to force the final output to be True. 
To achieve this, we construct four graphs, named OR gadget, NOT gadget, WIRE gadget, TRUE terminator gadget, respectively (see Figure \ref{fig:allocation:EFX00:gadgets}). 
It is easy to see that Proposition \ref{pro:allocation:EFX00:priceless} applies to all these four gadgets. 
That is, in any $\EFX_0^0$ allocation on each of these gadgets, each edge must be allocated to one of its endpoint agents. 
This enables us to represent each input (or output) in the circuit as an edge in the gadgets and its value (True or False) as the orientation of the edge. 

In the OR gadget (see Figure \ref{fig:allocation:EFX00:gadgets:OR}), edges $(a_1, a_1^{\prime})$ and $(a_2, a_2^{\prime})$ represent the two inputs of the OR gate, edge $(a_3, a_3^{\prime})$ represents the output. 
The following claim shows that the OR gadget correctly simulates the OR gate. 

\begin{claim}\label{clm:allocation:EFX00:OR}
    In every $\EFX_0^0$ allocation on the OR gadget, edge $(a_3, a_3^{\prime})$ is allocated to $a_3$ if and only if edge $(a_1, a_1^{\prime})$ is allocated to $a_1$ or edge $(a_2, a_2^{\prime})$ is allocated to $a_2$.
\end{claim}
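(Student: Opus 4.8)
The plan is to establish the claim by carefully tracking how the priceless edges force the orientation of the input and output edges, and then verifying the OR semantics by a case analysis on the inputs. By Proposition~\ref{pro:allocation:EFX00:priceless}, every edge in the OR gadget is allocated to one of its endpoint agents in any $\EFX_0^0$ allocation, so it is legitimate to speak of each edge being ``oriented'' toward one endpoint. I would adopt the convention that an input edge $(a_k, a_k^{\prime})$ being allocated to $a_k$ encodes the literal value \emph{True}, and toward $a_k^\prime$ encodes \emph{False} (and similarly for the output edge $(a_3, a_3^\prime)$ toward $a_3$ meaning \emph{True}). The goal is then to show that $(a_3,a_3^\prime)$ is oriented toward $a_3$ exactly when at least one of $(a_1,a_1^\prime),(a_2,a_2^\prime)$ is oriented toward its unprimed endpoint.

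First I would pin down the local constraints imposed by $\EFX_0^0$ around each vertex that carries a priceless incident edge. The key mechanism, already exploited in the proof of Proposition~\ref{pro:allocation:EFX00:priceless}, is the following dichotomy: each agent $a$ with a priceless incident edge $e$ is either \emph{envied} by the other endpoint of $e$ (when $a$ receives $e$) and hence may receive no further good-valued edge, or else $a$ \emph{envies} the holder of $e$ (when $a$ does not receive $e$) and hence may receive no edge that is a dummy for her, and moreover must be compensated by her incident $\epsilon_1$-edges so as not to envy after the single permitted removal. I would translate these two regimes into concrete restrictions on which of the $\epsilon_1$- and $\epsilon_2$-valued edges each internal agent can absorb, reading off the gadget's wiring from Figure~\ref{fig:allocation:EFX00:gadgets:OR}. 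This step produces, for each internal vertex, a short list of admissible local configurations consistent with $\EFX_0^0$.

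Next I would glue these local configurations together. Because the $\epsilon_1$-edges dominate the $\epsilon_2$-edges ($\epsilon_1 > \epsilon_2 > 0$) but both are negligible against the priceless edges, the envy comparisons after a single removal reduce to finite inequalities among small constants, and only the orientation pattern of the priceless edges can make or break them. I would chase the forced consequences: assuming the output edge is oriented toward $a_3$, show that at least one input edge must be oriented toward its unprimed endpoint (otherwise some internal agent envies a neighbor even after the allowed removal); and conversely, assuming at least one input is \emph{True}, exhibit a completion of the remaining $\epsilon$-edge allocation that orients $(a_3,a_3^\prime)$ toward $a_3$ and is $\EFX_0^0$, while assuming both inputs \emph{False} forces $(a_3,a_3^\prime)$ toward $a_3^\prime$. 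This is a bidirectional argument, so both the necessity and the sufficiency directions require separate verification.

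The main obstacle I anticipate is the bookkeeping in the gluing step: the $\EFX_0^0$ condition couples an agent's own bundle (via removal of a non-good, i.e.\ the dummy/$\epsilon_2$ items) with the envied agent's bundle (via removal of a non-chore good), so I must check \emph{both} clauses of the definition at every internal vertex and confirm that the $\epsilon_1$ versus $\epsilon_2$ margins are wide enough that a single removal never accidentally cancels the envy in the wrong direction. Getting the thresholds right — ensuring that an agent in the ``envious'' regime is rescued precisely when the intended input is \emph{True} and condemned to persistent envy otherwise — is the delicate part, and it depends on reading the exact multiplicities of $\epsilon_1$- and $\epsilon_2$-edges incident to each internal agent off the figure. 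Once those inequalities are verified, the equivalence in the claim follows by combining the forced orientations established in the two directions.
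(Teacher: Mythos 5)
Your overall strategy is the same as the paper's: invoke Proposition~\ref{pro:allocation:EFX00:priceless} to reduce to orientations of the gadget, then do a case analysis that chases forced allocations from the priceless edges and separately exhibits valid $\EFX_0^0$ completions. However, as written the proposal is a plan rather than a proof: every substantive step is deferred to ``verification'' that you do not perform. The easy direction does admit a two-step chain (if $(a_1,a_1')$ goes to $a_1$ then $a_1'$ envies $a_1$, forcing $(a_1,a_3')$ to $a_3'$, which in turn forces $(a_3,a_3')$ to $a_3$), but you never identify this chain. More importantly, the hard direction --- both inputs allocated to their primed endpoints forces $(a_3,a_3')$ to $a_3'$ --- is where the gadget's structure actually does work, and it is entirely absent. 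That argument must propagate constraints through $b_1',b_2',b_3'$: the False inputs force $(a_1',b_1')$ to $b_1'$ and $(a_2',b_3')$ to $b_3'$ (hence $(b_1,b_1')$ to $b_1$ and $(b_3,b_3')$ to $b_3$), then $b_2'$ must take one of $(b_1',b_2')$ or $(b_2',b_3')$ or she envies a neighbor irreparably, and then the inequality $\epsilon_1>\epsilon_2$ makes the donor of that edge envy $b_2'$, which is precisely what forbids $b_2'$ from also taking $(b_2',a_3)$ and forces that edge onto $a_3$, whence $(a_3,a_3')$ must go to $a_3'$. This is the one place the two $\epsilon$ scales matter, and your proposal only gestures at ``getting the thresholds right'' without locating where or why.

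Two further points. First, your local dichotomy is imprecise: an agent envied through her priceless edge may receive \emph{no} additional edge at all, not merely no further ``good-valued'' edge, because removing a dummy (or a good) from her bundle still leaves the priceless edge in place and the envy persists; the paper's argument repeatedly relies on this stronger fact. Second, the sufficiency side of your plan (``exhibit a completion ... that is $\EFX_0^0$'') requires writing down explicit allocations of all the $\epsilon$-edges in each of the cases (one input True, both True, both False) and checking the agents who receive more than one edge; without those constructions the ``if and only if'' could be vacuous in some case. None of these pieces would fail if carried out, but none of them is actually carried out, so the proposal as it stands does not establish the claim.
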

\begin{proof}
    We first show that if $(a_1, a_1^{\prime})$ is allocated to $a_1$, $(a_3, a_3^{\prime})$ must be allocated to $a_3$. 
    Since $(a_1, a_1^{\prime})$ is priceless to $a_1^{\prime}$ but is allocated to $a_1$, $a_1^{\prime}$ envies $a_1$. 
    Hence, $(a_1, a_3^{\prime})$ must be allocated to $a_3^{\prime}$. 
    Otherwise, $a_1^{\prime}$ still envies $a_1$ after removing $(a_1, a_3^{\prime})$ from $a_1$'s bundle. 
    Moreover, $(a_3, a_3^{\prime})$ must be allocated to $a_3$.
    Otherwise, $a_3$ still envies $a_3^{\prime}$ after removing $(a_1, a_3^{\prime})$ from $a_3$'s bundle. 
    By symmetry, it holds that if $(a_2, a_2^{\prime})$ is allocated to $a_2$, $(a_3, a_3^{\prime})$ must be allocated to $a_3$.

    We then show that when $(a_1, a_1^{\prime})$ is allocated to $a_1$ and $(a_3, a_3^{\prime})$ is allocated to $a_3$, no matter which endpoint agent $(a_2, a_2^{\prime})$ is allocated to, there exists an $\EFX_0^0$ allocation. 
    When $(a_2, a_2^{\prime})$ is allocated to $a_2$, we construct an $\EFX_0^0$ allocation as follows: allocate each priceless edge to the upper endpoint agent, i.e., $(a_i, a_i^{\prime})$ to $a_i$ for every $i \in \{1, 2, 3\}$ and $(b_i, b_i^{\prime})$ to $b_i$ for every $i \in \{1, 2, 3\}$; 
    allocate the middle four edges to the endpoint agents who are further away from $b_2^{\prime}$, i.e., $(a_1^{\prime}, b_1^{\prime})$ to $a_1^{\prime}$, $(b_1^{\prime}, b_2^{\prime})$ to $b_1^{\prime}$, $(b_2^{\prime}, b_3^{\prime})$ to $b_3^{\prime}$, $(b_3^{\prime}, a_2^{\prime})$ to $a_2^{\prime}$; 
    allocate $(b_2^{\prime}, a_3)$ to $b_2^{\prime}$, $(a_1, a_3^{\prime})$ to $a_3^{\prime}$, $(a_2, a_3^{\prime})$ to $a_3^{\prime}$. 
    Since each agent has a positive value for each edge she receives, to verify that the allocation is $\EFX_0^0$, it suffices to consider the agents who receive more than one edge (only $a_3^{\prime}$ in the above allocation). 
    Since both $a_1$ and $a_2$ receive their priceless edges, neither of them envies $a_3^{\prime}$ and thus the allocation is $\EFX_0^0$.
    When $(a_2, a_2^{\prime})$ is allocated to $a_2^\prime$, we construct an $\EFX_0^0$ allocation as follows: 
    allocate each priceless edge except $(a_2, a_2^{\prime})$ and $(b_1, b_1^{\prime})$  to the upper endpoint, i.e., $(a_i, a_i^{\prime})$ to $a_i$ for every $i \in \{1, 3\}$, $(b_i, b_i^{\prime})$ to $b_i$ for every $i \in \{2, 3\}$, $(a_2, a_2^{\prime})$ to $a_2^{\prime}$, $(b_1, b_1^{\prime})$ to $b_1^{\prime}$;
    for the middle four edges, allocate $(a_1^{\prime}, b_1^{\prime})$ to $a_1^{\prime}$, $(b_1^{\prime}, b_2^{\prime})$ to $b_2^{\prime}$, $(b_2^{\prime}, b_3^{\prime})$ to $b_3^{\prime}$, $(b_3^{\prime}, a_2^{\prime})$ to $b_3^{\prime}$; 
    allocate $(b_2^{\prime}, a_3)$ to $b_2^{\prime}$, $(a_1, a_3^{\prime})$ to $a_3^{\prime}$, $(a_2, a_3^{\prime})$ to $a_2$. 
    In the above allocation, only $b_2^{\prime}$ and $b_3^{\prime}$ receive more than one edge. 
    For $b_2^{\prime}$, neither $b_1^{\prime}$ nor $a_3$ envies her since both of them receive their priceless edges. 
    For $b_3^{\prime}$, $a_2^{\prime}$ does not envy her since she receives her priceless edge, and $b_2^{\prime}$ does not envy her since she receives a value of $\epsilon_1 + \epsilon_2$ and thinks that $b_3^{\prime}$ receives a value of $\epsilon_1$. 
    Therefore, the allocation is also $\EFX_0^0$. 
    By symmetry, when $(a_2, a_2^{\prime})$ is allocated to $a_2$ and $(a_3, a_3^{\prime})$ is allocated to $a_3$, no matter which endpoint agent $(a_1, a_1^{\prime})$ is allocated to, there exists an $\EFX_0^0$ allocation.

    We next show that if both $(a_1, a_1^{\prime})$ and $(a_2, a_2^{\prime})$ are allocated to their lower endpoint agents, $(a_3, a_3^{\prime})$ must be allocated to $a_3^{\prime}$. 
    It suffices to show that $(b_2^{\prime}, a_3)$ must be allocated to $a_3$. 
    This is because if both $(a_3, a_3^{\prime})$ and $(b_2^{\prime}, a_3)$ are allocated to $a_3$, $a_3^{\prime}$ will envy $a_3$ even after removing $(b_2^{\prime}, a_3)$ from $a_3$'s bundle. 
    If $(b_2, b_2^{\prime})$ is allocated to $b_2^{\prime}$, $(b_2^{\prime}, a_3)$ must be allocated to $a_3$ and we have done, since otherwise $b_2$ will envy $b_2^{\prime}$ even after removing $(b_2^{\prime}, a_3)$ from $b_2^{\prime}$'s bundle. 
    Therefore, it remains to consider the case when $(b_2, b_2^{\prime})$ is allocated to $b_2$. 
    Since $(a_1, a_1^{\prime})$ is allocated to $a_1^{\prime}$, $(a_1^{\prime}, b_1^{\prime})$ must be allocated to $b_1^{\prime}$ since otherwise $a_1$ will envy $a_1^{\prime}$ even after removing $(a_1^{\prime}, b_1^{\prime})$ from $a_1^{\prime}$'s bundle. 
    Furthermore, $(b_1, b_1^{\prime})$ must be allocated to $b_1$. 
    By the same reasoning, $(a_2^{\prime}, b_3^{\prime})$ must be allocated to $b_3^{\prime}$ and $(b_3, b_3^{\prime})$ must be allocated to $b_3$. 
    Then consider the incident edges of $b_2^{\prime}$ that have not been allocated so far, i.e., $(b_1^{\prime}, b_2^{\prime})$ and $(b_2^{\prime}, b_3^{\prime})$.  $b_2^{\prime}$ must receive one of these two edges, since otherwise she will envy $b_1^{\prime}$ even after removing $(a_1^{\prime}, b_1^{\prime})$ from $b_1^{\prime}$'s bundle, and $b_3^{\prime}$ even after removing $(a_2^{\prime}, b_3^{\prime})$ from $b_3^{\prime}$'s bundle. 
    No matter which edge $b_2^{\prime}$ receives, $(b_2^{\prime}, a_3)$ must be allocated to $a_3$. 
    To see this, let the edge that $b_2^{\prime}$ receives be $(b_1^{\prime}, b_2^{\prime})$. 
    Since $b_1^{\prime}$ receives a value of $\epsilon_2$ and thinks that $b_2^{\prime}$ receives a value of $\epsilon_1 > \epsilon_2$, she envies $b_2^{\prime}$ and thus $b_2^{\prime}$ cannot receive $(b_2^{\prime}, a_3)$ any more. 

    Lastly, we show that when all of $(a_1, a_1^{\prime})$, $(a_2, a_2^{\prime})$ and $(a_3, a_3^{\prime})$ are allocated to their lower endpoint agents, there exists an $\EFX_0^0$ allocation. 
    We allocate each priceless edge except $(b_1, b_1^{\prime})$ and $(b_3, b_3^{\prime})$ to the lower endpoint agent, i.e., $(a_i, a_i^{\prime})$ to $a_i^{\prime}$ for every $i \in \{1, 2, 3\}$, $(b_2, b_2^{\prime})$ to $b_2^{\prime}$, and $(b_i, b_i^{\prime})$ to $b_i$ for every $i \in \{1, 3\}$; 
    for the middle four edges, allocate $(a_1^{\prime}, b_1^{\prime})$ and $(b_1^{\prime}, b_2^{\prime})$ to $b_1^{\prime}$, $(b_2^{\prime}, b_3^{\prime})$ and $(b_3^{\prime}, a_2^{\prime})$ to $b_3^{\prime}$; 
    allocate $(b_2^{\prime}, a_3)$ to $a_3$, $(a_1, a_3^{\prime})$ to $a_1$, $(a_2, a_3^{\prime})$ to $a_2$. 
    In the above allocation, only $b_1^{\prime}$ and $b_3^{\prime}$ receive more than one edge. 
    For $b_1^{\prime}$, neither $a_1^{\prime}$ nor $b_2^{\prime}$ envies her since both of them receive their priceless edges. 
    By the same reasoning, neither $b_2^{\prime}$ nor $a_2^{\prime}$ envies $b_3^{\prime}$. 
    Therefore, the allocation is $\EFX_0^0$. 
\end{proof}

In the NOT gadget (see Figure \ref{fig:allocation:EFX00:gadgets:NOT}), edge $(a_1, a_1^{\prime})$ represents the input of the NOT gate, and edge $(a_2, a_2^{\prime})$ represents the output. 
The following claim shows that the NOT gadget correctly simulates the NOT gate. 

\begin{claim}\label{clm:allocation:EFX00:NOT}
    In every $\EFX_0^0$ allocation on the NOT gadget, edge $(a_2, a_2^{\prime})$ is allocated to $a_2^{\prime}$ if and only if edge $(a_1, a_1^{\prime})$ is allocated to $a_1$.
\end{claim}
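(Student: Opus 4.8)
The plan is to mirror the structure of the OR-gadget argument (Claim \ref{clm:allocation:EFX00:OR}), exploiting the same two local consequences of $\EFX^0_0$ that already drive Proposition \ref{pro:allocation:EFX00:priceless}. First I would invoke Proposition \ref{pro:allocation:EFX00:priceless}, which applies to the NOT gadget, to conclude that in any $\EFX^0_0$ allocation every edge is allocated to one of its two endpoints; hence the allocation is really an orientation, and the truth value carried by the input edge $(a_1,a_1')$ and the output edge $(a_2,a_2')$ is well defined by which endpoint receives it. This reduces the whole claim to a statement about which of the two orientations each of these edges can take.

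The engine of the proof is the following pair of forced moves, both immediate from the $\EFX^0_0$ conditions when every edge is a good for both endpoints. If a priceless edge incident to an agent $v$ is taken by her neighbor $u$, then $v$ envies $u$, and removing any edge other than that priceless one from $u$'s bundle leaves $v$ still envious (the priceless edge is worth $+\infty$ to $v$); therefore the envied agent $u$ may hold no edge besides this priceless one. Symmetrically, an agent who loses her incident priceless edge to a neighbor becomes envious and, since every non-incident edge is a dummy for her, may hold no edge that is not incident to her. I would then trace these two constraints along the internal priceless and $\epsilon$-edges of the gadget: assuming $(a_1,a_1')$ is allocated to $a_1$ forces a chain of orientations culminating in $(a_2,a_2')$ being pushed to $a_2'$, and assuming $(a_1,a_1')$ is allocated to $a_1'$ forces the complementary chain ending with $(a_2,a_2')$ at $a_2$. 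Because each of the two edges has only two possible orientations, these two implications together establish the claimed equivalence.

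Finally, to rule out a vacuous statement I would exhibit, for each of the two consistent input/output combinations, an explicit $\EFX^0_0$ allocation, exactly as in the OR proof: send each priceless edge to its forced endpoint and distribute the $\epsilon_1$- and $\epsilon_2$-edges along the internal path, then verify $\EFX^0_0$ only for the handful of internal agents who receive more than one edge, since a singleton bundle of positive value never generates a violation.

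I expect the main obstacle to be this feasibility check rather than the forcing argument. Verifying $\EFX^0_0$ for an internal agent holding two edges relies on the fine ordering $\epsilon_1 > \epsilon_2 > 0$, so that an agent holding one $\epsilon_1$-edge and one $\epsilon_2$-edge is not envied by a neighbor holding only a single $\epsilon_2$-edge --- precisely the delicate step that appears for $b_3'$ in Claim \ref{clm:allocation:EFX00:OR}. Keeping the dummy/good bookkeeping straight across the two $\EFX^0_0$ conditions (in particular, that removing a dummy from an envied agent's bundle is exactly the constraint forbidding an envied agent from holding anything extra) is where errors are easiest to make.
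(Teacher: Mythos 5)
Your proposal is correct and follows essentially the same route as the paper: the paper's proof of Claim \ref{clm:allocation:EFX00:NOT} likewise chains the two forced moves (an envied agent may hold nothing beyond the contested priceless edge; hence $(a_1,a_2)$ is pushed to $a_2$ and then $(a_2,a_2')$ to $a_2'$, with the reverse direction by symmetry) and then exhibits the clockwise and counterclockwise allocations as witnesses for the two consistent combinations. The only difference is that your anticipated difficulty with multi-edge bundles and the $\epsilon_1>\epsilon_2$ ordering does not actually arise here, since in the NOT gadget the witness allocations give every agent exactly one edge.
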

\begin{proof}
    If $(a_1, a_1^{\prime})$ is allocated to $a_1$, $(a_1, a_2)$ must be allocated to $a_2$ since otherwise $a_1^{\prime}$ will envy $a_1$ even after removing $(a_1, a_2)$ from $a_1$'s bundle. 
    Furthermore, $(a_2, a_2^{\prime})$ must be allocated to $a_2^{\prime}$. 
    By symmetry, if $(a_1, a_1^{\prime})$ is allocated to $a_1^{\prime}$,  $(a_2, a_2^{\prime})$ must be allocated to $a_2$. 
    When $(a_1, a_1^{\prime})$ is allocated to $a_1$ and $(a_2, a_2^{\prime})$ is allocated to $a_2^{\prime}$, allocating the edges clockwise produces an $\EFX_0^0$ allocation. 
    When $(a_1, a_1^{\prime})$ is allocated to $a_1^{\prime}$ and $(a_2, a_2^{\prime})$ is allocated to $a_2$, allocating the edges counterclockwise produces an $\EFX_0^0$ allocation. 
\end{proof}

In the WIRE gadget (see Figure \ref{fig:allocation:EFX00:gadgets:WIRE}), edge $(a_1, a_1^{\prime})$ represents the input of the wire of a circuit, and edge $(a_2, a_2^{\prime})$ represents the output. 
Since the only difference between the NOT gadget and the WIRE gadget is that the labels of $a_2$ and $a_2^{\prime}$ are exchanged, we have the following claim that shows the WIRE gadget correctly simulates the wire in the circuit. 

\begin{claim}\label{clm:allocation:EFX00:WIRE}
    In every $\EFX_0^0$ allocation on the WIRE gadget, edge $(a_2, a_2^{\prime})$ is allocated to $a_2$ if and only if edge $(a_1, a_1^{\prime})$ is allocated to $a_1$.
\end{claim}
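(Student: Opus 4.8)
The plan is to exploit the structural identity between the WIRE and NOT gadgets rather than repeat the case analysis from scratch. As the sentence preceding the claim observes, the WIRE gadget is obtained from the NOT gadget by exchanging the labels of the two vertices $a_2$ and $a_2^{\prime}$, while the underlying graph, the edge values, and the priceless/$\epsilon_1/\epsilon_2$ structure are otherwise identical. Since the $\EFX_0^0$ property is defined purely through the agents' valuations over the bundles they receive and is insensitive to how vertices are named, the map that swaps $a_2 \leftrightarrow a_2^{\prime}$ is a valuation-preserving isomorphism between the NOT gadget and the WIRE gadget, and it therefore carries $\EFX_0^0$ allocations of one gadget bijectively onto $\EFX_0^0$ allocations of the other.

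First I would make this correspondence explicit. Fix any $\EFX_0^0$ allocation $\mathbf{X}$ on the WIRE gadget and let $\mathbf{X}^{\prime}$ be the allocation on the NOT gadget obtained by renaming $a_2 \mapsto a_2^{\prime}$ and $a_2^{\prime} \mapsto a_2$ (and leaving every other vertex, in particular the input pair $a_1, a_1^{\prime}$, untouched). Because the two gadgets agree after this single swap, $\mathbf{X}^{\prime}$ is a legal allocation on the NOT gadget and remains $\EFX_0^0$ there. The edge $(a_2, a_2^{\prime})$ is fixed setwise by the swap, so it is allocated to $a_2$ under $\mathbf{X}$ exactly when it is allocated to $a_2^{\prime}$ under $\mathbf{X}^{\prime}$; and the input condition ``$(a_1, a_1^{\prime})$ is allocated to $a_1$'' reads identically in both gadgets, since those vertices are unaffected.

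I would then simply invoke Claim \ref{clm:allocation:EFX00:NOT} applied to $\mathbf{X}^{\prime}$: in the NOT gadget, $(a_2, a_2^{\prime})$ is allocated to $a_2^{\prime}$ if and only if $(a_1, a_1^{\prime})$ is allocated to $a_1$. Translating the left-hand side back through the isomorphism, ``allocated to $a_2^{\prime}$ in $\mathbf{X}^{\prime}$'' becomes ``allocated to $a_2$ in $\mathbf{X}$'', yielding precisely the claimed equivalence for the WIRE gadget. (The converse direction, starting from an $\EFX_0^0$ allocation on the NOT gadget, is symmetric and in fact not needed, since the equivalence is an ``if and only if'' that the isomorphism transports in one shot.)

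The only point demanding care, and hence the main obstacle albeit a mild one, is confirming that the stated relabeling genuinely is an isomorphism of the \emph{weighted} graphs rather than merely of their underlying unweighted structure. Concretely, I would verify from Figure \ref{fig:allocation:EFX00:gadgets:NOT} and Figure \ref{fig:allocation:EFX00:gadgets:WIRE} that after exchanging $a_2$ and $a_2^{\prime}$ the multiset of edges incident to these two vertices, together with their values, coincides with that of the NOT gadget, so that no asymmetry introduced by the strict inequality $\epsilon_1 > \epsilon_2$ or by the placement of the priceless edges is destroyed by the swap. Once this is checked, the argument is immediate and invokes no new reasoning beyond Claim \ref{clm:allocation:EFX00:NOT}.
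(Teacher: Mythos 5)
Your proposal is correct and matches the paper's approach exactly: the paper gives no separate proof for this claim, justifying it in the preceding sentence by precisely the observation that the WIRE gadget is the NOT gadget with the labels of $a_2$ and $a_2^{\prime}$ exchanged, so Claim \ref{clm:allocation:EFX00:NOT} transports directly. Your version just spells out the relabeling isomorphism more carefully than the paper does.
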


In the TRUE terminator gadget (see Figure \ref{fig:allocation:EFX00:gadgets:TRUE}), edge $(a_2, a_2^{\prime})$ is allocated to $a_2$ in every $\EFX_0^0$ allocation. 

\begin{claim}\label{clm:allocation:EFX00:TRUE}
    In every $\EFX_0^0$ allocation on the TRUE terminator gadget, edge $(a_2, a_2^{\prime})$ is allocated to $a_2$.
\end{claim}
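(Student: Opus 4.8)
The plan is to mirror the arguments used for the OR, NOT and WIRE gadgets, exploiting that the TRUE terminator also satisfies the three hypotheses of Proposition~\ref{pro:allocation:EFX00:priceless}. First I would invoke that proposition to conclude that in any $\EFX_0^0$ allocation on the gadget every edge is oriented onto one of its two endpoints; in particular the priceless output edge $(a_2,a_2')$ is held by $a_2$ or by $a_2'$, so it suffices to rule out the case that $a_2'$ receives it. Throughout I would use two local consequences of $\EFX_0^0$ that were already distilled in the proof of Proposition~\ref{pro:allocation:EFX00:priceless}: (i) if an agent is envied because she holds the priceless edge that her neighbour covets, then she can hold no further edge, since removing any such extra edge (a dummy for the envier) leaves the priceless edge in her bundle and the envy uneliminated; and (ii) an envious agent can hold no edge that is not incident to her, since such an edge is a dummy for her and removing it from her own bundle does not remove the envy.

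For the contradiction, I would assume $(a_2,a_2')$ is allocated to $a_2'$, so that $a_2$ envies $a_2'$. By consequence (i), $a_2'$ holds only $(a_2,a_2')$, which forces every other edge incident to $a_2'$ onto its opposite endpoint. I would then propagate these forced orientations through the auxiliary vertices and small-value edges of the gadget exactly as in the OR-gadget analysis: each priceless auxiliary edge must again sit at one endpoint, creating a new envied/envious pair, and wherever two small-value edges of values $\epsilon_1$ and $\epsilon_2$ meet at a vertex I would use $\epsilon_1>\epsilon_2$ to fix the direction (the holder of the $\epsilon_2$-edge envies the neighbour who could otherwise accumulate the $\epsilon_1$-edge). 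Chasing this cascade around the gadget, I expect to reach an agent who is simultaneously forced to be envied and to carry an extra incident edge, or an envious agent forced to carry a dummy, either of which violates $\EFX_0^0$ by (i) or (ii). This contradiction shows $(a_2,a_2')$ cannot go to $a_2'$, hence it is always allocated to $a_2$.

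The main obstacle I anticipate is the bookkeeping of this cascade: I must show that \emph{every} branch of the forced-orientation argument dead-ends in a contradiction, rather than just one convenient branch, which means carefully tracking, vertex by vertex, who envies whom and applying (i)/(ii) with the correct $\epsilon_1$-versus-$\epsilon_2$ comparisons (this is precisely where the OR-gadget proof had to split into sub-cases). To confirm the gadget is not vacuously forcing, I would finally exhibit one explicit $\EFX_0^0$ allocation in which $(a_2,a_2')$ is held by $a_2$ --- obtained, as in the NOT and WIRE gadgets, by orienting the auxiliary edges consistently around the cycle so that every agent either receives her priceless edge or values the relevant neighbour's bundle by at most her own --- thereby certifying that the terminator admits an $\EFX_0^0$ allocation and that it pins the output to True.
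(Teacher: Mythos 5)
Your proposal follows essentially the same route as the paper: assume $(a_2,a_2')$ goes to $a_2'$, use the resulting envy from $a_2$ to force $(a_1,a_2')$ and $(a_1',a_2')$ onto $a_1$ and $a_1'$ respectively, and then observe that neither endpoint of the priceless edge $(a_1,a_1')$ can receive it without being envied while holding a second edge --- which is exactly where your ``cascade'' terminates, after a single step and without needing the $\epsilon_1$-versus-$\epsilon_2$ comparisons. The concluding explicit allocation with $(a_2,a_2')$ at $a_2$ is also present in the paper's proof, so the two arguments coincide.
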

\begin{proof}
    For the sake of contradiction, suppose that $(a_2, a_2^{\prime})$ is allocated to $a_2^{\prime}$. 
    In this case, $(a_1, a_2^{\prime})$ must be allocated to $a_1$, and $(a_1^{\prime}, a_2^{\prime})$ must be allocated to $a_1^{\prime}$. 
    Otherwise, $a_2$ will envy $a_2^{\prime}$ even after removing one edge except $(a_2, a_2^{\prime})$ from $a_2^{\prime}$'s bundle. 
    Then neither $a_1$ nor $a_1^{\prime}$ can receive  $(a_1, a_1^{\prime})$, a contradiction. 
    This is because the endpoint agent that receives $(a_1, a_1^{\prime})$ will be envied by the other even after removing one edge except $(a_1, a_1^{\prime})$ from her bundle. 
    When $(a_2, a_2^{\prime})$ is allocated to $a_2$, allocating $(a_1, a_1^{\prime})$ to $a_1$, $(a_1^{\prime}, a_2)$ and $(a_1^{\prime}, a_2^{\prime})$ to $a_1^{\prime}$, $(a_1, a_2^{\prime})$ to $a_2^{\prime}$ produces an $\EFX_0^0$ allocation. 
\end{proof}

Given a circuit, we first substitute each AND gate with three NOT gates and one OR gate, and get an equivalent circuit without AND gates. 
For the new circuit, we construct a priceless edge with a value of $+\infty$ for each input, and the corresponding gadget for each gate and wire. 
We then construct a True terminator gadget to force the final output to be True. 
Figure \ref{fig:allocation:EFX00:AND_circuit} shows the graph constructed from a simple circuit with one AND gate, two inputs and one final output.
Note that Proposition \ref{pro:allocation:EFX00:priceless} still applies to the graph we construct. 

Up to now, it is not hard to see the correctness of Theorem \ref{thm:allocation:EFX00}. 

\begin{proof}[Proof of Theorem \ref{thm:allocation:EFX00}]
    To prove Theorem \ref{thm:allocation:EFX00}, we show that the circuit has a satisfying assignment if and only if the constructed graph has an $\EFX^0_0$ allocation. 
    For one direction, we assume that there exists an assignment of the inputs such that the final output of the circuit is True and use the assignment to create an allocation as follows: for each input, if it is set to True in the assignment, allocate the corresponding edge to the upper endpoint; otherwise, allocate the edge to the lower endpoint. 
    Allocate the edge that simulates the final output to the upper endpoint. 
    Allocate the remaining edges in each gadget according to Claims \ref{clm:allocation:EFX00:OR} to \ref{clm:allocation:EFX00:TRUE}. 
    Clearly, the allocation is $\EFX_0^0$. 
    
    For the other direction, we assume that there exists an $\EFX_0^0$ allocation in the constructed graph and use the allocation to create an assignment as follows: for each input, if the corresponding edge is allocated to the upper endpoint, set the input to True; otherwise, set the input to False. 
    By Claim \ref{clm:allocation:EFX00:TRUE}, the edge that simulates the final output must be allocated to the upper endpoint. 
    Therefore, by Claims \ref{clm:allocation:EFX00:OR} to \ref{clm:allocation:EFX00:WIRE}, the assignment makes the final output of the circuit True. 
\end{proof}

\begin{figure}[tb]
\centering
\includegraphics[width=0.6\columnwidth]{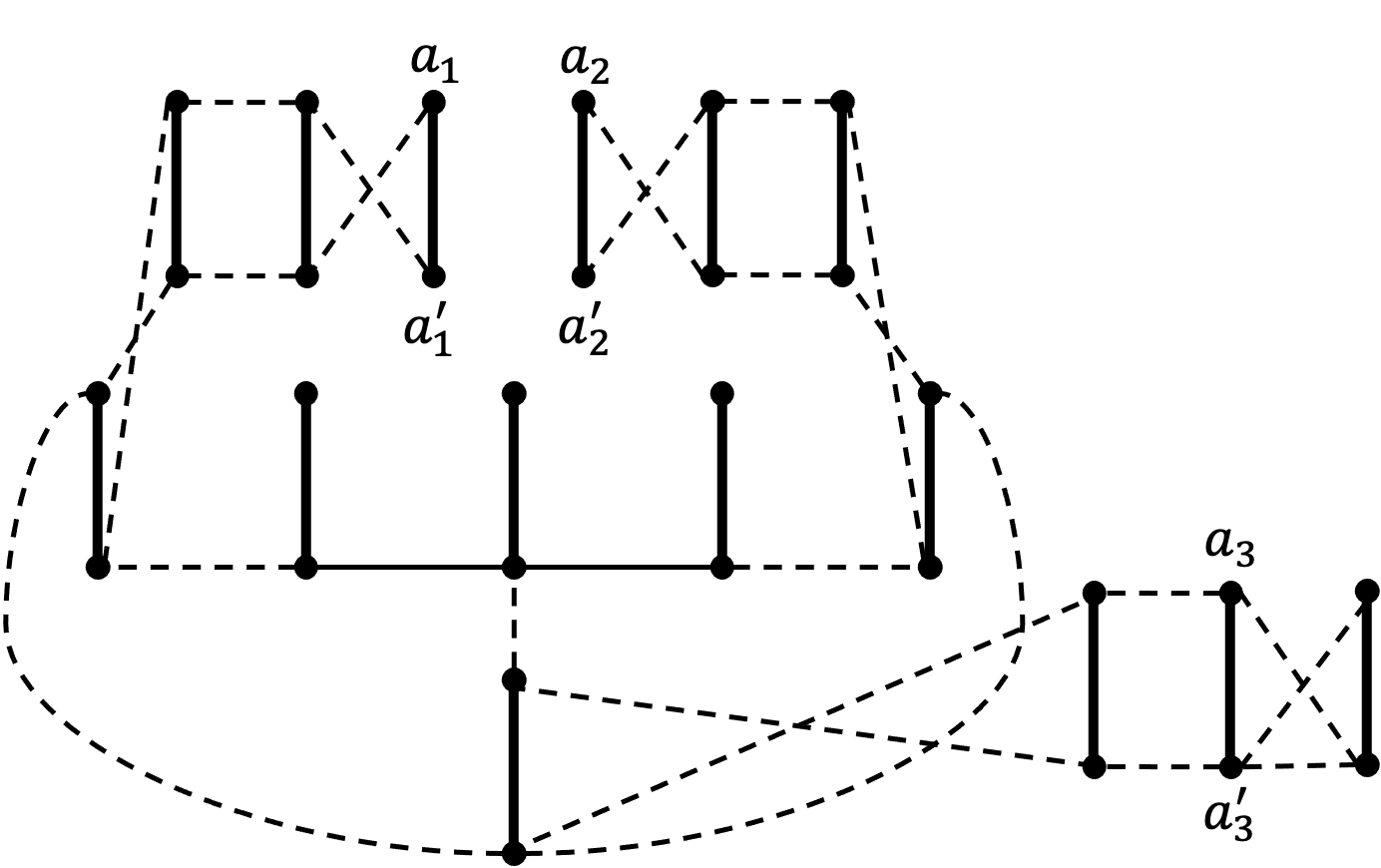}
\caption{The graph is constructed from the circuit that consists of only one AND gate, two inputs, and one final output. $(a_1, a_1^{\prime})$ and $(a_2, a_2^{\prime})$ simulate the inputs, $(a_3, a_3^{\prime})$ simulates the final output. }
\label{fig:allocation:EFX00:AND_circuit}
\end{figure}

\begin{remark}
    Our reduction borrows an idea from the reduction by \citet{christodoulou2023fair} (see Theorem 2 in their paper) and generalizes their reduction. 
    Our reduction can imply their result, while theirs cannot carry over to our problem since it relies on the orientation model. 
\end{remark}

\subsection{$\EFX_-^0$ Allocations}
\label{subsec:allocation:EFX0-}
We next study $\EFX_-^0$ and have the following theorem. 
\begin{theorem}\label{thm:allocation:EFX0-}
    For any simple graph, an $\EFX_-^0$ allocation always exists and can be computed in polynomial time. 
\end{theorem}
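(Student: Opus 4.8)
The plan is to reuse the $\EFX^0$ algorithm of \citet{christodoulou2023fair} for goods as the engine for the positively valued edges, and then to graft the remaining edges onto its output while preserving both clauses of $\EFX^0_-$. First I would partition the edges by their type at the two endpoints into three groups: \emph{good edges}, which are goods for at least one endpoint; \emph{chore edges}, which are chores for at least one endpoint and goods for neither; and \emph{dummy edges}, which are dummies for both endpoints. Dummy edges can be appended to any bundle at the very end, since they never change any agent's value and never trigger either clause of the definition.

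Before designing the grafting step it pays to extract the right reading of the definition. In clause~1 of $\EFX^0_-$, a dummy $e\in X_j$ for $a_i$ satisfies $v_i(X_j\setminus\{e\})=v_i(X_j)$, so the clause forces $v_i(X_i)\ge v_i(X_j)$; hence if $a_i$ envies $a_j$ then $X_j$ may contain \emph{no} edge that is a dummy for $a_i$. Because a chore edge $e_{p,q}$ is a dummy for every agent outside $\{a_p,a_q\}$, this rules out the tempting shortcut of dumping unwanted chores onto an uninvolved third agent: that agent typically holds goods and is therefore envied, and the chore would then sit as a forbidden dummy inside an envied bundle. This is precisely why chore edges must be oriented onto one of their own endpoints, and why the endpoint that absorbs a chore has to be an agent whom no non-incident agent envies. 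Clause~2 contributes a second, orthogonal constraint: if an agent keeps a single chore, then after deleting it her remaining value must still dominate every bundle she envies, so the endpoint chosen to absorb a chore must already own enough positive value there.

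Concretely, I would first run a suitable adaptation of the Christodoulou procedure on the good edges --- where each such edge is treated as valuable only to the endpoint(s) for whom it is a good, and is never assigned to an endpoint for whom it is not a good --- to obtain a base allocation $\mathbf{Y}$ that is $\EFX^0$, and then prove the additional structural facts about $\mathbf{Y}$ that I need but that its original analysis does not record: a characterization of which bundles are envied, the property that an envied bundle consists only of edges incident to each agent that envies it, and a quantitative comparison between an agent's own value and the value of the bundles she envies. Treating these as an invariant, I would insert the chore edges one at a time, each time routing the edge to an endpoint that (i) currently holds no chore, (ii) is envied by no agent for which the edge is a dummy, and (iii) retains, after the insertion and after deleting that single chore, at least the value of every bundle it envies. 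The heart of the argument is an existence lemma asserting that such an endpoint always exists; I expect to establish it by a local exchange argument on the envy relation, showing that if neither endpoint were admissible then $\mathbf{Y}$ would already have violated $\EFX^0$, possibly after re-routing one good edge to repair the offending envy.

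The step I expect to be hardest is exactly this existence lemma, because it must reconcile the no-dummy-in-an-envied-bundle constraint coming from clause~1 with the domination constraint coming from clause~2, and must do so robustly as chores accumulate without invalidating the endpoints committed to earlier chores. This is where the ``properties not revealed by'' the original algorithm enter, and where a careful insertion order together with a charging argument over the envy graph is likely to be needed. Once the lemma holds, checking $\EFX^0_-$ group by group and bounding the running time by the number of edges are routine, and the few degenerate small graphs (for instance when an edge has no available non-endpoint) can be settled by inspection, using the fact that removing the lone chore an agent holds already drives her value to the level of an empty bundle.
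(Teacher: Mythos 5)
Your high-level architecture (run an adapted version of the goods algorithm of \citet{christodoulou2023fair} first, record extra structural invariants about who is envied and by how much, then place the remaining edges) does match the paper's two-part structure, and your reading of clause~1 --- an envied bundle may contain no edge that is a dummy for the envier --- is exactly the observation the paper builds on. The gap is in your treatment of the chore edges. You insist that every edge that is a chore for some endpoint and a good for neither must be \emph{oriented}, i.e.\ placed on one of its two endpoints, and your plan then rests on an existence lemma saying that some endpoint simultaneously holds no chore, is envied by nobody for whom the edge is a dummy, and still dominates every bundle it envies after the chore is deleted. That lemma is false. Take a path $b\,\text{--}\,p\,\text{--}\,q\,\text{--}\,c$ in which $e_{b,p}$ and $e_{q,c}$ are large goods for both of their endpoints and $e_{p,q}$ is a chore for both $a_p$ and $a_q$. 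Any reasonable goods phase gives $e_{b,p}$ to $a_p$ and $e_{q,c}$ to $a_q$, so $a_p$ is envied by $a_b$ and $a_q$ is envied by $a_c$, and $e_{p,q}$ is a dummy for both $a_b$ and $a_c$; placing $e_{p,q}$ on either endpoint therefore violates clause~1 for the corresponding envier, and the re-routing of a good edge that you invoke as a repair only moves the problem (if $a_b$ gets $e_{b,p}$ instead, then $a_p$, now holding only the chore, has negative value, envies agents whose bundles are dummies for her, and also fails clause~2 against $a_b$ once the chore is removed). So there are instances in which a chore-for-both edge cannot be given to either endpoint in \emph{any} $\EFX^0_-$ allocation consistent with the goods phase.

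The paper resolves this by doing the opposite of what you propose: in its final allocation \emph{no} agent ever receives an edge that is a chore for her (Property~(4) of Definition~\ref{def:allocation:EFX0-:properties}), so clause~2 is vacuous and the single-chore-per-agent bookkeeping you describe is unnecessary. A chore-for-both edge is instead given to an agent for whom it is a \emph{dummy}: either a non-envied agent who is ``safe'' for both endpoints (guaranteed to exist by Property~(3)), or, in the residual case $G_4$, an agent who is envied only by an endpoint of that edge, for whom the edge is a chore and hence exempt from clause~1. Making this work is exactly what Properties~(3), (6), (7) and~(8) and the envy-path argument in the $G_3$ case are for; your sketch has no substitute for this machinery, and the ``local exchange'' existence lemma you gesture at is the step that cannot be made to work.
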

 
We first introduce some notations. 
Given a (partial) allocation $\mathbf{X}  = (X_1, \ldots, X_n)$, let $R(\mathbf{X})$ denote the set of unallocated edges, i.e., $R(\mathbf{X}) = M \setminus \bigcup_{a_i \in N}X_i$.
We say an agent $a_j$ is \textit{safe} for another agent $a_i$ if $a_i$ does not envy $a_j$ even if $a_j$ receives all $a_i$'s unallocated incident edges that are not chores for $a_i$, i.e., $v_i(X_i) \ge v_i(X_j \cup (E_i^{\ge 0} \cap R(\mathbf{X})))$. 
We next introduce some properties of allocations. 

\begin{definition}[Properties of a (Partial) Allocation]\label{def:allocation:EFX0-:properties}
    We say that a (partial) allocation $\mathbf{X}$ satisfies
    \begin{itemize}
        \vspace{-4pt}
        \item Property (1) if for every agent $a_i$, the value of her bundle is at least the largest value among her unallocated incident edges that are not chores for her. That is, $v_i(X_i) \ge v_i(e)$ for every edge $e \in E_i^{\ge 0} \cap R(\mathbf{X})$;
        \vspace{-4pt}
        \item Property (2) if for every envied agent $a_i$, the value of her bundle is at least the value of all her unallocated incident edges that are not chores for her. That is, $v_i(X_i) \ge v_i(E_i^{\ge 0} \cap R(\mathbf{X}))$; 
        \vspace{-4pt}
        \item Property (3) if for every two envied agents, there exists a non-envied agent who is safe for both of them; 
        \vspace{-4pt}
        \item Property (4) if no agent receives an edge that is a chore for her. That is, $e \in E_i^{\ge 0}$ for any $a_i \in N$ and $e \in X_i$; 
        \vspace{-4pt}
        \item Property (5) if every envied agent $a_i$ receives exactly one edge, i.e., $|X_i| = 1$; 
        \vspace{-4pt}
        \item Property (6) if every envied agent is envied by exactly one agent; 
        \vspace{-4pt}
        \item Property (7) if there is no envy cycle among the agents. That is, there does not exist a sequence of the agents $a_{i_0} \leftarrow a_{i_1} \leftarrow \cdots \leftarrow a_{i_s}$ such that $a_{i_l}$ envies $a_{i_{l-1}}$ for every $l \in [s]$ and $i_0 = i_s$; 
        \vspace{-4pt}
        \item Property (8) if for any sequence of agents $a_{i_0} \leftarrow a_{i_1} \leftarrow \cdots \leftarrow a_{i_s}$ such that  $a_{i_l}$ envies $a_{i_{l-1}}$ for every $l \in [s]$ and $a_{i_s}$ is non-envied, we have that $a_{i_l}$ is safe for $a_{i_0}$ for every $l \in [s]$.
        \vspace{-4pt}
    \end{itemize}
\end{definition}

We obtain an $\EFX^0_-$ allocation in two parts. 

\noindent \textbf{Part 1}. 
In the first part, we compute a (partial) $\EFX^0_-$ orientation that satisfies Properties (1)-(8) in Definition \ref{def:allocation:EFX0-:properties}. 
Our algorithms in this part are adapted from those by \cite{christodoulou2023fair}. 
There are two differences between our algorithms and \cite{christodoulou2023fair}'s. 
First, since there is one more requirement in our problem that agents cannot envy others after removing a chore from their own bundles, we need to carefully allocate the edges that are chores for their endpoint agents. 
Second, the algorithms by \cite{christodoulou2023fair} cannot guarantee Property (8) and our algorithms need to deal with the case where Property (8) is not satisfied.

We have the following lemma. 
The algorithms and proofs can be seen in Appendix \ref{ap:allocation:EFX0-:part1}. 

\begin{lemma}\label{lem:allocation:EFX0-:first}
    For any simple graph, a (partial) $\EFX^0_-$ orientation that satisfies Properties (1)-(8) in Definition \ref{def:allocation:EFX0-:properties} can be computed in polynomial time. 
\end{lemma}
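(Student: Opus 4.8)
The plan is to adapt the incremental orientation algorithm of \cite{christodoulou2023fair} for goods, but to restrict attention to an orientation in which an edge is only ever assigned to an endpoint for whom it is a good. This immediately secures Property~(4): since a chore is never placed in anyone's bundle, every allocated edge is a non-chore for its holder, and all chore edges (together with some good edges) are simply left in $R(\mathbf{X})$ to be handled in Part~2. I would maintain the remaining conditions as loop invariants. Concretely, I would keep the envy relation acyclic (Property~(7)) with every envied agent held to a single good edge (Property~(5)) and envied by exactly one agent (Property~(6)), so that the envy graph decomposes into trees whose maximal envy paths terminate at non-envied agents. The value-domination conditions, Properties~(1) and~(2), would be maintained by only adding an edge to an agent when her resulting bundle value stays above every single unallocated incident good (Property~(1)) and, for envied agents, above the total value of all her unallocated incident goods (Property~(2)). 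This keeps the partial orientation $\EFX^0_-$ throughout.

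The core would be a subroutine that, given the current partial orientation and an unallocated good edge $e=(a_i,a_j)$, decides how to orient $e$ and then repairs the invariants. First I would try to give $e$ to whichever endpoint can absorb it without becoming envied; if both endpoints would become envied, or if orienting $e$ would create an envy cycle, I would reroute edges along the resulting envy path. Since the invariant guarantees every envy path terminates at a non-envied agent, I can follow the path to its non-envied endpoint and restore the singleton bundles along it, discarding the displaced edges back into $R(\mathbf{X})$ so that every envied agent again holds exactly one edge. The non-envied endpoints, and more generally the safe agents promised by Property~(3), serve as anchors: whenever two envied agents coexist I would argue, as in \cite{christodoulou2023fair}, that a common safe non-envied agent exists, because its bundle is small enough that neither envier would envy it even after it absorbs their leftover non-chore incident edges.

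The main obstacle, and the place where our construction must depart from \cite{christodoulou2023fair}, is Property~(8): along any envy path $a_{i_0}\leftarrow a_{i_1}\leftarrow\cdots\leftarrow a_{i_s}$ ending at a non-envied agent, \emph{every} intermediate $a_{i_l}$ must be safe for the source $a_{i_0}$, i.e. $v_{i_0}(X_{i_0})\ge v_{i_0}\bigl(X_{i_l}\cup(E_{i_0}^{\ge 0}\cap R(\mathbf{X}))\bigr)$. The algorithm of \cite{christodoulou2023fair} maintains only a local form of safety (with respect to the immediate endpoint), so I would strengthen the repair step to restore safety for the whole path. The plan is that whenever adding $e$ renders some $a_{i_l}$ unsafe for $a_{i_0}$, I re-orient the single edge held by $a_{i_l}$, or push the contested edge one step further along the path, so that the offending bundle shrinks below $a_{i_0}$'s threshold; because each envied agent holds exactly one edge, such a reassignment is local and cannot create a new envied agent outside the path. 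I expect the genuinely delicate part to be proving that these path-wide repairs terminate in polynomial time and do not reintroduce violations elsewhere. For this I would use a monotone progress measure, for instance showing that the total value of allocated edges strictly increases with each successful augmentation while each repair only shuffles edges within a bounded envy path, together with a case analysis verifying that Properties~(1)--(7) are preserved exactly when Property~(8) is restored. The full algorithms and the invariant-preservation proofs are deferred to Appendix~\ref{ap:allocation:EFX0-:part1}.
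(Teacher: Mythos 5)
Your high-level strategy matches the paper's: build the orientation incrementally while never assigning a chore (Property~(4)), keep Properties~(5)--(7) as invariants so envy paths end at non-envied agents, and treat Property~(8) as the genuinely new obstacle requiring path repairs. But as written the proposal has two genuine gaps. First, it is a plan rather than a proof: the statement you are proving \emph{is} the content of Appendix~\ref{ap:allocation:EFX0-:part1}, so ``the full algorithms and the invariant-preservation proofs are deferred to the appendix'' defers the proof to itself. The hard content is precisely the case analysis you skip --- in particular, establishing Property~(3) when two envied agents $a_i,a_j$ have no common safe non-envied agent requires examining whether the termini $a_{i_s},a_{j_t}$ of their envy paths hold the edges $e_{i,i_s}$, $e_{j,j_t}$, and then performing one of three distinct rotations depending on how $a_i$ compares $e_{i,j}$ with $e_{i,j_t}$; you cannot simply ``argue as in \cite{christodoulou2023fair}'' because, as the paper notes, their algorithm does not supply these guarantees in the presence of chores.

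Second, your repair mechanism for Property~(8) misidentifies where the violation can occur. Given Properties~(2) and~(5), every \emph{intermediate} agent $a_{i_l}$ ($l\in[s-1]$) on an envy path holds exactly one edge, namely $e_{i_l,i_{l+1}}$, which is a dummy for $a_{i_0}$; hence $v_{i_0}(X_{i_l}\cup(E_{i_0}^{\ge 0}\cap R(\mathbf{X})))=v_{i_0}(E_{i_0}^{\ge 0}\cap R(\mathbf{X}))\le v_{i_0}(X_{i_0})$ and $a_{i_l}$ is automatically safe for $a_{i_0}$. The only possible violator is the non-envied terminus $a_{i_s}$ holding $e_{i_s,i_0}$, and the correct fix is a full rotation along the path (give $e_{i_s,i_0}$ together with all of $a_{i_0}$'s remaining non-chore incident edges to $a_{i_0}$, and shift $e_{i_{l-1},i_l}$ to $a_{i_l}$), after which every agent on the path is non-envied. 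Your proposed local fix --- ``re-orient the single edge held by $a_{i_l}$'' --- targets agents that cannot be the problem, and your termination measure (total value of allocated edges strictly increases) is unreliable here because the repairs genuinely release previously allocated edges back into $R(\mathbf{X})$; the paper instead argues termination from the fact that the set of non-envied agents only grows, since repaired agents become non-envied and no repair introduces envy toward a previously non-envied agent outside the affected neighborhood.
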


\noindent \textbf{Part 2}. 
In the second part, we allocate the edges that are not allocated in Part 1. 
We first categorize the unallocated edges into four disjoint groups: 
\begin{itemize}
    \vspace{-4pt}
    \item $G_1$ contains each edge that has at least one non-envied endpoint agent for whom the edge is not a chore;
    \vspace{-4pt}
    \item $G_2$ contains each edge that has two envied endpoints; 
    \vspace{-4pt}
    \item $G_3$ contains each edge that has one non-envied endpoint agent for whom the edge is a chore and one envied endpoint agent for whom it is not a chore; 
    \vspace{-4pt}
    \item $G_4$ contains the edges that have not been included in $G_1$, $G_2$, $G_3$. 
    Notice that each edge in $G_4$ is a chore for both its endpoint agents. 
    \vspace{-4pt}
\end{itemize}

We will allocate the unallocated edges such that no agent will receive an edge that is a chore for her and thus Property (4) will be retained. 
Besides, no agent will get worse off and no allocated edge will become unallocated, which will ensure that Properties (1) and (2) are retained. 
Moreover, no new envy will occur, which will ensure that Properties (6) and (7) are retained. 
Furthermore, no allocated edge will be reallocated to another agent, which will ensure that an agent who is safe for some agent is always safe for that agent and thus Properties (3) and (8) are retained. 
We will also see that the (partial) allocation is always $\EFX^0_-$ during the allocation process. 
Specifically, 
\begin{itemize} 
    \vspace{-4pt}
    \item For each edge in $G_1$, we allocate it to the non-envied endpoint agent for whom it is not a chore.
    \vspace{-4pt}
    \item For each edge in $G_2$, we allocate it to the non-envied agent who is safe for both its endpoint agents. 
    \vspace{-4pt}
    \item For each edge $e_{i, j}$ in $G_3$, we consider three cases. 
    Without loss of generality, let $a_i$ be the non-envied endpoint agent for whom $e_{i, j}$ is not a chore and $a_j$ be the envied endpoint agent for whom it is a chore. 
    First, $a_i$ becomes non-envied. Similar to the allocation of $G_1$, we allocate the edge to $a_i$. 
    Second, there exists a non-envied agent $a_k \neq a_j$ who is safe for $a_i$. Similar to the allocation of $G_2$, we allocate $e_{i, j}$ to $a_k$. 
    Third, $a_j$ is the only non-envied agent who is safe for $a_i$. 
    By Property (8), it must be the case that there exists a sequence of agents $a_{i_0} \leftarrow a_{i_1} \leftarrow \cdots \leftarrow a_{i_s}$ such that  $a_{i_l}$ envies  $a_{i_{l-1}}$ for every $l \in [s]$, $a_{i_0}$ is $a_i$ and $a_{i_s}$ is $a_j$. 
    For this case, we allocate $e_{i, j}$ to $a_{i_{s-1}}$. 
    \vspace{-4pt}
    \item For $G_4$, we consider two cases. 
    First, if no agent is envied, we allocate each edge to an agent who is not its endpoint. 
    Second, if some agent is envied, we find two agents $a_i$ and $a_j$ such that $a_i$ is envied by $a_j$ and $a_j$ is non-envied. 
    We allocate the edges in $G_4$ that are incident to $a_j$ (i.e., $E_j \cap G_4$) to $a_i$, and the other edges in $G_4$ (i.e., $G_4 \setminus E_j$) to $a_j$.
    \vspace{-4pt}
\end{itemize}

We have the following lemma in Part 2. 

\begin{lemma}\label{lem:allocation:EFX0-:second}
    For any simple graph, given a (partial) $\EFX^0_-$ orientation that satisfies Properties (1)-(8) in Definition \ref{def:allocation:EFX0-:properties}, we can compute an $\EFX^0_-$ allocation in polynomial time. 
\end{lemma}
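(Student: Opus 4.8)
The plan is to take the (partial) $\EFX^0_-$ orientation guaranteed by Lemma \ref{lem:allocation:EFX0-:first} and extend it to a complete allocation by assigning the unallocated edges group by group, in the order $G_1, G_2, G_3, G_4$, while maintaining a strong invariant throughout. The invariant I would carry is that after each edge is placed, the current (partial) allocation is still $\EFX^0_-$ and still satisfies Properties (1)--(8) of Definition \ref{def:allocation:EFX0-:properties}. To keep this invariant it suffices to guarantee four things at every step, exactly as flagged before the lemma: (i) no agent ever receives an edge that is a chore for her, which preserves Property (4); (ii) no agent's bundle value decreases and no already-allocated edge becomes unallocated, which preserves Properties (1) and (2); (iii) no new envy relation is created, which preserves Properties (6) and (7); and (iv) no already-allocated edge is moved, so that an agent that is safe for another agent remains safe, which preserves Properties (3) and (8). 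Since the final allocation will in particular be $\EFX^0_-$, establishing the invariant proves the lemma.

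First I would dispatch the two easy groups. For an edge in $G_1$ I allocate it to a non-envied endpoint $a_i$ for whom it is not a chore; the edge has non-negative value to $a_i$ so (ii) holds, and the other endpoint does not start to envy $a_i$ because Property (1) (or Property (2) if that endpoint is envied) already bounds its bundle below by the value of this single incident edge, giving (iii). For an edge in $G_2$, both endpoints are envied, so by Property (3) there is a non-envied agent $a_k$ safe for both; since $a_k$ is not an endpoint the edge is a dummy for her, so (i) and (ii) are immediate, and safeness of $a_k$ together with the fact that her value is unchanged yields (iii). In both cases (iv) holds because we never touch previously allocated edges.

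The technical heart is $G_3$ and $G_4$, where edges are chores for some endpoints and so must be steered away from any agent who dislikes them. For an edge $e_{i,j} \in G_3$ (a chore for the envied endpoint $a_j$, not a chore for the non-envied endpoint $a_i$) I would follow the three-case rule: assign it to $a_i$ when she is still non-envied, or to another non-envied agent $a_k \neq a_j$ that is safe for $a_i$; and when $a_j$ is the only safe non-envied candidate, invoke Property (8) to obtain an envy path $a_i = a_{i_0} \leftarrow \cdots \leftarrow a_{i_s} = a_j$ and place the edge on $a_{i_{s-1}}$. The crux is to check that $a_{i_{s-1}}$ does not receive a chore (she is not an endpoint of $e_{i,j}$, so the edge is a dummy for her) and that no forbidden envy appears: $a_i$ cannot come to envy $a_{i_{s-1}}$ precisely because Property (8) asserts $a_{i_{s-1}}$ is safe for $a_i$, while every other agent sees only an unchanged bundle. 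For $G_4$, whose edges are chores for both endpoints, I split on whether any agent is envied: with no envy, each edge goes to a non-endpoint (a dummy holder); otherwise I pick an envied $a_i$ with a non-envied envier $a_j$ and route $E_j \cap G_4$ to $a_i$ and $G_4 \setminus E_j$ to $a_j$, arranging that every $G_4$ edge lands on an agent for whom it is a dummy rather than a chore.

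The main obstacle I anticipate is exactly this last verification for $G_3$ and $G_4$: one must argue, edge by edge, that the routing never deposits a chore in any agent's bundle and never produces an envy that the $\EFX^0_-$ test cannot absorb, and that the envy/safety bookkeeping used in one case is not invalidated by edges placed in earlier cases. This is where Properties (3), (6), (7) and (8) are used in a coordinated way and where the chosen ordering of the four groups matters. Once the invariant is verified on all four groups, completeness is clear, since every unallocated edge lies in exactly one $G_t$ and is assigned, and the running time is polynomial because each edge is processed once and each placement requires only computing the current envy relations and checking safeness, both of which are polynomial-time tests.
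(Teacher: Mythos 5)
Your proposal follows the paper's proof essentially verbatim: the same four-group decomposition processed in the same order, the same invariant (the partial allocation stays $\EFX^0_-$ and retains the properties via the four preservation conditions), the same three-case analysis for $G_3$ using Property (8) to route the edge to $a_{i_{s-1}}$, and the same two-case split for $G_4$. The approach and the key verifications match the paper's argument, so this is correct and not a different route.
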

\begin{proof}
    Each edge in $G_1$ is allocated to the non-envied endpoint agent for whom the edge is not a chore.
    This does not incur new envy, since the other endpoint agent of the edge prefers her own bundle to the edge by Property (1). 
    Each edge in $G_2$ is allocated to a non-envied agent who is safe for both its two endpoint agents. 
    By Property (3), such a non-envied agent exists. 
    Since the non-envied agent is safe for both the two endpoint agents, no new envy occurs, either. 
    Moreover, during the allocation of $G_1$ and $G_2$, no agent receives an edge that is a chore for her and no envied agent receives an edge, thus the new (partial) allocation is still $\EFX^0_-$ and retains Properties (4)-(7). 
    Since no agent gets worse off and no allocated edge becomes unallocated, Properties (1) and (2) are retained. 
    Furthermore, since no edge that was allocated to some agent is reallocated to another agent, an agent who is safe for some agent is always safe for that agent and thus Properties (3) and (8) are retained. 

    For each edge $e_{i, j}$ in $G_3$, without loss of generality, let $a_i$ be the endpoint agent for whom $e_{i, j}$ is not a chore and $a_j$ be the other one for whom $e_{i, j}$ is a chore.
    There are three cases: 
    
    \begin{itemize}
        \vspace{-4pt}
        \item First, $a_i$ now becomes non-envied. For this case, we allocate $e_{i, j}$ to $a_i$. 
        By the same reasoning we made when allocating the edges in $G_1$, the new (partial) allocation is still $\EFX^0_-$ and does not break Properties (1)-(8). 
        \vspace{-4pt}
        \item Second, there exists a non-envied agent $a_k \neq a_j$ who is safe for $a_i$. 
        For this case, similarly to the allocation of $G_2$, we allocate $e_{i, j}$ to $a_k$. 
        Although $a_k$ may not be safe for $a_j$, allocating $e_{i, j}$ to $a_k$ does not make $a_j$ envy $a_k$ since $e_{i, j}$ is a chore for $a_j$. 
        Therefore, by the same reasoning we made when allocating the edges in $G_2$, the new (partial) allocation is still $\EFX^0_-$ and does not break Properties (1)-(8). 
        \vspace{-4pt}
        \item Third, $a_j$ is the only non-envied agent who is safe for $a_i$. 
        Since there is no envy cycle among the agents by Property (7), there exists a sequence of agents $a_{i_0} \leftarrow a_{i_1} \leftarrow \cdots \leftarrow a_{i_s}$ such that  $a_{i_l}$ envies  $a_{i_{l-1}}$ for every $l \in [s]$, $a_{i_s}$ is non-envied and $a_{i_0}$ is $a_i$. 
        Then $a_{i_s}$ must be $a_j$; otherwise by Property (8), $a_{i_s}$ is safe for $a_i$, which contradicts the assumption of this case. 
        For this case, we allocate $e_{i, j}$ to $a_{i_{s-1}}$. 
        Clearly, Property (4) still holds. 
        By Property (8), $a_{i_{s-1}}$ is safe for $a_i$. 
        Besides, since $e_{i, j}$ is a chore for $a_j$, allocating $e_{i, j}$ to $a_{i_{s-1}}$ does not incur new envy. 
        Notice that although $a_{i_{s-1}}$ is envied, allocating $e_{i, j}$ to her does not make the (partial) allocation not $\EFX^0_-$ since she is only envied by $a_j$ by Property (6) and $e_{i, j}$ is a chore for $a_j$. 
        Therefore, the new (partial) allocation is still $\EFX^0_-$ and retains Properties (6) and (7). 
        Since no agent gets worse off and no allocated edge becomes unallocated, Properties (1) and (2) are retained. 
        Furthermore, since no edge that was allocated to some agent is reallocated to another agent, an agent who is safe for some agent is always safe for that agent and thus Properties (3) and (8) are retained. 
        \vspace{-4pt}
    \end{itemize}

    At last, we allocate the edges in $G_4$, each of which is a chore for both its endpoint agents. 
    If no agent is envied in the new (partial) allocation, we allocate each edge in $G_4$ to an agent who is not its endpoint and obtain an envy-free allocation. 
    If some agent is envied, we first find two agents $a_i$ and $a_j$ such that $a_i$ is envied by $a_j$ and $a_j$ is non-envied. 
    We are able to find such two agents since there is no envy cycle among the agents by Property (7). 
    Since no new envy occurs during the allocation of $G_1$, $G_2$ and $G_3$, $a_j$ envied $a_i$ in the (partial) orientation computed in Part 1 and thus $e_{i, j}$ was allocated to $a_i$ in Part 1 and is not in $G_4$. 
    We allocate the edges in $G_4$ that are incident to $a_j$ (i.e., $E_j \cap G_4$) to $a_i$, and the other edges (i.e., $G_4 \setminus E_j$) to $a_j$. 
    Although $a_i$ is envied, she can receive the edges in $E_j \cap G_4$ since she is only envied by $a_j$ according to Property (6) and these edges are chores for their endpoint agents (including $a_j$).
    $a_j$ can receive the edges in $G_1 \setminus E_j$ since she is non-envied and these edges are not incident to her and thus are dummies for her. 
    Therefore, the final allocation is still $\EFX^0_-$. 
    Clearly, the whole allocation process runs in polynomial time and we complete the proof. 
\end{proof}

By Lemmas \ref{lem:allocation:EFX0-:first} and \ref{lem:allocation:EFX0-:second}, it is clear that Theorem \ref{thm:allocation:EFX0-} holds.

\subsection{$\EFX_0^+$ Allocations}
\label{subsec:allocation:EFX+0}
Finally, we study $\EFX_0^+$ and have the following theorem. 
\begin{theorem}\label{thm:allocation:EFX+0}
    For any simple graph, an $\EFX_0^+$ allocation always exists and can be computed in polynomial time. 
\end{theorem}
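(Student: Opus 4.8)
The plan is to reuse the two-part machinery developed for $\EFX^0_-$ (Theorem~\ref{thm:allocation:EFX0-}) and to isolate the single genuine difference between the two notions. Observe first that the envied-bundle condition of $\EFX^+_0$ only quantifies over goods (it considers $e \in X_j$ with $v_i(X_j \setminus \{e\}) < v_i(X_j)$), whereas that of $\EFX^0_-$ quantifies over goods \emph{and} dummies; hence the first condition of $\EFX^+_0$ is weaker and is automatically satisfied by any allocation meeting the first condition of $\EFX^0_-$. Consequently, starting from the construction behind Theorem~\ref{thm:allocation:EFX0-}, which already satisfies Property~(4) (no agent receives a chore), the only thing left to guarantee is the own-bundle condition of $\EFX^+_0$. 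Since no agent holds a chore, removing a chore is never an issue; the requirement $v_i(X_i \setminus \{e\}) \ge v_i(X_j)$ can only fail when the removed edge $e$ is a \emph{dummy} for an envious agent $a_i$, because then $v_i(X_i \setminus \{e\}) = v_i(X_i) < v_i(X_j)$. Thus the target reduces to producing an allocation in which the first (goods-removal) condition and Property~(4) hold and, in addition, \emph{no envious agent holds any dummy edge} --- equivalently, every envious agent's bundle consists solely of goods for her.

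\textbf{Approach.} First I would recompute the partial orientation of Part~1, keeping the analogues of Properties~(1)--(8) but strengthening the invariant to ``every envious agent holds only goods.'' This should be essentially free: any incident dummy lying in an envious agent's bundle can be stripped off, since its removal changes no valuation and preserves Properties~(1) and~(2), so the Part~1 orientation can be taken so that envious agents hold exactly their assigned goods. The real work is then in Part~2, where the leftover edges --- now comprising every dummy edge together with all chores --- must be distributed \emph{only} to non-envious agents, and moreover only as dummies, so that envious agents acquire nothing and no non-envious recipient is pushed into envy while holding a dummy. The key enabling fact is that an edge $e$ which is not a good for \emph{any} agent (a chore or dummy for both endpoints, hence a dummy for everyone else) has non-positive marginal value for every agent; adding it to any bundle therefore creates no new envy, so it is safe to hand $e$ to any non-envious agent for whom $e$ is a dummy, who then remains non-envious.

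\textbf{Main obstacle.} The difficulty is the analogue of the $G_3$ case in the proof of Lemma~\ref{lem:allocation:EFX0-:second}. There, an edge $e_{i,j}$ that is a chore for its non-envied endpoint but not for its envied endpoint, and whose only ``safe'' non-envied recipient is reached through an envy path $a_{i_0} \leftarrow \cdots \leftarrow a_{i_s}$, is routed onto an interior agent $a_{i_{s-1}}$ of that path. For $\EFX^0_-$ this is harmless, because $e_{i,j}$ is a dummy for $a_{i_{s-1}}$ and dummies are ignored by the own-bundle condition; for $\EFX^+_0$ it is precisely forbidden, since it plants a dummy on an envious agent. The hard part will be to show that a \emph{genuinely non-envious} recipient always exists for every such leftover edge, and in particular to resolve the subcase where the unique non-envious endpoint itself finds the edge a chore (so that placing it there both lowers that agent's value and cannot be ``removed away'' from her pre-existing envy). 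I expect to handle this by exploiting Properties~(6)--(8) --- each envied agent is envied by exactly one agent, and every maximal envy chain ends at a non-envied agent who is safe for its head --- to first reroute goods along the envy path and thereby free up, or create, a non-envious agent able to absorb the chore or dummy without acquiring a problematic dummy, in a manner analogous to, but strictly more restrictive than, the case analysis of Lemma~\ref{lem:allocation:EFX0-:second}.
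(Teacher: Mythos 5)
Your reduction of the target is correct and worth keeping: since the envied-bundle condition of $\EFX^+_0$ quantifies only over goods, and since an envious agent holding a chore or a dummy immediately violates the own-bundle condition (removing a dummy leaves her value unchanged, so she would still envy), it suffices to produce an allocation in which every envious agent's bundle consists solely of goods for her and in which each agent sees at most a bounded "good content'' in any other bundle. However, the proposal does not actually deliver such an allocation. The crucial step --- finding a legitimate recipient for every edge that is a good for neither endpoint --- is explicitly deferred (``I expect to handle this by \ldots rerouting goods along the envy path''), and this is precisely where the $\EFX^0_-$ machinery you want to reuse breaks down in more places than you flag. Beyond the $G_3$ case you identify, the $G_2$ rule hands a dummy to a \emph{non-envied} agent, and the $G_4$ rule hands dummies to an agent $a_j$ who by construction envies $a_i$; in both cases ``non-envied'' does not mean ``non-envious,'' and the notion of ``safe'' in Properties~(3) and~(8) only controls whether the endpoints envy the recipient, never whether the recipient herself envies someone. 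So every branch of Part~2, not just $G_3$, plants dummies on potentially envious agents, and nothing in Properties~(1)--(8) guarantees a non-envious absorber exists for each such edge. The Part~1 fix is also not free as claimed: stripping a dummy from an envious agent returns that edge to the unallocated pool, which can break Property~(2) for its \emph{other} endpoint (for whom it may be a good).

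The paper avoids all of this with a direct construction that does not touch the $\EFX^0_-$ machinery. The observation is that for $\EFX^+_0$ it is enough that (i) all but at most two designated agents receive only goods, so that any other agent's bundle contains at most one good for them (the single connecting edge), and (ii) the designated absorbers provably envy nobody. If some agent $a_i$ has $v_i(E_i)\ge 0$, she absorbs all of $E_i$ plus every edge that is a good for neither endpoint; she then values her bundle non-negatively and everyone else's at $0$. Otherwise one picks an edge $e_{i,j}$ that is not a chore for $a_i$ (with a tie-breaking rule favoring edges that are a chore for exactly one endpoint), gives $a_i$ her set $E_i^{\ge 0}$ plus the non-incident bad edges, gives $a_j$ her incident goods plus $a_i$'s leftover incident chores, and gives every remaining edge to an endpoint for whom it is a good; the tie-breaking rule is exactly what makes $a_j$ not envy $a_i$. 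This is a short case analysis rather than an adaptation of the envy-path argument, and I would recommend abandoning the reuse plan in favor of it.
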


First recall that for chores instances where each edge is a chore for both its endpoint agents, we can compute an envy-free allocation by allocating each edge to an agent who is not its endpoint. 
Thus in the following, we only consider graphs where there exists an edge that is not a chore for at least one of its endpoint agents. 

To get some intuitions about how to compute an $\EFX_0^+$ allocation, consider the graphs where each edge is a good for at least one of its endpoint agents. 
For these graphs, we can simply allocate each edge to the endpoint agent for whom it is a good. 
For any agent $a_i$, each edge she receives is a good for her, and at most one edge that each other agent $a_j$ receives is a good for her.
After removing the good from $a_j$'s bundle, $a_i$ does not envy $a_j$. 
Thus, the allocation is $\EFX_0^+$. 

The trickier graphs to deal with are those with edges that are not goods for any of their endpoint agents. 
For these graphs, we want to find an agent who can receive all such edges, so that we can simply allocate each remaining edge to one of its endpoint agents as above. 
At the same time, the allocation should be $\EFX_0^+$ for the agent we find. 
When there exists an agent $a_i$ to whom the total value of her incident edges is non-negative (i.e., $v_i(E_i) \ge 0$), we let $a_i$ receive all her incident edges as well as all edges that are not goods for any of their endpoint agents. 
We then allocate each remaining edge to one of its endpoint agents for whom it is a good. 
Since $a_i$ receives all her incident edges whose total value is non-negative, the allocation is $\EFX_0^+$ for her. 

However, when the total value of the incident edges is negative to every agent (i.e., $v_i(E_i) < 0$ for every $a_i$), we cannot simply allocate all incident edges to an agent as above, since the allocation may not be $\EFX_0^+$ for her. 
For this case, we let an agent receive all her incident edges that are not chores for her and allocate her other incident edges to another agent. 
More concretely, we first choose an edge $e_{i, j}$ that is not a chore for $a_i$, breaking the tie by giving priority to the edges that are not chores for one endpoint agent and are chores for the other. 
We then let $a_i$ receive all her incident edges that are not chores for her, as well as all edges that are not incident to her but are not goods for any of their endpoint agents. 
Next, we let $a_j$ receive all her unallocated incident edges that are goods for her, as well as $a_i$'s unallocated incident edges that are not goods for any of their endpoint agents. 
At last, we allocate each remaining unallocated edge to one of its endpoint agents for whom it is a good. 
The formal description of the above allocation process is provided in Algorithm \ref{alg:allocation:EFX+0}. 

\begin{algorithm}[tb]
\caption{Computing an $\EFX_0^+$ Allocation}
\label{alg:allocation:EFX+0}
\KwIn{A mixed instance with a graph where $N$ is the vertex set and $M$ is the edge set. }
\KwOut{An $\EFX^+_0$ allocation $\mathbf{X} = (X_1, \ldots, X_n)$.}
Initiate $X_i \leftarrow \emptyset$ for every $a_i \in N$. \\
\If{there exists $a_i \in N$ such that $v_i(E_i) \ge 0$}{
    $X_i \leftarrow X_i \cup E_i$, $M \leftarrow M \setminus E_i$. \\
    \For{each $e_{k, l} \in M$ that is not a good for either $a_k$ or $a_l$}{
        $X_i \leftarrow X_i \cup \{e_{k, l}\}$, $M \leftarrow M \setminus \{e_{k, l}\}$. \\
    }
    \For{each $e_{k, l} \in M$ that is a good for $a_k$}{
        $X_k \leftarrow X_k \cup \{e_{k, l}\}$, $M \leftarrow M \setminus \{e_{k, l}\}$. \\
    }
}
\Else{ 
    Choose an edge $e_{i, j}$ that is not a chore for $a_i$, breaking the tie by giving priority to the edges that are not chores for one endpoint agent and are chores for the other. \label{line:alg:allocation:EFX+0:tie-breaking}\\
    $X_i \leftarrow X_i \cup E_i^{\ge 0}$, $M \leftarrow M \setminus E_i^{\ge 0}$. \\
    \For{each $e_{k, l} \in M$ with $k, l \neq i$ that is not a good for either $a_k$ or $a_l$}{
        $X_i \leftarrow X_i \cup \{e_{k, l}\}$, $M \leftarrow M \setminus \{e_{k, l}\}$. \label{line:alg:allocation:EFX+0:both-nonnegative} \\
    }
    $X_j \leftarrow X_j \cup (M \cap E_j^{>0})$, $M \leftarrow M \setminus E_j^{>0}$. \\
    \For{each $e_{i, k} \in M$ that is not a good for $a_k$}{
        $X_j \leftarrow X_j \cup \{e_{i, k}\}$, $M \leftarrow M \setminus \{e_{i, k}\}$. \\
    }
    \For{each $e_{k, l} \in M$ that is a good for $a_k$}{
        $X_k \leftarrow X_k \cup \{e_{k, l}\}$, $M \leftarrow M \setminus \{e_{k, l}\}$. \\
    }
}
\Return $\mathbf{X} = (X_1, \ldots, X_n)$. 
\end{algorithm}

Now we are ready to prove Theorem \ref{thm:allocation:EFX+0}. 

\begin{proof}[Proof of Theorem \ref{thm:allocation:EFX+0}]
    Clearly, Algorithm \ref{alg:allocation:EFX+0} runs in polynomial time. 
    It remains to show that the computed allocation is $\EFX_0^+$. 
    We consider two cases: 

    \smallskip
    \noindent \textbf{\underline{Case 1}:} there exists an agent $a_i$ such that $v_i(E_i) \ge 0$.
    
    In this case, no edge remains unallocated at the end of the algorithm. 
    To see this, the incident edges of $a_i$, as well as those that are not incident to $a_i$ and are not goods for any of their endpoint agents, are allocated to $a_i$.
    The remaining edges are goods for at least one of their endpoint agents and are allocated to those endpoint agents. 
    
    For agent $a_i$, she receives a non-negative value since she receives all her incident edges whose total value is non-negative to her and other edges she receives are not incident to her and are dummies for her. 
    Thus, each other agent receives a bundle that has a value of 0 to $a_i$ and $a_i$ does not envy them. 
    For each $a_j \neq a_i$, each edge that she receives is a good for her (if exists) and each other agent receives at most one edge that is a good for her. 
    Thus $a_j$ does not envy others after removing the good from their bundles and the allocation is $\EFX^+_0$. 

    \smallskip
    \noindent \textbf{\underline{Case 2}:} $v_i(E_i) < 0$ for every agent $a_i$.
    
    In this case, no edge remains unallocated at the end of the algorithm, either. 
    To see this, the edges that are incident to $a_i$ and are not chores for $a_i$, as well as those that are not incident to $a_i$ and are not goods for any of their endpoint agents, are allocated to $a_i$. 
    The edges that are incident to $a_i$ but are not goods for any of their endpoint agents are allocated to $a_j$. 
    The remaining edges are goods for at least one of their endpoint agents and are allocated to those endpoint agents. 

    For agent $a_i$, she receives a non-negative value since she only receives her incident edges that are not chores for her and other edges she receives are not incident to her and are dummies for her. 
    Thus, each other agent receives a bundle that has a non-positive value to $a_i$ and $a_i$ does not envy them. 
    For each agent $a_k \neq a_i$ or $a_j$, each edge that she receives is a good for her (if exists) and each other agent receives at most one edge that is a good for her. 
    Therefore, the allocation is $\EFX^+_0$ for $a_k$. 
    It remains to consider agent $a_j$. 
    $a_j$ does not envy each agent $a_k \neq a_i$ or $a_j$, since $a_j$ does not receive any edge that is a chore for her and thus receives a non-negative value, and $a_k$ does not receive any edge that is a good for $a_j$ and thus her bundle has a non-positive value to $a_j$. 
    If the edge $(a_i, a_j)$ is a chore for $a_j$, $a_j$ does not envy $a_i$, either, since $a_i$ does not receive any edge that is a good for $a_j$. 
    If $(a_i, a_j)$ is not a chore for $a_j$, the tie-breaking rule implies that each edge in the graph either is not a chore for any of its endpoint agents, or is a chore for both its endpoint agents. 
    Thus, all $a_j$'s incident edges that are chores for $a_j$ are allocated to $a_i$. 
    Since $v_j(E_j) < 0$, $a_i$'s bundle has a negative value to $a_j$ and $a_j$ does not envy $a_i$. 
    Therefore, the allocation is $\EFX^+_0$. 
\end{proof}

Since any $\EFX_-^0$ or $\EFX_0^+$ allocation is also $\EFX^+_-$, we have the following corollary. 
\begin{corollary}\label{coro:allocation:EFX+-}
    For any simple graph, an $\EFX^+_-$ allocation always exists and can be computed in polynomial time. 
\end{corollary}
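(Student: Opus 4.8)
The plan is to derive the statement as an immediate logical consequence of the two existence results already established in this subsection, namely Theorem \ref{thm:allocation:EFX0-} (existence of $\EFX^0_-$ allocations) and Theorem \ref{thm:allocation:EFX+0} (existence of $\EFX^+_0$ allocations). The key observation is that $\EFX^+_-$ is the \emph{weakest} of the four mixed-manna notions, so any allocation certified under a stronger notion is automatically $\EFX^+_-$. Hence I only need to invoke one of the two constructions and verify the implication; no new algorithm is required.

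First I would fix an envying pair $a_i, a_j$ (with $a_i$ envying $a_j$) and compare the two conditions defining $\EFX^+_-$ against those of $\EFX^0_-$. For the \emph{envied-bundle} condition, $\EFX^+_-$ ranges only over edges $e \in X_j$ with strictly positive marginal value, i.e.\ $v_i(X_j \setminus \{e\}) < v_i(X_j)$, whereas $\EFX^0_-$ ranges over the larger set of edges with merely non-negative marginal value $v_i(X_j \setminus \{e\}) \le v_i(X_j)$. For the \emph{own-bundle} condition, the two notions impose literally the same requirement, since both quantify over $e \in X_i$ with $v_i(X_i \setminus \{e\}) > v_i(X_i)$. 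Because the $\EFX^0_-$ guarantee holds over a superset of the edges that $\EFX^+_-$ examines on the envied side, and coincides on the envious side, every $\EFX^0_-$ allocation is $\EFX^+_-$.

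An entirely symmetric comparison handles $\EFX^+_0$. There the envied-bundle conditions coincide verbatim, both restricting to strictly-positive-marginal edges, while the own-bundle condition of $\EFX^+_0$ quantifies over the larger set of non-positive-marginal edges satisfying $v_i(X_i \setminus \{e\}) \ge v_i(X_i)$, which contains the strictly-negative-marginal edges demanded by $\EFX^+_-$. Thus every $\EFX^+_0$ allocation is also $\EFX^+_-$, giving a second route to the conclusion. Combining either implication with the corresponding polynomial-time construction (Theorem \ref{thm:allocation:EFX0-} or Theorem \ref{thm:allocation:EFX+0}) yields both existence and efficient computability in one stroke.

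The argument is essentially bookkeeping over the four quantifier ranges, so there is no substantive obstacle beyond ensuring that the set inclusions point in the direction that \emph{transfers} the stronger guarantee down to the weaker notion: enlarging the quantified set strengthens the notion, so the stronger notion's promise over the larger set certainly covers the smaller set that $\EFX^+_-$ requires. Once that direction is pinned down, the corollary follows directly and inherits the polynomial running time from whichever of the two parent theorems is invoked.
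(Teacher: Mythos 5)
Your proposal is correct and matches the paper's argument exactly: the paper likewise observes that any $\EFX^0_-$ or $\EFX^+_0$ allocation is also $\EFX^+_-$, so the corollary follows immediately from Theorem \ref{thm:allocation:EFX0-} or Theorem \ref{thm:allocation:EFX+0}. Your explicit verification of the quantifier inclusions is a sound (if more detailed) rendering of the same one-line implication.
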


\section{Conclusion}
In this paper, we give a complete computational study of EFX allocations on graphs when the items are a mixture of goods and chores.
There are some future directions. 
In our setting, exactly two agents are interested in one common item that is incident to both of them. 
One immediate direction is to study the generalized setting with multi-edges where multiple edges exist between two agents or hypergraphs where more than two agents are interested in one common item.
Another direction is to study the setting where agents are also interested in the edges that are not very far away from them. 
To bypass the hardness results of EFX orientations, we have studied some simple graphs including trees, stars and paths. 
One can also study complex graphs for which the existence of EFX orientations can be determined in polynomial time. 

\newpage
\bibliographystyle{plainnat}
\bibliography{ref}

\begin{thebibliography}{67}
\providecommand{\natexlab}[1]{#1}
\providecommand{\url}[1]{\texttt{#1}}
\expandafter\ifx\csname urlstyle\endcsname\relax
  \providecommand{\doi}[1]{doi: #1}\else
  \providecommand{\doi}{doi: \begingroup \urlstyle{rm}\Url}\fi

\bibitem[Afshinmehr et~al.(2024)Afshinmehr, Danaei, Kazemi, Mehlhorn, and Rathi]{afshinmehr2024efx}
Mahyar Afshinmehr, Alireza Danaei, Mehrafarin Kazemi, Kurt Mehlhorn, and Nidhi Rathi.
\newblock Efx allocations and orientations on bipartite multi-graphs: A complete picture.
\newblock \emph{arXiv preprint arXiv:2410.17002}, 2024.

\bibitem[Akrami and Garg(2024)]{akrami2024breaking}
Hannaneh Akrami and Jugal Garg.
\newblock Breaking the 3/4 barrier for approximate maximin share.
\newblock In \emph{Proceedings of the 2024 Annual ACM-SIAM Symposium on Discrete Algorithms (SODA)}, pages 74--91. SIAM, 2024.

\bibitem[Akrami et~al.(2023)Akrami, Mehlhorn, Seddighin, and Shahkarami]{DBLP:conf/nips/AkramiMSS23}
Hannaneh Akrami, Kurt Mehlhorn, Masoud Seddighin, and Golnoosh Shahkarami.
\newblock Randomized and deterministic maximin-share approximations for fractionally subadditive valuations.
\newblock In \emph{NeurIPS}, 2023.

\bibitem[Aleksandrov and Walsh(2019)]{aleksandrov2019greedy}
Martin Aleksandrov and Toby Walsh.
\newblock Greedy algorithms for fair division of mixed manna.
\newblock \emph{CoRR}, abs/1911.11005, 2019.

\bibitem[Aleksandrov and Walsh(2020)]{DBLP:conf/ki/AleksandrovW20}
Martin Aleksandrov and Toby Walsh.
\newblock Two algorithms for additive and fair division of mixed manna.
\newblock In \emph{Proceedings of the 43rd German Conference on Artificial Intelligence}, pages 3--17, 2020.

\bibitem[Amanatidis et~al.(2021)Amanatidis, Birmpas, Filos{-}Ratsikas, Hollender, and Voudouris]{DBLP:journals/tcs/AmanatidisBFHV21}
Georgios Amanatidis, Georgios Birmpas, Aris Filos{-}Ratsikas, Alexandros Hollender, and Alexandros~A. Voudouris.
\newblock Maximum nash welfare and other stories about {EFX}.
\newblock \emph{Theoretical Computer Science}, 863:\penalty0 69--85, 2021.

\bibitem[Amanatidis et~al.(2024)Amanatidis, Filos-Ratsikas, and Sgouritsa]{amanatidis2024pushing}
Georgios Amanatidis, Aris Filos-Ratsikas, and Alkmini Sgouritsa.
\newblock Pushing the frontier on approximate efx allocations.
\newblock In \emph{Proceedings of the 25th ACM Conference on Economics and Computation}, pages 1268--1286, 2024.

\bibitem[Aziz and Mackenzie(2016)]{DBLP:conf/focs/AzizM16}
Haris Aziz and Simon Mackenzie.
\newblock A discrete and bounded envy-free cake cutting protocol for any number of agents.
\newblock In \emph{Proceedings of the 57th IEEE Symposium on Foundations of Computer Science}, pages 416--427, 2016.

\bibitem[Aziz et~al.(2015)Aziz, Gaspers, Mackenzie, and Walsh]{aziz2015fair}
Haris Aziz, Serge Gaspers, Simon Mackenzie, and Toby Walsh.
\newblock Fair assignment of indivisible objects under ordinal preferences.
\newblock \emph{Artificial Intelligence}, 227:\penalty0 71--92, 2015.

\bibitem[Aziz et~al.(2020)Aziz, Moulin, and Sandomirskiy]{aziz2020polynomial}
Haris Aziz, Herv{\'{e}} Moulin, and Fedor Sandomirskiy.
\newblock A polynomial-time algorithm for computing a pareto optimal and almost proportional allocation.
\newblock \emph{Operations Research Letters}, 48\penalty0 (5):\penalty0 573--578, 2020.

\bibitem[Aziz et~al.(2022)Aziz, Caragiannis, Igarashi, and Walsh]{DBLP:journals/aamas/AzizCIW22}
Haris Aziz, Ioannis Caragiannis, Ayumi Igarashi, and Toby Walsh.
\newblock Fair allocation of indivisible goods and chores.
\newblock \emph{Autonomous Agents and Multi-Agent Systems}, 36\penalty0 (1):\penalty0 3, 2022.

\bibitem[Babaioff et~al.(2021)Babaioff, Ezra, and Feige]{DBLP:conf/aaai/BabaioffEF21}
Moshe Babaioff, Tomer Ezra, and Uriel Feige.
\newblock Fair and truthful mechanisms for dichotomous valuations.
\newblock In \emph{Proceedings of the 35th AAAI Conference on Artificial Intelligence}, pages 5119--5126, 2021.

\bibitem[Barman et~al.(2023)Barman, Khan, Shyam, and Sreenivas]{DBLP:conf/sigecom/Barman0SS23}
Siddharth Barman, Arindam Khan, Sudarshan Shyam, and K.~V.~N. Sreenivas.
\newblock Guaranteeing envy-freeness under generalized assignment constraints.
\newblock In \emph{{EC}}, pages 242--269. {ACM}, 2023.

\bibitem[Bei et~al.(2022)Bei, Igarashi, Lu, and Suksompong]{DBLP:journals/siamdm/BeiILS22}
Xiaohui Bei, Ayumi Igarashi, Xinhang Lu, and Warut Suksompong.
\newblock The price of connectivity in fair division.
\newblock \emph{{SIAM} J. Discret. Math.}, 36\penalty0 (2):\penalty0 1156--1186, 2022.

\bibitem[B{\'{e}}rczi et~al.(2020)B{\'{e}}rczi, B{\'{e}}rczi{-}Kov{\'{a}}cs, Boros, Gedefa, Kamiyama, Kavitha, Kobayashi, and Makino]{DBLP:journals/corr/abs-2006-04428}
Krist{\'{o}}f B{\'{e}}rczi, Erika~R. B{\'{e}}rczi{-}Kov{\'{a}}cs, Endre Boros, Fekadu~Tolessa Gedefa, Naoyuki Kamiyama, Telikepalli Kavitha, Yusuke Kobayashi, and Kazuhisa Makino.
\newblock Envy-free relaxations for goods, chores, and mixed items.
\newblock \emph{CoRR}, abs/2006.04428, 2020.

\bibitem[Berman et~al.(2007)Berman, Karpinski, and Scott]{DBLP:journals/dam/BermanKS07}
Piotr Berman, Marek Karpinski, and Alexander~D. Scott.
\newblock Computational complexity of some restricted instances of 3-sat.
\newblock \emph{Discrete Applied Mathematics}, 155\penalty0 (5):\penalty0 649--653, 2007.

\bibitem[Bhaskar and Pandit(2024)]{bhaskar2024efx}
Umang Bhaskar and Yeshwant Pandit.
\newblock Efx allocations on some multi-graph classes.
\newblock \emph{arXiv preprint arXiv:2412.06513}, 2024.

\bibitem[Bhaskar et~al.(2021)Bhaskar, Sricharan, and Vaish]{bhaskar2021approximate}
Umang Bhaskar, AR~Sricharan, and Rohit Vaish.
\newblock On approximate envy-freeness for indivisible chores and mixed resources.
\newblock \emph{Approximation, Randomization, and Combinatorial Optimization. Algorithms and Techniques}, 207:\penalty0 1:1--1:23, 2021.

\bibitem[Bil{\`{o}} et~al.(2022)Bil{\`{o}}, Caragiannis, Flammini, Igarashi, Monaco, Peters, Vinci, and Zwicker]{DBLP:journals/geb/BiloCFIMPVZ22}
Vittorio Bil{\`{o}}, Ioannis Caragiannis, Michele Flammini, Ayumi Igarashi, Gianpiero Monaco, Dominik Peters, Cosimo Vinci, and William~S. Zwicker.
\newblock Almost envy-free allocations with connected bundles.
\newblock \emph{Games Econ. Behav.}, 131:\penalty0 197--221, 2022.

\bibitem[Biswas and Barman(2018)]{DBLP:conf/ijcai/BiswasB18}
Arpita Biswas and Siddharth Barman.
\newblock Fair division under cardinality constraints.
\newblock In \emph{{IJCAI}}, pages 91--97. ijcai.org, 2018.

\bibitem[Bouveret et~al.(2017)Bouveret, Cechl{\'{a}}rov{\'{a}}, Elkind, Igarashi, and Peters]{DBLP:conf/ijcai/BouveretCEIP17}
Sylvain Bouveret, Katar{\'{\i}}na Cechl{\'{a}}rov{\'{a}}, Edith Elkind, Ayumi Igarashi, and Dominik Peters.
\newblock Fair division of a graph.
\newblock In \emph{Proceedings of the 26th International Joint Conference on Artificial Intelligence}, pages 135--141, 2017.

\bibitem[Bouveret et~al.(2019)Bouveret, Cechl{\'{a}}rov{\'{a}}, and Lesca]{DBLP:journals/aamas/BouveretCL19}
Sylvain Bouveret, Katar{\'{\i}}na Cechl{\'{a}}rov{\'{a}}, and Julien Lesca.
\newblock Chore division on a graph.
\newblock \emph{Auton. Agents Multi Agent Syst.}, 33\penalty0 (5):\penalty0 540--563, 2019.

\bibitem[Budish(2011)]{budish2011combinatorial}
Eric Budish.
\newblock The combinatorial assignment problem: Approximate competitive equilibrium from equal incomes.
\newblock \emph{Journal of Political Economy}, 119\penalty0 (6):\penalty0 1061--1103, 2011.

\bibitem[Caragiannis et~al.(2019{\natexlab{a}})Caragiannis, Gravin, and Huang]{DBLP:conf/ec/CaragiannisGH19}
Ioannis Caragiannis, Nick Gravin, and Xin Huang.
\newblock Envy-freeness up to any item with high nash welfare: The virtue of donating items.
\newblock In \emph{Proceedings of the 20th ACM Conference on Economics and Computation}, pages 527--545, 2019{\natexlab{a}}.

\bibitem[Caragiannis et~al.(2019{\natexlab{b}})Caragiannis, Kurokawa, Moulin, Procaccia, Shah, and Wang]{DBLP:journals/teco/CaragiannisKMPS19}
Ioannis Caragiannis, David Kurokawa, Herv{\'{e}} Moulin, Ariel~D. Procaccia, Nisarg Shah, and Junxing Wang.
\newblock The unreasonable fairness of maximum nash welfare.
\newblock \emph{{ACM} Trans. Economics and Comput.}, 7\penalty0 (3):\penalty0 12:1--12:32, 2019{\natexlab{b}}.

\bibitem[Chandramouleeswaran et~al.(2024)Chandramouleeswaran, Nimbhorkar, and Rathi]{chandramouleeswaran2024fair}
Harish Chandramouleeswaran, Prajakta Nimbhorkar, and Nidhi Rathi.
\newblock Fair division in a variable setting.
\newblock \emph{arXiv preprint arXiv:2410.14421}, 2024.

\bibitem[Chaudhury et~al.(2020)Chaudhury, Garg, and Mehlhorn]{chaudhury2020efx}
Bhaskar~Ray Chaudhury, Jugal Garg, and Kurt Mehlhorn.
\newblock Efx exists for three agents.
\newblock In \emph{Proceedings of the 21st ACM Conference on Economics and Computation}, pages 1--19, 2020.

\bibitem[Christodoulou et~al.(2023)Christodoulou, Fiat, Koutsoupias, and Sgouritsa]{christodoulou2023fair}
George Christodoulou, Amos Fiat, Elias Koutsoupias, and Alkmini Sgouritsa.
\newblock Fair allocation in graphs.
\newblock In \emph{Proceedings of the 24th ACM Conference on Economics and Computation}, pages 473--488, 2023.

\bibitem[Dehghani et~al.(2018)Dehghani, Farhadi, Hajiaghayi, and Yami]{DBLP:conf/soda/DehghaniFHY18}
Sina Dehghani, Alireza Farhadi, Mohammad~Taghi Hajiaghayi, and Hadi Yami.
\newblock Envy-free chore division for an arbitrary number of agents.
\newblock In \emph{Proceedings of the 29th Annual ACM-SIAM Symposium on Discrete Algorithms}, pages 2564--2583, 2018.

\bibitem[Deligkas et~al.(2024)Deligkas, Eiben, Goldsmith, and Korchemna]{deligkas2024ef1}
Argyrios Deligkas, Eduard Eiben, Tiger-Lily Goldsmith, and Viktoriia Korchemna.
\newblock Ef1 and efx orientations.
\newblock \emph{arXiv preprint arXiv:2409.13616}, 2024.

\bibitem[Elkind et~al.(2024)Elkind, Igarashi, and Teh]{DBLP:conf/sagt/ElkindIT24}
Edith Elkind, Ayumi Igarashi, and Nicholas Teh.
\newblock Fair division of chores with budget constraints.
\newblock In \emph{{SAGT}}, volume 15156 of \emph{Lecture Notes in Computer Science}, pages 55--71. Springer, 2024.

\bibitem[Gafni et~al.(2023)Gafni, Huang, Lavi, and Talgam-Cohen]{gafni2023unified}
Yotam Gafni, Xin Huang, Ron Lavi, and Inbal Talgam-Cohen.
\newblock Unified fair allocation of goods and chores via copies.
\newblock \emph{ACM Transactions on Economics and Computation}, 11\penalty0 (3-4):\penalty0 1--27, 2023.

\bibitem[Gamow and Stern(1958)]{GS58}
George Gamow and Marvin Stern.
\newblock \emph{Puzzle-Math}.
\newblock Viking press, 1958.

\bibitem[Ghodsi et~al.(2022)Ghodsi, Hajiaghayi, Seddighin, Seddighin, and Yami]{DBLP:journals/ai/GhodsiHSSY22}
Mohammad Ghodsi, Mohammad~Taghi Hajiaghayi, Masoud Seddighin, Saeed Seddighin, and Hadi Yami.
\newblock Fair allocation of indivisible goods: Beyond additive valuations.
\newblock \emph{Artif. Intell.}, 303:\penalty0 103633, 2022.

\bibitem[Hosseini et~al.(2020)Hosseini, Sikdar, Vaish, Wang, and Xia]{hosseini2020fair}
Hadi Hosseini, Sujoy Sikdar, Rohit Vaish, Hejun Wang, and Lirong Xia.
\newblock Fair division through information withholding.
\newblock In \emph{Proceedings of the AAAI Conference on Artificial Intelligence}, volume~34, pages 2014--2021, 2020.

\bibitem[Hosseini et~al.(2021)Hosseini, Sikdar, Vaish, and Xia]{DBLP:conf/aaai/HosseiniSVX21}
Hadi Hosseini, Sujoy Sikdar, Rohit Vaish, and Lirong Xia.
\newblock Fair and efficient allocations under lexicographic preferences.
\newblock In \emph{Proceedings of the 35th AAAI Conference on Artificial Intelligence}, pages 5472--5480, 2021.

\bibitem[Hsu(2024)]{hsu2024efx}
Kevin Hsu.
\newblock Efx orientations of multigraphs.
\newblock \emph{arXiv preprint arXiv:2410.12039}, 2024.

\bibitem[Hsu and King(2025)]{hsu2025polynomial}
Kevin Hsu and Valerie King.
\newblock A polynomial-time algorithm for efx orientations of chores.
\newblock \emph{arXiv preprint arXiv:2501.13481}, 2025.

\bibitem[Hummel(2024)]{DBLP:journals/corr/abs-2404-11582}
Halvard Hummel.
\newblock Maximin shares in hereditary set systems.
\newblock \emph{CoRR}, abs/2404.11582, 2024.

\bibitem[Hummel and Hetland(2022{\natexlab{a}})]{DBLP:conf/eumas/HummelH22}
Halvard Hummel and Magnus~Lie Hetland.
\newblock Maximin shares under cardinality constraints.
\newblock In \emph{{EUMAS}}, volume 13442 of \emph{Lecture Notes in Computer Science}, pages 188--206. Springer, 2022{\natexlab{a}}.

\bibitem[Hummel and Hetland(2022{\natexlab{b}})]{DBLP:journals/aamas/HummelH22}
Halvard Hummel and Magnus~Lie Hetland.
\newblock Fair allocation of conflicting items.
\newblock \emph{Autonomous Agents and Multi-Agent Systems}, 36\penalty0 (1):\penalty0 8, 2022{\natexlab{b}}.

\bibitem[Igarashi and Peters(2019)]{DBLP:conf/aaai/IgarashiP19}
Ayumi Igarashi and Dominik Peters.
\newblock Pareto-optimal allocation of indivisible goods with connectivity constraints.
\newblock In \emph{{AAAI}}, pages 2045--2052. {AAAI} Press, 2019.

\bibitem[Karp et~al.(1975)Karp, Miller, and Thatcher]{karp1975reducibility}
Richard~M Karp, Raymond~E Miller, and James~W Thatcher.
\newblock Reducibility among combinatorial problems.
\newblock \emph{Journal of Symbolic Logic}, 40\penalty0 (4), 1975.

\bibitem[Kaviani et~al.(2024)Kaviani, Seddighin, and Shahrezaei]{kaviani2024almost}
Alireza Kaviani, Masoud Seddighin, and AmirMohammad Shahrezaei.
\newblock Almost envy-free allocation of indivisible goods: A tale of two valuations.
\newblock \emph{arXiv preprint arXiv:2407.05139}, 2024.

\bibitem[Kulkarni et~al.(2021)Kulkarni, Mehta, and Taki]{KulkarniMT21}
Rucha Kulkarni, Ruta Mehta, and Setareh Taki.
\newblock Indivisible mixed manna: On the computability of {MMS+PO} allocations.
\newblock In \emph{Proceedings of the 22nd {ACM} Conference on Economics and Computation}, pages 683--684, 2021.

\bibitem[Kumar et~al.(2024)Kumar, Equbal, Gurjar, Nath, and Vaish]{DBLP:conf/atal/KumarEGNV24}
Yatharth Kumar, Sarfaraz Equbal, Rohit Gurjar, Swaprava Nath, and Rohit Vaish.
\newblock Fair scheduling of indivisible chores.
\newblock In \emph{{AAMAS}}, pages 2345--2347. International Foundation for Autonomous Agents and Multiagent Systems / {ACM}, 2024.

\bibitem[Li et~al.(2021)Li, Li, and Zhang]{DBLP:conf/nips/LiLZ21}
Bo~Li, Minming Li, and Ruilong Zhang.
\newblock Fair scheduling for time-dependent resources.
\newblock In \emph{NeurIPS}, pages 21744--21756, 2021.

\bibitem[Li et~al.(2022)Li, Li, and Wu]{DBLP:conf/www/0037L022}
Bo~Li, Yingkai Li, and Xiaowei Wu.
\newblock Almost (weighted) proportional allocations for indivisible chores.
\newblock In \emph{Proceedings of the ACM Web Conference 2022}, pages 122--131, 2022.

\bibitem[Li and Vetta(2021)]{DBLP:journals/teco/LiV21}
Zhentao Li and Adrian Vetta.
\newblock The fair division of hereditary set systems.
\newblock \emph{{ACM} Trans. Economics and Comput.}, 9\penalty0 (2):\penalty0 12:1--12:19, 2021.

\bibitem[Lindner and Rothe(2016)]{LindnerR16}
Claudia Lindner and J{\"{o}}rg Rothe.
\newblock Cake-cutting: Fair division of divisible goods.
\newblock In J{\"{o}}rg Rothe, editor, \emph{Economics and Computation}, chapter~7, pages 395--491. Springer, 2016.

\bibitem[Lipton et~al.(2004)Lipton, Markakis, Mossel, and Saberi]{lipton2004approximately}
Richard~J Lipton, Evangelos Markakis, Elchanan Mossel, and Amin Saberi.
\newblock On approximately fair allocations of indivisible goods.
\newblock In \emph{Proceedings of the 5th ACM Conference on Electronic Commerce}, pages 125--131, 2004.

\bibitem[Liu et~al.(2023)Liu, Lu, Suzuki, and Walsh]{DBLP:journals/corr/abs-2306-09564}
Shengxin Liu, Xinhang Lu, Mashbat Suzuki, and Toby Walsh.
\newblock Mixed fair division: {A} survey.
\newblock \emph{CoRR}, abs/2306.09564, 2023.

\bibitem[Madathil(2023)]{DBLP:conf/ijcai/Madathil23}
Jayakrishnan Madathil.
\newblock Fair division of a graph into compact bundles.
\newblock In \emph{Proceedings of the 32nd International Joint Conference on Artificial Intelligence}, pages 2835--2843, 2023.

\bibitem[Mahara(2021)]{DBLP:conf/esa/Mahara21}
Ryoga Mahara.
\newblock Extension of additive valuations to general valuations on the existence of {EFX}.
\newblock In \emph{Proceedings of the 29th European Symposium on Algorithms}, pages 1--15, 2021.

\bibitem[Misra and Sethia(2024)]{misra2024envy}
Neeldhara Misra and Aditi Sethia.
\newblock Envy-free and efficient allocations for graphical valuations.
\newblock In \emph{International Conference on Algorithmic Decision Theory}, pages 258--272. Springer, 2024.

\bibitem[Moulin(2003)]{books/daglib/0017734}
Herv{\'{e}} Moulin.
\newblock \emph{Fair division and collective welfare}.
\newblock {MIT} Press, 2003.

\bibitem[Payan et~al.(2023)Payan, Sengupta, and Viswanathan]{DBLP:conf/atal/PayanSV23}
Justin Payan, Rik Sengupta, and Vignesh Viswanathan.
\newblock Relaxations of envy-freeness over graphs.
\newblock In \emph{Proceedings of the 2023 International Conference on Autonomous Agents and Multiagent Systems}, pages 2652--2654, 2023.

\bibitem[Plaut and Roughgarden(2020)]{plaut2020almost}
Benjamin Plaut and Tim Roughgarden.
\newblock Almost envy-freeness with general valuations.
\newblock \emph{SIAM Journal on Discrete Mathematics}, 34\penalty0 (2):\penalty0 1039--1068, 2020.

\bibitem[Procaccia(2016)]{Procaccia_cake_16}
Ariel~D. Procaccia.
\newblock Cake cutting algorithms.
\newblock In Felix Brandt, Vincent Conitzer, Ulle Endriss, J{\'{e}}r{\^{o}}me Lang, and Ariel~D. Procaccia, editors, \emph{Handbook of Computational Social Choice}, chapter~13, pages 311--330. Cambridge University Press, 2016.

\bibitem[Seddighin and Seddighin(2024)]{DBLP:journals/ai/SeddighinS24}
Masoud Seddighin and Saeed Seddighin.
\newblock Improved maximin guarantees for subadditive and fractionally subadditive fair allocation problem.
\newblock \emph{Artif. Intell.}, 327:\penalty0 104049, 2024.

\bibitem[Steinhaus(1949)]{Steinhaus49}
Hugo Steinhaus.
\newblock Sur la division pragmatique.
\newblock \emph{Econometrica}, 17 (Supplement):\penalty0 315--319, 1949.

\bibitem[Suksompong(2019)]{DBLP:journals/dam/Suksompong19}
Warut Suksompong.
\newblock Fairly allocating contiguous blocks of indivisible items.
\newblock \emph{Discret. Appl. Math.}, 260:\penalty0 227--236, 2019.

\bibitem[Suksompong(2021)]{suksompong2021constraints}
Warut Suksompong.
\newblock Constraints in fair division.
\newblock \emph{ACM SIGecom Exchanges}, 19\penalty0 (2):\penalty0 46--61, 2021.

\bibitem[Varian(1974)]{Varian74}
Hal~R. Varian.
\newblock Equity, envy and efficiency.
\newblock \emph{Journal of Economic Theory}, 9:\penalty0 63--91, 1974.

\bibitem[Wu et~al.(2025)Wu, Li, and Gan]{DBLP:journals/iandc/WuLG25}
Xiaowei Wu, Bo~Li, and Jiarui Gan.
\newblock Approximate envy-freeness in indivisible resource allocation with budget constraints.
\newblock \emph{Inf. Comput.}, 303:\penalty0 105264, 2025.

\bibitem[Zeng and Mehta(2024)]{zeng2024structure}
Jinghan~A Zeng and Ruta Mehta.
\newblock On the structure of envy-free orientations on graphs.
\newblock \emph{arXiv preprint arXiv:2404.13527}, 2024.

\bibitem[Zhou et~al.(2024)Zhou, Wei, Li, and Li]{DBLP:conf/ijcai/0027WL024}
Yu~Zhou, Tianze Wei, Minming Li, and Bo~Li.
\newblock A complete landscape of {EFX} allocations on graphs: Goods, chores and mixed manna.
\newblock In \emph{{IJCAI}}, pages 3049--3056. ijcai.org, 2024.

\end{thebibliography}

\newpage
\appendix
\section*{Appendix}

\section{Computing a (Partial) $\EFX^0_{-}$ Orientation that Satisfies Properties (1)-(8) in Definition \ref{def:allocation:EFX0-:properties}}
\label{ap:allocation:EFX0-:part1}
In this section, we show how to compute a (partial) $\EFX^0_{-}$ orientation that satisfies Properties (1)-(8) in Definition \ref{def:allocation:EFX0-:properties}. 
We achieve this in two steps. 

\subsection{Step 1: Computing an Initial (Partial) Orientation $\mathbf{X}^1$}

We first construct an initial (partial) orientation $\mathbf{X}^1$ by letting each agent pick one incident edge (if it exists) that she values the most among those that have not been allocated and are not chores for her. 
Once an agent picks an edge, the other endpoint agent of the edge will be the next to pick an edge.
The formal description is provided in Algorithm \ref{alg:allocation:EFX0-:initial}. 
It is easy to see that the following claim holds. 

\begin{claim}\label{clm:allocation:EFX0-:initial}
    The initial (partial) orientation $\mathbf{X}^1$ computed in Step 1 is $\EFX^0_-$ and satisfies Properties (1) and (4)-(7) in Definition \ref{def:allocation:EFX0-:properties}. 
    Moreover, it is computed in polynomial time. 
\end{claim}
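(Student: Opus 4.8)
The plan is to derive every assertion of the claim from three structural facts about the run of Algorithm~\ref{alg:allocation:EFX0-:initial}: (i) each agent is activated exactly once and, when activated, receives at most one edge; (ii) every edge an agent receives is a non-chore for her; and (iii) at the moment an agent is activated she takes a highest-valued edge among the currently unallocated non-chores incident to her. Fact (ii) is immediate from the selection rule and is precisely Property (4); it also yields $v_i(X_i)\ge 0$ for every $a_i$, since under doubly monotone valuations a bundle assembled only from non-chores has value at least $v_i(\emptyset)=0$. I would establish (i) by reading off from the description that the procedure continues a chain by activating the other endpoint of each picked edge and starts a fresh chain from any not-yet-activated agent whenever the current chain dies, so no agent is ever skipped and none is activated twice.

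Granting (i) and (ii), the notion $\EFX^0_-$ and Properties (5)--(6) fall out quickly. For $\EFX^0_-$, the chore-removal condition is vacuous because no agent holds a chore, and the other condition is trivial because every bundle is a singleton: if $a_i$ envies $a_j$, the unique edge $e\in X_j$ must be a good for $a_i$ (else $v_i(X_j)\le 0\le v_i(X_i)$), so removing it leaves $v_i(X_j\setminus\{e\})=v_i(\emptyset)=0\le v_i(X_i)$. For Property (5), an empty bundle is valued at $0$ by everyone while every $v_j(X_j)\ge 0$, so an agent with an empty bundle cannot be envied; hence each envied agent holds exactly one edge. For Property (6), that single edge has exactly one endpoint other than its owner and only that neighbour can value it positively, so an envied agent is envied by at most, and therefore exactly, one agent.

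Property (1) I would obtain from the greedy rule (iii) together with the observation that the set of unallocated edges only shrinks during the run. If $a_i$ is activated and picks an edge, any edge that is still unallocated at the end was already unallocated at her turn, so her pick is weakly preferred to it and $v_i(X_i)\ge v_i(e)$ for every $e\in E_i^{\ge 0}\cap R(\mathbf{X})$. If $a_i$ picks nothing, then at her turn no unallocated non-chore incident edge existed, and since edges are never de-allocated this forces $E_i^{\ge 0}\cap R(\mathbf{X})=\emptyset$, so Property (1) holds vacuously.

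The main obstacle will be Property (7), the absence of envy cycles, since it is the only place where the order of the picks, rather than just the final bundles, matters. Suppose for contradiction there is a cycle $a_{i_0}\leftarrow a_{i_1}\leftarrow\cdots\leftarrow a_{i_s}=a_{i_0}$ with $a_{i_l}$ envying $a_{i_{l-1}}$. Each envied bundle is a singleton that can only be envied through its incident edge, so the edge held by $a_{i_{l-1}}$ is exactly $(a_{i_{l-1}},a_{i_l})$, and in particular $a_{i_{s-1}}$ holds $(a_{i_{s-1}},a_{i_0})$, which is a good for $a_{i_0}$ that she strictly prefers to her own pick. Relabel the cycle so that $a_{i_0}$ is the earliest of its agents to be activated. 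Since $a_{i_{s-1}}$ acquires $(a_{i_{s-1}},a_{i_0})$ only at her own (later) turn, that edge was still unallocated when $a_{i_0}$ was activated; being a non-chore she could have taken it, contradicting rule (iii). Hence no envy cycle exists. Polynomial running time is clear, as the procedure performs at most $|M|$ activations, each in linear time. I expect the one point needing care in a full write-up to be matching the informal ``earliest activation'' argument to the exact bookkeeping of Algorithm~\ref{alg:allocation:EFX0-:initial}.
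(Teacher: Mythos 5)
Your proposal is correct and follows essentially the same route as the paper's (much terser) proof: Property (4) and the singleton-bundle structure give $\EFX^0_-$ and Properties (5)--(6), while the greedy highest-value pick gives Properties (1) and (7), with your ``earliest picker on the cycle'' argument being the standard way to make the latter precise. The only wording to tighten in a full write-up is to order cycle agents by when they \emph{pick} rather than when they are activated, exactly the point you already flagged.
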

\begin{proof}
    Clearly, Algorithm \ref{alg:allocation:EFX0-:initial} runs in polynomial time. 
    Since each agent receives at most one incident edge that is not a chore for her, $\mathbf{X}^1$ is $\EFX^0_-$ and satisfies Properties (4)-(6). 
    Since each agent picks the edge that she values the most among those that have not been allocated and are not chores for her, $\mathbf{X}^1$ satisfies Properties (1) and (7). 
\end{proof}

\begin{algorithm}[!tb]
\caption{Computing an Initial Orientation}
\label{alg:allocation:EFX0-:initial}
\KwIn{A mixed instance with a graph where $N$ is the vertex set and $M$ is the edge set.}
\KwOut{An initial (partial) orientation $\mathbf{X}^1 = (X_1^1, \ldots, X_n^1)$.}
Initialize $X_i^1 \leftarrow \emptyset$ for every $a_i \in N$. \\
\While{there exists $a_i \in N$ such that $M \cap E_i^{\ge 0} \neq \emptyset$ and $X_i^1 = \emptyset$}{
    Let $e_{i, j}$ be the edge in $M \cap E_i^{\ge 0}$ that $a_i$ values the most, i.e., $e_{i, j} \in \arg\max_{e \in M \cap E_i^{\ge 0}}v_i(e)$.\\
    $X_i^1 \leftarrow \{e_{i, j}\}$, $M \leftarrow M \setminus \{e_{i, j}\}$. \\
    Let $k \leftarrow j$. \\
    \While{$M \cap E_k^{\ge 0} \neq \emptyset$ and $X_k^1 = \emptyset$}{
        Let $e_{k, j }$ be the edge in $M \cap E_k^{\ge 0}$ that $a_k$ values the most. \\
        $X_k^1 \leftarrow \{e_{k, j}\}$, $M \leftarrow M \setminus \{e_{k, j}\}$. \\
        $k \leftarrow j$. \\
    }
}
\Return $\mathbf{X}^1 = (X_1^1, \ldots, X_n^1)$. \\
\end{algorithm}

After Step 1, either the following Step 2-1 or Step 2-2 will be executed, depending on whether $\mathbf{X}^1$ satisfies Properties (2) and (3). 

\subsection{Step 2-1: Computing the Desired (Partial) Orientation if $\mathbf{X}^1$ does not Satisfy Property (2)}
\vspace{0.5mm}

Since $\mathbf{X}^1$ does not satisfy Property (2), there exists an envied agent $a_i$, for whom her bundle is less valuable than the unallocated incident edges that are not chores for her (i.e., $v_i(X_i^1) < v_i(E_i^{\ge 0} \cap R(\mathbf{X}^1)))$. 
We allocate all edges in $E_i^{\ge 0} \cap R(\mathbf{X}^1)$ to $a_i$ and release the edge she received in $\mathbf{X}^1$, which may break Property (1). 
In order to regain Property (1), for any agent who prefers an unallocated incident edge to her current bundle, we allocate the unallocated edge to the agent and release all edges she received previously. 
We repeat the above process until Property (2) is satisfied. 
The formal description is provided in Algorithm \ref{alg:allocation:EFX0-:property2}. 
We have the following claim. 

\begin{algorithm}[tb]
\caption{Satisfying Property (2)}
\label{alg:allocation:EFX0-:property2}
\KwIn{A (partial) orientation $\mathbf{X} = (X_1, \ldots, X_n)$ that does not satisfy Property (2).}
\KwOut{A new (partial) orientation $\mathbf{X}^{\prime} = (X_1^{\prime}, \ldots, X_n^{\prime})$ that satisfies Property (2).}
Initialize $X_i^{\prime} \leftarrow X_i$ for every $a_i \in N$. \\
\While{there exists $a_i \in N$ such that $a_i$ is envied and $v_i(X_i^{\prime}) < v_i(E_i^{\ge 0} \cap R(\mathbf{X}^{\prime}))$}{
    $S \leftarrow X_i^{\prime}$. \\
    $X_i^{\prime} \leftarrow E_i^{\ge 0} \cap R(\mathbf{X}^{\prime})$.\\
    Release all edges in $S$. \\
    \While{there exist $a_j \in N$ and $e \in E_j^{\ge 0} \cap R(\mathbf{X}^{\prime})$ such that $v_j(X_j^{\prime}) < v_j(e)$}{
        $S \leftarrow X_j^{\prime}$. \\
        $X_j^{\prime} \leftarrow \{e\}$. \\
        Release all edges in $S$. \\
    }
}
\Return $\mathbf{X}^{\prime} = (X_1^{\prime}, \ldots, X_n^{\prime})$. \\
\end{algorithm}

\begin{claim}\label{clm:allocation:EFX0-:satisfy2}
    Taking as input a (partial) orientation that is $\EFX^0_-$ and satisfies Properties (1) and (4)-(7) in Definition \ref{def:allocation:EFX0-:properties}, Algorithm \ref{alg:allocation:EFX0-:property2} computes a (partial) orientation that is $\EFX^0_-$ and satisfies Properties (1)-(7) in polynomial time. 
\end{claim}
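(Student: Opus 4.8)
The plan is to establish a loop invariant and prove it by induction on the iterations of the outer while-loop of Algorithm~\ref{alg:allocation:EFX0-:property2}. Concretely, I would show that at the start of every outer iteration the current orientation $\mathbf{X}'$ is $\EFX^0_-$ and satisfies Properties (1) and (4)--(7); the base case is exactly Claim~\ref{clm:allocation:EFX0-:initial} applied to the input. The exit condition of the outer loop is precisely the negation of a Property (2) violation, so once the algorithm halts Property (2) holds for every envied agent, and since the inner while-loop runs until no agent strictly prefers an unallocated non-chore incident edge to her own bundle, Property (1) is restored before each outer iteration ends. Property (4) is immediate throughout, because every reassignment sets an agent's bundle to a subset of her non-chore incident edges ($E_i^{\ge 0}\cap R(\mathbf{X}')$ or a single $e\in E_j^{\ge 0}\cap R(\mathbf{X}')$), so no agent ever holds a chore.

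The heart of the argument is to show that a single outer iteration together with its inner loop preserves $\EFX^0_-$ and Properties (5)--(7), and I would rely on two observations. First, whenever an agent grabs a currently unallocated non-chore incident edge $e_{i,m}$, the other endpoint $a_m$ cannot come to envy her: by Property (1) holding at that moment, $v_m(X_m')\ge v_m(e_{i,m})$ (trivially so if $e_{i,m}$ is a chore for $a_m$), and because the simple-graph structure means $a_m$ values the grabber's bundle only through the single edge $e_{i,m}$, this rules out every potential new envier. Second, when the outer loop reassigns an envied agent $a_i$, it releases her old singleton bundle $\{e_{k,i}\}$ (a singleton by Property (5)); by Property (6) her unique envier $a_k$ satisfies $v_k(e_{k,i})>v_k(X_k')$, so $a_k$ reclaims $e_{k,i}$ in the inner loop, which both removes the envy toward $a_i$ and keeps every envied agent holding a single edge. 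Combining these, after the iteration every envied agent is envied by at most one agent through a single edge she holds; removing that edge eliminates the envy, and since $v_i(X_i')\ge 0$ (no chores held) this gives $\EFX^0_-$, while no envy cycle is introduced (Property (7)) because each reassignment only strictly raises the reassigned agent's value and the envy relation keeps pointing toward single-edge holders.

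For termination and running time I would use the potential $\Phi(\mathbf{X}')=\sum_{a_i\in N} v_i(X_i')$. Each elementary reassignment changes exactly one agent's bundle and strictly increases her value while leaving all others intact, so $\Phi$ strictly increases and no configuration recurs; as $\Phi$ is bounded above by $\sum_i v_i(E_i^{\ge 0})$, the process halts. To upgrade finiteness to a polynomial bound I would argue combinatorially: in the inner loop every agent holds at most a single edge, so her value can strictly increase at most $\deg(a_i)$ times before reaching her best incident edge, while each outer step renders its triggering agent non-envied and (weakly) maximal. Finally, I would derive Property (3) from Property (2): for an envied agent $a_i$ and a non-envied non-neighbor $a_j$ one has $v_i(X_j')=0$, so safety collapses to $v_i(X_i')\ge v_i(E_i^{\ge 0}\cap R(\mathbf{X}'))$, i.e.\ Property (2); I would then argue that the sparse envy structure left by the procedure always supplies such a common safe non-envied agent for any two envied agents.

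\textbf{Main obstacle.} I expect the delicate point to be simultaneously maintaining the fine-grained envy structure (Properties (5)--(7)) while the cascading releases in the inner loop reassign many agents --- in particular, verifying that releasing an envied agent's edge never creates a second envier or an envy cycle, and that the re-triggering of an already-maximal agent when a neighbor later releases an incident edge does not inflate the iteration count. The potential argument delivers finiteness cleanly, but converting it into a genuinely polynomial bound, together with pinning down Property (3), is where the care will be needed.
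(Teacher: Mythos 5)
Your overall strategy matches the paper's: induction over the outer while-loop with the invariant that the orientation stays $\EFX^0_-$ and keeps Properties (1) and (4)--(7), the observation that the cascade of releases follows the (acyclic, by Property (7)) envy chain so that each reassigned agent ends up non-envied, Property (2) from the exit condition, and Property (4) trivially. Your termination argument differs cosmetically --- you use a monotone potential plus a per-agent degree count, whereas the paper notes that every reassigned agent becomes non-envied and no non-envied agent ever becomes envied, so the number of envied agents strictly decreases with each outer round; both yield a polynomial bound.

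The genuine gap is Property (3). You correctly reduce safety to Property (2) for any non-envied agent whose bundle contains no edge incident to the envied agent in question, but the claim you then defer --- that ``the sparse envy structure always supplies such a common safe non-envied agent for any two envied agents'' --- is exactly the content of Property (3) and is not automatic: a non-envied agent can hold many edges, including edges incident to one of the two envied agents, in which case the reduction to Property (2) does not apply to her. The paper closes this by exhibiting an explicit witness: let $a_k$ be the agent who triggers the \emph{last} round of the outer loop and $a_l$ the agent who picks up the released edge $e_{k,l}$ in the first inner iteration of that round. At termination both $a_k$ and $a_l$ are non-envied and $X_l=\{e_{k,l}\}$; since both endpoints of $e_{k,l}$ are non-envied, $X_l$ is a set of dummies for \emph{every} envied agent $a_m$, so $v_m(X_l\cup(E_m^{\ge 0}\cap R))=v_m(E_m^{\ge 0}\cap R)\le v_m(X_m)$ by Property (2), i.e.\ $a_l$ is a single non-envied agent safe for all envied agents simultaneously. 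Without identifying such a witness your proof of the claim is incomplete. A secondary, smaller imprecision: your appeal to ``Property (1) holding at that moment'' when an agent grabs an edge is not literally available mid-cascade (Property (1) is transiently violated there); the correct justification splits into edges unallocated at the start of the round (covered by Property (1) then, plus the fact that no agent's value ever decreases) and edges just released along the cascade (whose other endpoint has just received a strictly better bundle).
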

\begin{proof}
    Clearly, no agent receives an edge that is a chore for her in Algorithm \ref{alg:allocation:EFX0-:property2}, thus Property (4) always holds. 
    We next show by induction that Properties (1) and (5)-(7) always hold at the end of each round of the outer while-loop.
    Assume that at the beginning of a round of the outer while-loop, these four properties hold. 
    Let $a_i$ be the agent who makes the algorithm enter the outer while-loop and $a_j$ be the agent who envies $a_i$. 
    By Property (5), $a_i$ only owned $e_{i, j}$ at the beginning of the round. 
    In the outer while-loop, $a_i$ receives all edges in $E_i^{\ge 0} \cap R(\mathbf{X}^{\prime})$ and releases $e_{i, j}$. 
    Since the value of $E_i^{\ge 0} \cap R(\mathbf{X}^{\prime})$ to $a_i$ is larger than that of $e_{i, j}$, Property (1) holds for $a_i$. 
    Besides, $a_i$ becomes non-envied, since the other endpoint agents of the edges in $E_i^{\ge 0} \cap R(\mathbf{X}^{\prime})$ do not envy her by Property (1). 
    Since $a_i$ gets better off, she does not envy the agents whom she did not envy previously. 
    Therefore, the new (partial) orientation retains Properties (5)-(7) after $a_i$ gets her new bundle. 
    
    After $a_i$ releases $e_{i, j}$, Property (1) does not hold for $a_j$, and the algorithm enters the inner while-loop. 
    In the inner while-loop, $a_j$ receives $e_{i, j}$ and releases the edges she owned previously. 
    The value of $e_{i, j}$ to $a_j$ is larger than the total value of the edges $a_j$ owned, and is larger than the value of each of these edges since none of them are chores for $a_j$ by Property (4). 
    Furthermore, since Property (1) held for $a_j$ at the beginning of the round of outer while-loop, the value of $e_{i, j}$ to $a_j$ is larger than the value of each of $a_j$'s incident edges that were unallocated at the beginning of the round and are not chores for her. 
    Thus, Property (1) holds for $a_j$. 
    Besides, no one envies $a_j$ since she only receives the edge $e_{i, j}$ and $a_i$ now prefers her own bundle to $e_{i, j}$. 
    Since $a_j$ gets better off, she does not envy the agents whom she did not envy previously. 
    Therefore, the new (partial) orientation retains Properties (5)-(7) after $a_j$ gets her new bundle. 
        
    If $a_j$ was non-envied at the beginning of the round of the outer while-loop, each agent now prefers her own bundle to the edge between herself and $a_j$ that $a_j$ releases. 
    Therefore, the round ends and Properties (1) and (5)-(7) hold. 
    If $a_j$ was envied by some agent, Property (1) is not satisfied for that agent and the algorithm enters the inner while-loop again. 
    By the same reasoning, Properties (1) and (5)-(7) hold after that round of the inner while-loop. 
    By induction, Properties (1) and (5)-(7) hold at the end of the the round of the outer while-loop. 
    Therefore, by induction, Properties (1) and (5)-(7) hold at the end of the algorithm. 

    Next, we show that Properties (2) and (3) hold at the end of the algorithm. 
    Observe that the agents who receive new bundles during the algorithm become non-envied, and the agents who were non-envied previously are still non-envied since no new envy occurs during the algorithm. 
    Therefore, the algorithm can terminate in polynomial time, and we have Property (2) at the end. 
    To see Property (3), let $a_k$ be the agent who makes the algorithm enter the last round of the outer while-loop and $a_l$ be the agent involved in the first round of the inner while-loop in that round. 
    At the end of that round, both $a_k$ and $a_l$ are non-envied, and $a_l$ only receives $e_{k, l}$. 
    By Property (2), $a_l$ is safe for all envied agents and Property (3) holds. 

    The (partial) orientation is $\EFX^0_-$ since each agent only receives the edges that are not chores for her by Property (4), and each envied agent only receives one edge by Property (5). 
    Therefore, we complete the proof. 
\end{proof}

\subsection{Step 2-2: Computing the Desired (Partial) Orientation if $\mathbf{X}^1$ Satisfies Property (2) but (3)}

Since $\mathbf{X}^1$ does not satisfy Property (3), there exist two envied agents $a_i$ and $a_j$ such that no non-envied agent is safe for both of them. 
Let $a_{i_0} \leftarrow a_{i_1} \leftarrow \cdots \leftarrow a_{i_s}$ be the sequence of agents such that $a_{i_0}$ is $a_i$, $a_{i_l}$ envies $a_{i_{l-1}}$ for every $l \in [s]$ and $a_{i_s}$ is non-envied, which exists since there is no envy cycle among the agents by Property (7). 
Similarly, let $a_{j_0} \leftarrow  a_{j_1} \leftarrow \cdots \leftarrow a_{j_t}$ be the sequence of agents such that $a_{j_0}$ is $a_j$, $a_{j_l}$ envies $a_{j_{l-1}}$ for every $l \in [t]$ and $a_{j_t}$ is non-envied. 
We first consider the following case: 

\medskip
\noindent\textbf{\underline{Case 1}:} Edge $e_{i, i_s}$ exists and $e_{i, i_s}$ is allocated to $a_{i_s}$ in $\mathbf{X}^1$. 

Since each agent receives at most one edge in $\mathbf{X}^1$, $a_{i_s}$ only receives $e_{i, i_s}$ and does not receive any incident edge of $a_j$. 
Thus, each edge that $a_{i_s}$ receives is a dummy for $a_j$ and by Property (2), $a_{i_s}$ is safe for $a_j$. 
According to our assumption, $a_{i_s}$ is not safe for $a_i$; that is, $v_i(\{e_{i, i_s}\} \cup (E_i^{\ge 0} \cap R(\mathbf{X}^1))) > v_i(X_i^1)$. 
For this case, we allocate the edges in $E_i^{\ge 0} \cap R(\mathbf{X}^1)$ as well as $e_{i, i_s}$ to $a_i$, and $e_{i_{l-1}, i_l}$ to $a_{i_l}$ for every $l \in [s]$. 
Figure \ref{fig:allocation:EFX0-:step2-2:case1} visualizes the allocation process. 
We have the following claim. 

\begin{figure}[tb]
\centering
\includegraphics[width=0.45\columnwidth]{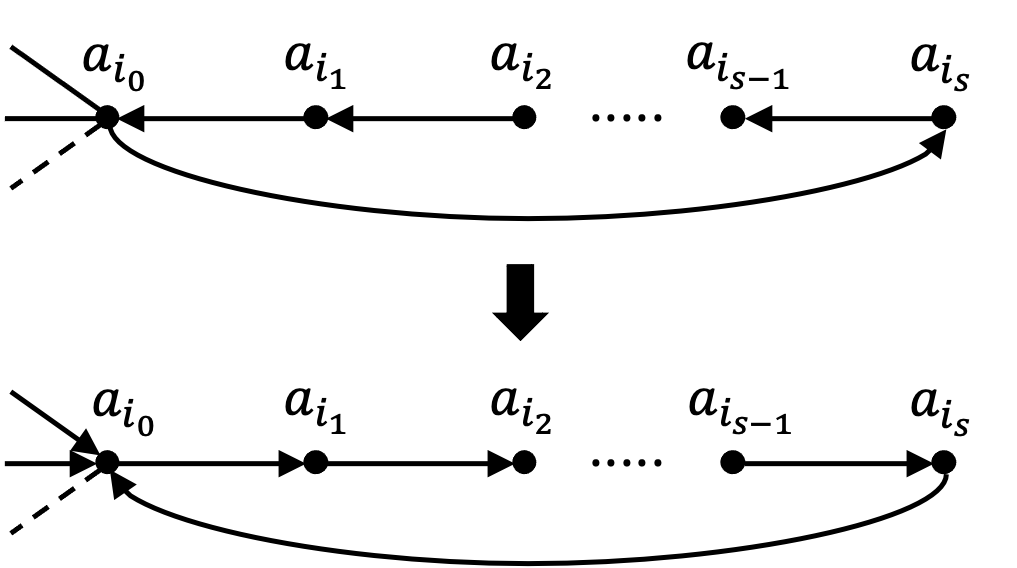}
\caption{The allocation process for Case 1 of Step 2-2. Note that there is a dashed edge incident to $a_{i_0}$, which is a chore for $a_{i_0}$ and is not allocated to $a_{i_0}$ during the process. 
Note that for figures in this section, an arrow $a\rightarrow b$ means that the edge between $a$ and $b$ is allocated to $b$.} 
\label{fig:allocation:EFX0-:step2-2:case1}
\end{figure}

\begin{claim}
\label{clm:allocation:EFX0-:step2-2:case1}
    The (partial) orientation computed in Case 1 of Step 2-2 is $\EFX^0_-$ and satisfies Properties (1)-(7) in Definition \ref{def:allocation:EFX0-:properties}. 
    Moreover, it is computed in polynomial time. 
\end{claim}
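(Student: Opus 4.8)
The plan is to show that the rotation performed in Case 1 turns all of $a_{i_0},\dots,a_{i_s}$ into non-envied agents while creating no new envy, so that the new orientation inherits Properties (1)--(7) from $\mathbf{X}^1$ almost for free. First I would record the structural facts that make the rotation well-defined. In $\mathbf{X}^1$ (the direct output of Algorithm \ref{alg:allocation:EFX0-:initial}, so by Claim \ref{clm:allocation:EFX0-:initial}) every agent holds at most one edge; hence ``$a_{i_l}$ envies $a_{i_{l-1}}$'' can only mean that $a_{i_{l-1}}$ holds the single edge $e_{i_{l-1},i_l}$ joining them (the only incident edge $a_{i_l}$ can value positively) and that $a_{i_l}$ strictly prefers it, i.e. $v_{i_l}(e_{i_{l-1},i_l})>v_{i_l}(X_{i_l}^1)\ge 0$; in particular $e_{i_{l-1},i_l}$ is a good for $a_{i_l}$. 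Because Property (7) forbids envy cycles, the chain vertices are distinct, every chain bundle is a chain-internal edge, and the Case 1 hypothesis (that $e_{i,i_s}$ exists and is held by $a_{i_s}$) forces $s\ge 2$ with $e_{i_0,i_s}\neq e_{i_0,i_1}$. The new assignment hands each $a_{i_l}$ ($l\in[s]$) the coveted edge $e_{i_{l-1},i_l}$ and gives $a_{i_0}$ the set $(E_{i_0}^{\ge 0}\cap R(\mathbf{X}^1))\cup\{e_{i_0,i_s}\}$; no previously allocated edge is freed, so $R$ only shrinks.

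The first technical point, and the one I expect to be the crux, is Property (4). For $a_{i_1},\dots,a_{i_s}$ it is immediate since each receives the good she coveted; the delicate case is showing $e_{i_0,i_s}$ is not a chore for $a_{i_0}=a_i$. Here I would combine the two standing hypotheses of Step 2-2: $\mathbf{X}^1$ satisfies Property (2), so (as $a_i$ is envied) $v_i(X_i^1)\ge v_i(E_i^{\ge 0}\cap R(\mathbf{X}^1))$, while the failure of Property (3) gives that $a_{i_s}$ is not safe for $a_i$, i.e. $v_i(X_i^1)<v_i(\{e_{i,i_s}\}\cup(E_i^{\ge 0}\cap R(\mathbf{X}^1)))$ (using that $e_{i,i_s}$ is $a_{i_s}$'s only $a_i$-incident edge). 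If $e_{i,i_s}$ were a chore for $a_i$, adjoining it could not raise $a_i$'s value, so the right-hand side would be at most $v_i(E_i^{\ge 0}\cap R(\mathbf{X}^1))\le v_i(X_i^1)$, contradicting the strict inequality; hence $e_{i,i_s}\in E_{i_0}^{\ge 0}$ and Property (4) survives. The same not-safe inequality also shows $a_{i_0}$'s new bundle is strictly more valuable to her than $X_{i_0}^1$, so $a_{i_0}$ gets better off.

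Next I would verify that no new envy is introduced, which propagates Properties (5)--(7). Every chain agent ends strictly better off, so she envies nobody she did not already envy; it thus remains to check who can envy the agents whose bundles grew. A non-chain agent can value $a_{i_0}$'s new bundle only through one incident edge $e_{i_0,m}$ that was unallocated in $\mathbf{X}^1$, and Property (1) of $\mathbf{X}^1$ already bounds its value by her own bundle, so no new envy toward $a_{i_0}$ arises; the edge held by any $a_{i_l}$ ($l\ge1$) is chain-internal, a dummy for everyone except its two endpoints, and for the one relevant endpoint the comparison reduces to a value she already dominated. Crucially, $a_{i_s}$ surrenders $e_{i_0,i_s}$ to $a_{i_0}$ but receives $e_{i_{s-1},i_s}$, which she strictly prefers, so she does not come to envy $a_{i_0}$ (whose only $a_{i_s}$-incident edge is exactly $e_{i_0,i_s}$). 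Consequently $a_{i_0},\dots,a_{i_s}$ all become non-envied and the envy graph only loses edges: since $\mathbf{X}^1$ was acyclic and every remaining envied agent is an unchanged non-chain agent holding one edge with a unique envier, Properties (5)--(7) persist, and Properties (1)--(2) follow because unchanged agents keep their bundles, $a_{i_0}$ has absorbed all her unallocated non-chores (making her $E^{\ge0}\cap R$ empty), and $R$ only shrank. The orientation is $\EFX^0_-$ for the usual reason, exactly as in Claim \ref{clm:allocation:EFX0-:satisfy2}: Property (4) voids the chore condition and Property (5) voids the good condition.

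Finally, for Property (3) I would exhibit a single witness safe for every envied agent, namely $a_{i_s}$, which is now non-envied and holds the lone chain-internal edge $e_{i_{s-1},i_s}$. That edge is a dummy for every (necessarily non-chain) envied agent $a_p$, so Property (2) yields $v_p(X_{i_s}\cup(E_p^{\ge0}\cap R))=v_p(E_p^{\ge0}\cap R)\le v_p(X_p)$; hence $a_{i_s}$ is safe for $a_p$, and any two envied agents share $a_{i_s}$ as a common safe non-envied agent. Polynomial time is clear since the rotation touches $O(n)$ edges. I expect the main obstacle to be precisely the bookkeeping around the closing edge $e_{i_0,i_s}$ — proving it is a non-chore for $a_{i_0}$ from Properties (2)/(3) and checking that moving it from $a_{i_s}$ to $a_{i_0}$ neither breaks $\EFX^0_-$ nor makes $a_{i_s}$ envious — while the remainder of the chain behaves like a standard envy-path rotation.
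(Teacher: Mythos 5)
Your proof is correct and follows essentially the same route as the paper: show the chain rotation makes all of $a_{i_0},\dots,a_{i_s}$ non-envied and creates no new envy, then let Properties (1)--(7) and $\EFX^0_-$ follow from Properties (4) and (5). The only (harmless) differences are that you make explicit the argument that $e_{i_0,i_s}$ is not a chore for $a_{i_0}$ (which the paper leaves implicit in ``Property (4) holds'') and you witness Property (3) with $a_{i_s}$ where the paper uses $a_{i_1}$; both witnesses work by the same dummy-edge argument.
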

\begin{proof}
    Clearly, the allocation process runs in polynomial time and Property (4) holds.  
    We then show that the (partial) orientation satisfies Properties (1), (2) and (5)-(7). 
    Observe that the agents who receive new bundles in the allocation process become non-envied. 
    To see this, first consider $a_i$. 
    Since $\mathbf{X}^1$ satisfies Property (1), the other endpoint agents of the edges in $E_i^{\ge 0} \cap R(\mathbf{X}^1)$ do not envy $a_i$. 
    Besides, $a_{i_s}$ does not envy $a_i$, since she envied $a_{i_{s-1}}$ previously and prefers $e_{i_{s-1}, i_s}$ that she owns currently to $e_{i, i_s}$. 
    For each $l \in [s]$, $a_{i_l}$ is non-envied, since $a_{i_{l-1}}$ prefers her current bundle to $e_{i_{l-1}, i_l}$. 
    Also observe that these agents do not envy the agents whom they did not envy previously since they get better off.
    These two observations give that the new (partial) orientation still satisfies Properties (5)-(7). 
    Moreover, since no agent gets worse off and no edge that was allocated previously becomes unallocated, Properties (1) and (2) are still satisfied. 
    Since $a_{i_1}$ receives only the edge between herself and $a_{i_0}$ who is non-envied, by Property (2), she is safe for all envied agents. 
    Therefore, Property (3) is also satisfied. 
    The (partial) orientation is $\EFX^0_-$ since each agent only receives the edges that are not chores for her by Property (4), and each envied agent only receives one edge by Property (5). 
    Therefore, we complete the proof.  
\end{proof}

When $e_{j, j_t}$ is allocated to $a_{j_t}$, we can also construct a (partial) orientation that is $\EFX^0_-$ and satisfies Properties (1)-(7), following the same process as above. 
Therefore, it remains to consider the case where $e_{i, i_s}$ is not allocated to $a_{i_s}$ and $e_{j, j_t}$ is not allocated to $a_{j_t}$ or such edges do not exist. 
In this case, $a_i$ values $a_{i_s}$'s bundle at zero, and thus $a_{i_s}$ is safe for her. 
This implies that $a_{i_s}$ is not safe for $a_j$; that is, $a_{i_s}$ receives $e_{j, i_s}$ in $\mathbf{X}^1$ and $v_j(\{e_{j, i_s}\} \cup (E_j^{\ge 0} \cap R(\mathbf{X}^1))) > v_j(X_j^1)$. 
Similarly, $a_{j_t}$ is not safe for $a_i$, $a_{j_t}$ receives $e_{i, j_t}$ in $\mathbf{X}^1$ and $v_i(\{e_{i, j_t}\} \cup (E_i^{\ge 0} \cap R(\mathbf{X}^1))) > v_i(X_i^1)$. 
Figure \ref{fig:allocation:EFX0-:step2-2:cases23:initial} illustrates the initial (partial) orientation. 
We further consider the following two cases: 

\begin{figure}[tb]
\centering
\includegraphics[width=0.4\columnwidth]{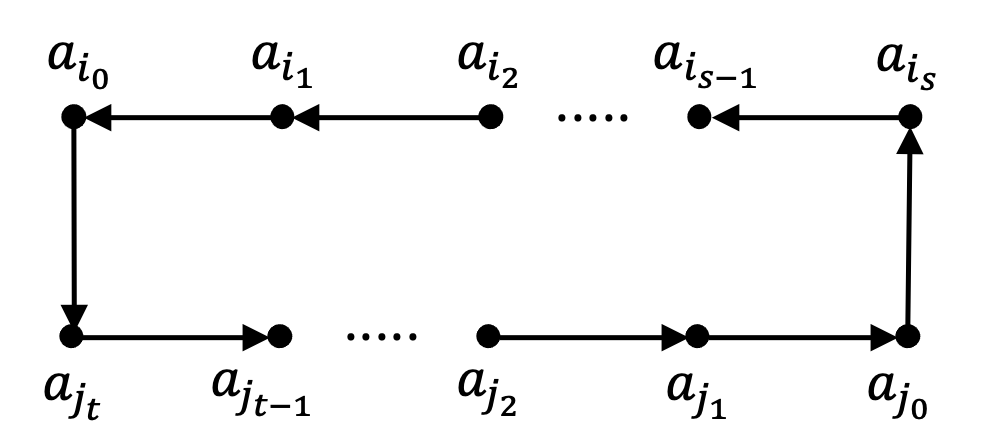}
\caption{Initial (partial) orientation for Cases 2 and 3 of Step 2-2.}
\label{fig:allocation:EFX0-:step2-2:cases23:initial}
\end{figure}

\medskip
\noindent\textbf{\underline{Case 2}:} $a_i$ prefers $e_{i, j}$ to $e_{i, j_t}$ or is indifferent between them

For this case, we let $a_i$ receive the incident edge (if exists) that she values the most among those that are unallocated and are not chores for her (i.e., $E_i^{\ge 0} \cap R(\mathbf{X}^1)$). 
We then allocate $e_{i_{l-1}, i_l}$ to $a_{i_l}$ for every $l \in [s]$ and release $e_{j, i_s}$. 
Figure \ref{fig:allocation:EFX0-:step2-2:case2} illustrates the new (partial) orientation. 
Note that the new (partial) orientation may not satisfy Property (2), since some agents who were non-envied are now envied by $a_i$ and the incident edge $e_{i_s, j}$ of $a_j$ that was allocated now becomes unallocated. 
The good news is that we can prove that the new (partial) orientation is $\EFX^0_-$ and satisfies Properties (1) and (4)-(7). 
Therefore, we can run Algorithm \ref{alg:allocation:EFX0-:property2} to obtain a (partial) orientation that is $\EFX^0_-$ and satisfies Properties (1)-(7). 
We have the following claim. 

\begin{figure}[tb]
\centering
\includegraphics[width=0.47\columnwidth]{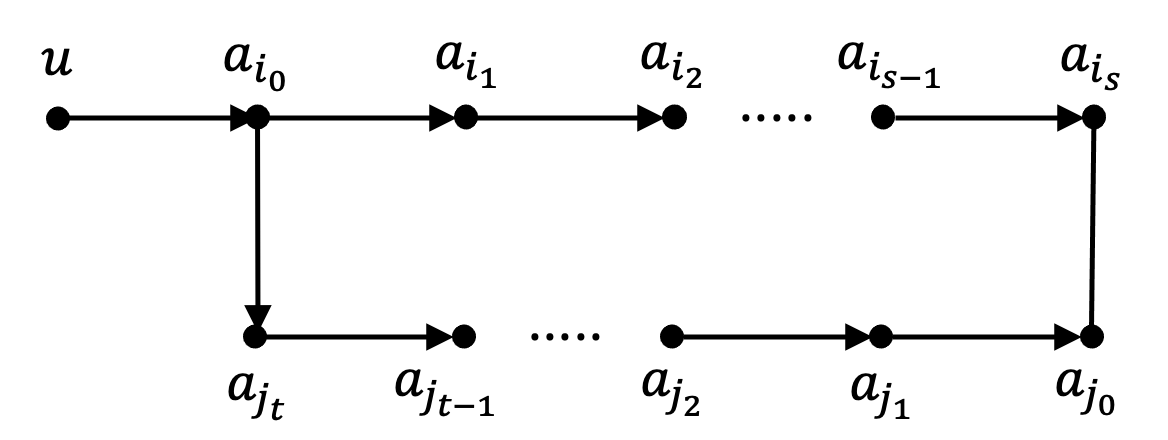}
\caption{The new (partial) orientation for Case 2 of Step 2-2. }
\label{fig:allocation:EFX0-:step2-2:case2}
\end{figure}

\begin{claim}\label{clm:allocation:EFX0-:step2-2:case2}
    The (partial) orientation computed in Case 2 of Step 2-2 is $\EFX^0_-$ and satisfies Properties (1)-(7) in Definition \ref{def:allocation:EFX0-:properties}. 
    Moreover, it is computed in polynomial time. 
\end{claim}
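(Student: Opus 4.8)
The plan is to follow the two-stage route announced just before the statement. In the first stage I would argue that the orientation $\mathbf{X}'$ produced by the explicit reallocation of Case~2 --- reallocate $e_{i_{l-1},i_l}$ to $a_{i_l}$ for every $l\in[s]$, give $a_{i_0}=a_i$ her most valued unallocated non-chore incident edge, and release $e_{j,i_s}$ --- is already $\EFX^0_-$ and satisfies Properties (1) and (4)--(7) of Definition~\ref{def:allocation:EFX0-:properties}, even though it may violate (2) and (3). In the second stage I would feed $\mathbf{X}'$ into Algorithm~\ref{alg:allocation:EFX0-:property2} and invoke Claim~\ref{clm:allocation:EFX0-:satisfy2}, whose hypothesis is exactly ``$\EFX^0_-$ together with Properties (1) and (4)--(7)'', to obtain a (partial) orientation satisfying Properties (1)--(7) and $\EFX^0_-$ in polynomial time. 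The polynomial running time of the whole construction is then immediate, since the reallocation touches only the $s+1$ agents on the chain.

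Most of the first stage would reuse the reasoning of Claim~\ref{clm:allocation:EFX0-:step2-2:case1}. Property (4) holds because each $a_{i_l}$ with $l\ge1$ receives $e_{i_{l-1},i_l}$, which is a good for her (she envied $a_{i_{l-1}}$, the previous holder of that edge), and $a_i$ receives an edge from $E_i^{\ge0}$; hence no agent receives a chore, and since every bundle stays a single edge, $\EFX^0_-$ and Property (5) follow as in Case~1. Each chain agent $a_{i_l}$ with $l\ge1$ becomes (weakly) better off and non-envied, so she introduces no new envy, which I would use to carry Properties (6) and (7) along the chain exactly as before.

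The two new phenomena need separate care. The first is the release of $e_{j,i_s}$, for which I must restore Property (1). I would first show $e_{j,i_s}\in E_j^{\ge0}$: non-safety of $a_{i_s}$ for $a_j$ gives $v_j(\{e_{j,i_s}\}\cup(E_j^{\ge0}\cap R(\mathbf{X}^1)))>v_j(X_j^1)$, which combined with Property (2) of $\mathbf{X}^1$ (valid throughout Step~2-2) rules out $e_{j,i_s}$ being a chore for $a_j$. Property (1) then survives the release because $a_{i_s}$ is non-envied in $\mathbf{X}^1$: $a_j$ does not envy $a_{i_s}$, so $v_j(X_j^1)\ge v_j(\{e_{j,i_s}\})=v_j(e_{j,i_s})$, while on the other side $a_{i_s}$ now holds $e_{i_{s-1},i_s}$, which she strictly prefers to $e_{j,i_s}$. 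The second phenomenon is that $a_i$'s new bundle is, by Property (1) of $\mathbf{X}^1$, only weakly smaller than $X_i^1$, so $a_i$ may now envy holders of her incident goods, turning previously non-envied agents into envied ones. The key structural observation I would establish is that $a_i$ herself becomes non-envied (a source of the envy digraph): the other endpoint of her chosen edge does not envy her by Property (1) of $\mathbf{X}^1$.

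The hard part will be turning these observations into Properties (6) and (7). For Property (6) I would check that every agent newly envied by $a_i$ holds exactly one incident edge of $a_i$ and was non-envied in $\mathbf{X}^1$, so its in-degree becomes exactly one. For Property (7) I would argue that, once $a_i$ is a source, the only edges that could create a cycle are her new out-edges, and a source cannot lie on a directed cycle; after deleting the chain edges and $a_i$'s out-edges the remaining digraph is a subgraph of the acyclic envy digraph of $\mathbf{X}^1$, so no cycle survives. It is here that I expect to invoke the Case~2 hypothesis $v_i(e_{i,j})\ge v_i(e_{i,j_t})$ to control $a_i$'s envy toward $a_{j_t}$ (the non-envied top of the $a_j$-chain), keeping the envy she creates consistent with Properties (6) and (7). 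Once Properties (1) and (4)--(7) are secured for $\mathbf{X}'$, the statement follows by the single application of Claim~\ref{clm:allocation:EFX0-:satisfy2} described above.
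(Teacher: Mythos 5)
Your overall route is the paper's: show the explicit Case-2 reallocation yields a partial orientation that is $\EFX^0_-$ with Properties (1) and (4)--(7), then repair Property (2) via Algorithm \ref{alg:allocation:EFX0-:property2} and Claim \ref{clm:allocation:EFX0-:satisfy2}. Your stage-1 arguments are essentially correct and match the paper's (the release of $e_{j,i_s}$ is handled by $a_j$ not having envied the non-envied $a_{i_s}$ and by $a_{i_s}$ preferring $e_{i_{s-1},i_s}$; $a_i$ becomes a source; Properties (5)--(6) follow because every bundle is still a single edge).

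The gap is in stage 2: you delegate Properties (2) \emph{and} (3) entirely to Claim \ref{clm:allocation:EFX0-:satisfy2}, but that claim only delivers Property (3) when the repair loop actually executes. Algorithm \ref{alg:allocation:EFX0-:property2} is specified for inputs violating Property (2), and the proof of Property (3) inside Claim \ref{clm:allocation:EFX0-:satisfy2} extracts the required safe non-envied agent from ``the last round of the outer while-loop.'' If the Case-2 orientation $\mathbf{X}'$ already satisfies Property (2), the algorithm is a no-op and returns $\mathbf{X}'$, for which Property (3) is exactly what is in doubt (recall the whole point of Step 2-2 is that $\mathbf{X}^1$ satisfies (2) but not (3)). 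The paper closes this sub-case with a direct argument, and this is where the Case-2 hypothesis $v_i(e_{i,j})\ge v_i(e_{i,j_t})$ is actually used --- not for Properties (6)--(7), where you place it (those follow from ``each agent holds at most one edge'' and ``$a_i$ is a source'' alone). Concretely: $a_i$'s new bundle is her maximum over $E_i^{\ge 0}\cap R(\mathbf{X}^1)$, a set containing $e_{i,j}$, and $v_i(e_{i,j})\ge v_i(e_{i,j_t})$, so $a_i$ does not envy $a_{j_t}$; hence $a_{j_t}$ remains non-envied, holds only $e_{i,j_t}$, and by Property (2) is safe for every envied agent, which yields Property (3). You need to add this argument (or an equivalent one) for the sub-case where $\mathbf{X}'$ already satisfies Property (2).
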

\begin{proof}
    Recall that by Claim \ref{clm:allocation:EFX0-:initial}, the initial (partial) orientation $\mathbf{X}^1$ is $\EFX^0_-$ and satisfies Properties (1) and (4)-(7). 
    We first show that the new (partial) orientation as illustrated in Figure \ref{fig:allocation:EFX0-:step2-2:case2} is also $\EFX^0_-$ and satisfies Properties (1) and (4)-(7). 
    Since no agent receives an edge that is a chore for her in the allocation process, Property (4) holds.
    Moreover, each agent still receives at most one incident edge, thus the new (partial) orientation is $\EFX^0_-$. 
    To see Property (1), it suffices to consider agents $a_i$, $a_j$ and $a_{i_l}$ for $l \in [s]$, since only $a_i$ and $a_{i_l}$ for $l \in [s]$ receive a new bundle in the allocation process, and only $a_j$ has an incident edge that was allocated previously but now becomes unallocated. 
    For $a_i$, if she does not receive an edge, then all her unallocated incident edges are chores for her, thus Property (1) holds for her. 
    If she receives an edge, then the edge is the one she values the most among those in $E_i^{\ge 0} \cap R(\mathbf{X}^1)$, thus Property (1) also holds for her. 
    For each $l \in [s-1]$, $a_{i_l}$ does not get worse off and her unallocated incident edges do not change, thus Property (1) also holds for $a_{i_l}$. 
    For the same reason, both $a_{i_s}$ and $a_j$ prefer their current bundles to each of their unallocated incident edges except $e_{i_s, j}$, which is the only edge that was allocated previously but is now unallocated. 
    For $a_{i_s}$, she prefers her current bundle to $e_{i_s, j}$ since she envied $a_{i_{s-1}}$  previously and now owns the edge $e_{i_{s-1}, i_s}$. 
    For $a_j$, she prefers her current bundle to $e_{i_s, j}$ since she did not envy $a_{i_s}$ previously. 
    Thus, Property (1) also holds for both of them. 
    
    Now consider Properties (5)-(7).
    Since each agent still receives at most one edge, Properties (5) and (6) hold. 
    For Property (7), first observe that except $a_{i_1}$, the agents who receive new bundles are now non-envied (i.e., $a_i$ and $a_{i_l}$ for each $l \in [s] \setminus \{1\}$). 
    To see this, since $\mathbf{X}^1$ satisfies Property (1), the other endpoint agent of the edge that $a_i$ receives (if it exists) does not envy $a_i$. 
    For each $l \in [s] \setminus \{1\}$, $a_{i_l}$ is non-envied by since $a_{i_{l-1}}$ prefers her current bundle to $e_{i_{l-1}, i_l}$. 
    Since $a_i$ is the only agent who may become worse off, the only agents who were non-envied previously and now become envied and who were envied by some agent and are now envied by another agent are all $a_i$'s neighbors. 
    Each of them is envied by $a_i$ who is now non-envied, thus there is still no envy cycle among the agents and Property (7) still holds.
    
    Clearly, the allocation process for Case 2 of Step 2-2 runs in polynomial time. 
    If the new (partial) orientation satisfies Property (2), it also satisfies Property (3). 
    To see this, consider $a_i$ and $a_{j_t}$. 
    Since the edge that $a_i$ owns currently is the one she values the most among those in $E_i^{\ge 0} \cap R(\mathbf{X}^1)$ (including $e_{i, j}$) and she prefers $e_{i, j}$ to $e_{i, j_t}$, she does not envy $a_{j_t}$. 
    Moreover, since $a_{j_t}$ only receives the edge between herself and $a_i$ who is now non-envied, $a_{j_t}$ is safe for all envied agents and Property (3) holds. 
    If the new (partial) orientation does not satisfy Property (2), we run Algorithm \ref{alg:allocation:EFX0-:property2} once. 
    By Claim \ref{clm:allocation:EFX0-:satisfy2}, the final (partial) orientation is $\EFX^0_-$ and satisfies Properties (1)-(7), and is computed in polynomial time. 
\end{proof}

\medskip
\noindent\textbf{\underline{Case 3}:} $a_i$ prefers $e_{i, j_t}$ to $e_{i, j}$

For this case, we allocate all edges in $E_j^{\ge 0} \cap R(\mathbf{X}^1)$ as well as $e_{j, i_s}$ to $a_j$, $e_{j_{l-1}, j_l}$ to $a_{j_l}$ for every $l \in [t]$, and $e_{i_{l-1}, i_l}$ to $a_{i_l}$ for every $l \in [s]$. 
We then release $e_{i, j_t}$ and let $a_i$ receive the incident edge (if it exists) that she values the most among those that are unallocated and are not chores for her. 
Figure \ref{fig:allocation:EFX0-:step2-2:case3} illustrates the new (partial) orientation. 
Similar to Case 2, the new (partial) orientation may not satisfy Property (2), since some agents who were non-envied are now envied by $a_i$. 
But we can prove that the new (partial) orientation is $\EFX^0_-$ and satisfies Properties (1) and (4)-(7). 
Therefore, after running Algorithm \ref{alg:allocation:EFX0-:property2}, we can obtain a (partial) orientation that is $\EFX^0_-$ and satisfies Properties (1)-(7). 
We have the following claim. 

\begin{figure}[tb]
\centering
\includegraphics[width=0.5\columnwidth]{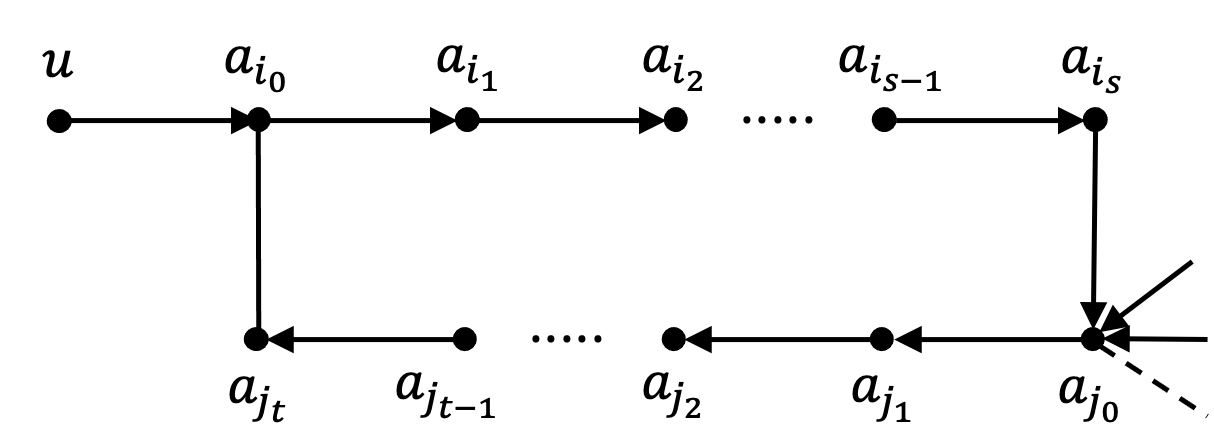}
\caption{The new (partial) orientation for Case 3 of Step 2-2.}
\label{fig:allocation:EFX0-:step2-2:case3}
\end{figure}

\begin{claim}\label{clm:allocation:EFX0-:step2-2:case3}
    The (partial) orientation computed in Case 3 of Step 2-2 is $\EFX^0_-$ and satisfies Properties (1)-(7) in Definition \ref{def:allocation:EFX0-:properties}. 
    Moreover, it is computed in polynomial time. 
\end{claim}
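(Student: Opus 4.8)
The plan is to reuse the two-stage template from the proof of Claim~\ref{clm:allocation:EFX0-:step2-2:case2}: first argue that the intermediate (partial) orientation produced by the Case~3 reallocation (depicted in Figure~\ref{fig:allocation:EFX0-:step2-2:case3}) is $\EFX^0_-$ and already satisfies Properties (1) and (4)-(7), and only afterwards deal with Properties (2) and (3). Property (4) is immediate from the construction, since every edge handed to an agent is a non-chore for her: the edges of $E_j^{\ge 0}\cap R(\mathbf{X}^1)$ and $e_{j,i_s}$ are non-chores for $a_j$ (the latter because $a_{i_s}$ being unsafe for $a_j$ forces $e_{j,i_s}$ to be a good for $a_j$), each chain edge $e_{i_{l-1},i_l}$ (resp.\ $e_{j_{l-1},j_l}$) is a good for the agent $a_{i_l}$ (resp.\ $a_{j_l}$) who envied across it, and $a_i$ only picks from her unallocated non-chore incident edges. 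Given Property (4), the second $\EFX^0_-$ condition (removing a chore from one's own bundle) is vacuous, and once Property (5) is in hand the first condition reduces to the familiar observation that an envied agent holds a single edge whose removal leaves value $0$; hence $\EFX^0_-$ will follow from Properties (4) and (5).

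The core of the argument is to track who becomes (non-)envied. First I would show that $a_j$ becomes non-envied: her new bundle $E_j^{\ge 0}\cap R(\mathbf{X}^1)\cup\{e_{j,i_s}\}$ is strictly better than $X_j^1$ by the Case~3 inequality, and none of the other endpoints of these edges envies her --- the endpoints of the edges in $E_j^{\ge 0}\cap R(\mathbf{X}^1)$ by Property (1) of $\mathbf{X}^1$ (Claim~\ref{clm:allocation:EFX0-:initial}), and $a_{i_s}$ because she now holds $e_{i_{s-1},i_s}$, which she strictly prefers to $e_{j,i_s}$. A symmetric, simpler argument shows that every chain agent $a_{i_l}$ and $a_{j_l}$ becomes non-envied, since each now owns exactly the edge she previously coveted. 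The only agent who may get worse off is $a_i$, who releases $e_{i,i_1}$ and picks the most valuable unallocated non-chore incident edge available (which includes the just-released $e_{i,j_t}$); but $a_i$ ends up non-envied as well, because the other endpoint of her chosen edge either prefers her own bundle by Property (1) of $\mathbf{X}^1$, or is $a_{j_t}$, who strictly prefers her new edge $e_{j_{t-1},j_t}$ to $e_{i,j_t}$. Since $a_i$ has no incoming envy, any new envy she creates cannot close a cycle, so Property (7) is preserved; Properties (5) and (6) hold because every agent other than the now non-envied $a_j$ still receives at most one edge. For Property (1) I would only need to re-examine the agents whose bundles changed (all of whom improved) and the single edge $e_{i,j_t}$ that newly becomes unallocated; it is dominated both by $a_i$'s new bundle (she picks the maximum) and by $a_{j_t}$'s new bundle (envy across $e_{j_{t-1},j_t}$), so Property (1) survives.

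It remains to secure Properties (2) and (3). If the intermediate orientation already satisfies Property (2), then Property (3) follows by exhibiting a single non-envied agent safe for all envied agents: the agent $a_{j_1}$ (which exists because $a_j$ is envied) holds only $e_{j_0,j_1}$, an edge incident to the now non-envied $a_j=a_{j_0}$. This edge is worth exactly $v_j(X_j^1)$ to $a_j$, strictly less than $a_j$'s improved bundle, so $a_j$ does not envy $a_{j_1}$ and $a_{j_1}$ is non-envied; and since $e_{j_0,j_1}$ is a dummy for every other envied agent, Property (2) makes $a_{j_1}$ safe for all of them. If Property (2) fails --- which can happen because $a_i$ may now envy agents whose unallocated non-chore incident edges are not yet dominated, and because $e_{i,j_t}$ becomes unallocated --- I would feed the intermediate orientation into Algorithm~\ref{alg:allocation:EFX0-:property2}; since that orientation is $\EFX^0_-$ and satisfies Properties (1) and (4)-(7), Claim~\ref{clm:allocation:EFX0-:satisfy2} returns, in polynomial time, an orientation that is $\EFX^0_-$ and satisfies Properties (1)-(7). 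Every stage is clearly polynomial.

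I expect the main obstacle to be the extra bookkeeping caused by the fact that Case~3, unlike Case~2, reallocates along \emph{both} envy chains simultaneously and grants $a_j$ a whole bundle of edges rather than a single one. The two delicate points are (i) proving that this large bundle does not make $a_j$ envied, which is exactly where Property (1) of $\mathbf{X}^1$ together with the chain structure (that $a_{i_s}$ now prefers $e_{i_{s-1},i_s}$ to $e_{j,i_s}$) is essential, and (ii) confirming that $\EFX^0_-$ survives an agent, namely $a_j$, holding several edges --- this is fine precisely because $a_j$ is non-envied, so the $\EFX^0_-$ conditions never invoke her bundle. Verifying that $a_i$'s possible new envy cannot create a cycle, again via the fact that $a_i$ becomes non-envied, is the other step that must be handled carefully before Claim~\ref{clm:allocation:EFX0-:satisfy2} can be invoked.
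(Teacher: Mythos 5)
Your proof follows essentially the same route as the paper's: establish Property (4) from the construction, verify Property (1) agent by agent, argue that $a_j$, $a_i$ and the chain agents absorb the reallocated edges without incurring incoming envy so that Properties (5)--(7) survive, derive $\EFX^0_-$ from Properties (4) and (5), obtain Property (3) via $a_{j_1}$ when Property (2) already holds, and otherwise invoke Algorithm \ref{alg:allocation:EFX0-:property2} together with Claim \ref{clm:allocation:EFX0-:satisfy2}. One overstatement: it is not true that \emph{every} chain agent becomes non-envied --- $a_{i_1}$ may be newly envied by $a_i$, since $a_i$ surrenders $e_{i,i_1}$ (her most-preferred available edge at the time $\mathbf{X}^1$ was built) and the replacement edge she picks can be worth strictly less to her; the paper explicitly excepts $a_{i_1}$ for exactly this reason. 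This does not damage your argument, because none of your later steps actually uses $a_{i_1}$'s non-envied status: Property (7) needs only that $a_i$, the sole source of new envy, is herself non-envied, and Properties (5)--(6) need only that every agent other than the non-envied $a_j$ holds at most one edge.
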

\begin{proof}
    The proof is similar to that of Claim \ref{clm:allocation:EFX0-:step2-2:case2}. 
    Since no agent receives an edge that is a chore for her in the allocation process, Property (4) holds.
    To see Property (1), first consider $a_i$. 
    If $a_i$ does not receive an edge, then all her unallocated incident edges are chores for her, thus Property (1) holds for her. 
    If she receives an edge, the edge is the one that she values the most among her unallocated incident edges that are not chores for her, thus Property (1) also holds for her. 
    For each $l \in [s]$, $a_{i_l}$ gets better off and her unallocated incident edges do not change, thus Property (1) holds for her. 
    For the same reason, Property (1) also holds for $a_{j_l}$ for each $l \in [t-1]$. 
    For $a_j$, all her incident edges that are not chores for her are allocated, thus Property (1) holds for her. 
    For $a_{j_t}$, although $e_{i, j_t}$ was allocated to her previously but is now unallocated, she prefers her current bundle to $e_{i, j_t}$ since she envied $a_{j_{t-1}}$ and now owns the edge $e_{j_t, j_{t-1}}$. 
    Thus Property (1) also holds for $a_{j_t}$. 
    Therefore, the new (partial) orientation retains Property (1). 
    
    To see Properties (5)-(7), first observe that except $a_{i_1}$, the agents who receive new bundles are now non-envied. 
    To see this, since $\mathbf{X}^1$ satisfies Property (1), the other endpoint agent of the edge that $a_i$ receives (if it exists) does not envy $a_i$. 
    For each $l \in [s] \setminus \{1\}$, $a_{i_l}$ is non-envied since $a_{i_{l-1}}$ prefers her current bundle to $e_{i_{l-1}, i_l}$. 
    For the same reason, for each $l \in [t]$, $a_{j_l}$ is non-envied. 
    For $a_j$, $a_{i_s}$ does not envy her since $a_{i_s}$ prefers $e_{i_{s-1}, i_s}$ to $e_{j, i_s}$. 
    The other endpoint agents of the edges in $E_j^{\ge 0} \cap R(\mathbf{X}^1)$ do not envy her, either, since $\mathbf{X}^1$ satisfies Property (1). 
    Since only $a_j$ receives more than one edge and she is now non-envied, Properties (5) and (6) still hold.
    For Property (7), since $a_i$ is the only agent who may become worse off, the only agents who were non-envied previously and now become envied and who were envied by some agent and are now envied by another agent are all $a_i$'s neighbors. 
    Each of them is envied by $a_i$ who is now non-envied, thus there is still no envy cycle among the agents and Property (7) still holds. 
    The (partial) orientation is $\EFX^0_-$ since each agent only receives the edges that are not chores for her by Property (4), and each envied agent only receives one edge by Property (5). 
    
    If the new (partial) orientation satisfies Property (2), it also satisfies Property (3). 
    To see this, consider $a_j$ and $a_{j_1}$. 
    Since $a_{j_1}$ only receives the edge between herself and $a_j$ who is non-envied, $a_{j_1}$ is safe for all envied agents and Property (3) holds. 
    If the new (partial) orientation does not satisfy Property (2), we run Algorithm \ref{alg:allocation:EFX0-:property2} once. 
    By Claim \ref{clm:allocation:EFX0-:satisfy2}, the final (partial) orientation is $\EFX^0_-$ and satisfies Properties (1)-(7), and is computed in polynomial time. 
\end{proof}

Now we are ready to prove Lemma \ref{lem:allocation:EFX0-:first}. 
\begin{proof}[Proof of Lemma \ref{lem:allocation:EFX0-:first}]
    By Claims \ref{clm:allocation:EFX0-:initial}, \ref{clm:allocation:EFX0-:satisfy2}, \ref{clm:allocation:EFX0-:step2-2:case1},  \ref{clm:allocation:EFX0-:step2-2:case2}, \ref{clm:allocation:EFX0-:step2-2:case3}, we know that a (partial) orientation that is $\EFX^0_-$ and satisfies Properties (1)-(7) can be computed in polynomial time. 
    If it also satisfies Property (8), then we have done. 
    
    If the (partial) orientation does not satisfy Property (8), then there exists a sequence of agents $a_{i_0} \leftarrow a_{i_1} \leftarrow \cdots \leftarrow a_{i_s}$ such that  $a_{i_l}$ envies $a_{i_{l-1}}$ for every $l \in [s]$ and $a_{i_s}$ is non-envied, we have that $a_{i_l}$ is not safe for $a_{i_0}$ for some $l \in [s]$. 
    By Property (5), for each $l \in [s-1]$, agent $a_{i_l}$ only receives $e_{i_l, i_{l+1}}$ and does not receive any incident edge of $a_{i_0}$. 
    Thus by Property (2), $a_{i_l}$ is safe for $a_{i_0}$. 
    The only case that Property (8) is not satisfied is that $a_{i_s}$ receives $e_{i_s, i_0}$ and is not safe for $a_{i_0}$. 
    For this case, we run the same allocation process for Case 1 of Step 2-2, as illustrated in Figure \ref{fig:allocation:EFX0-:step2-2:case1}. 
    Specifically, we allocate $e_{i_s, i_0}$ to $a_{i_0}$ as well as all her unallocated incident edges that are not chores for her, and allocate $e_{i_{l-1}, i_l}$ to $a_{i_l}$ for every $l \in [s]$. 
    The above allocation process is repeated until Property (8) is satisfied. 
    By the same reasonings in the proof of Claim \ref{clm:allocation:EFX0-:step2-2:case1}, all the involved agents become non-envied, and $\EFX^0_-$ and Properties (1)-(7) in Definition \ref{def:allocation:EFX0-:properties} are retained. 
    Therefore, the allocation process is repeated for polynomial times and gives a (partial) $\EFX^0_-$ orientation that satisfies Properties (1)-(8). 
\end{proof}

\end{document}